\documentclass{article}

\usepackage[utf8]{inputenc}
\date{}
\usepackage{microtype}
\usepackage{amsthm}
\usepackage{amssymb}
\usepackage{amsmath}
\usepackage{authblk}
\usepackage{graphicx}
\usepackage{tikz}
\usepackage{comment}

\newtheorem{theorem}{Theorem}
\newtheorem{proposition}{Proposition}

\newtheorem{definition}{Definition}

\newtheorem{lemma}[theorem]{Lemma}
\newtheorem{remark}{Remark}

\newtheorem{claim}{Claim}
\newtheorem{question}{Question}

\title{Decision DNNFs with imbalanced conjunction cannot efficiently represent CNFs of bounded width}
\author{Igor Razgon \\ Department of Computer Science, Durham University, igor.razgon@durham.ac.uk}
\begin{document}
\maketitle
\begin{abstract}
Decomposable Negation Normal Forms 
\textsc{dnnf} \cite{DarwicheJACM} is a landmark Knowledge Compilation (\textsc{kc})
model, highly important both in \textsc{ai} and Theoretical Computer Science. 
Numerous restrictions of the model have been studied. In this paper we consider the restriction
where all the gates are $\alpha$-imbalanced that is, at most one input of each 
gate depends on more than $n^{\alpha}$ variables (where $n$ is the number if variables of the function 
being represented). 

The concept of imbalanced gates has been first considered
in [Lai, Liu, Yin `New canonical representations by
augmenting OBDDs with conjunctive decomposition', JAIR, 2017].
We consider the idea in the context of representation of \textsc{cnf}s of bounded
primal treewidth. We pose an open question as to whether \textsc{cnf}s of bounded primal treewidth 
can be represented as \textsc{fpt}-sized \textsc{dnnf}
with $\alpha$-imbalanced gates. 

We answer the question negatively for Decision \textsc{dnnf} with 
$\alpha$-imbalanced conjunction gates. In particular, we establish a
lower bound of $n^{\Omega((1-\alpha) \cdot k)}$ for the representation 
size (where $k$ is the primal treewidth of the input \textsc{cnf}). 
The main engine for the above lower bound is a combinatorial
result that may be of an independent interest in the area of parameterized 
complexity as it introduces a novel concept of bidimensionality.
\end{abstract}
{\bf Keywords:}\\
Decision DNNF\\
Non-local width parameters of graphs\\
Lower bounds\\
Fixed-parameter tractability
\section{Introduction}
\subsection{Statement of the result and motivation}
Decomposable Negation Normal Forms 
\textsc{dnnf} \cite{DarwicheJACM} is a landmark Knowledge Compilation (\textsc{kc})
model, highly important both in \textsc{ai} and Theoretical Computer Science 
\cite{Korhonen21}. The model can be defined as a De Morgan circuit $B$ where the conjunction gates 
are \emph{decomposable}. To define the decomposability, let us agree 
that for a node $t \in V(B)$, $Var(t)$ is the set of variables $x$ such that
$B$ has a path to $t$ from a node labelled by $x$ or by $\neg x$. 
Suppose that $t$ is a conjunction gate with input gates $t_1$ and $t_2$. 
Then the decomposability means that $Var(t_1) \cap Var(t_2)=\emptyset$. 
In this paper, inspired by \cite{ANDOBDDuneven}, we introduce a further restriction 
on the conjunction gates making them \emph{imbalanced}. In particular, we fix 
an $0< \alpha<1$ and require that for each conjunction gate $t$,  there is 
at most one input $t'$ such that $Var(t')>n^{\alpha}$ where $n$ is the number of variables
of the whole \textsc{dnnf}. We call such a gate $\alpha$-\emph{imbalanced}. 
Thus if $t$ has many input variables then the 
vast majority of them are input variables of \emph{exactly one} input gate of $t$. 
\footnote{In the original definition in \cite{ANDOBDDuneven}, the definition of an imbalanced
gate does not depend on $n$. In particular, it is required that at most one 
input of $t$ has more than $r$ input variables where $r$ is a fixed number.}

It is well known \cite{DarwicheJACM} 
that \textsc{dnnf} admit an \textsc{fpt}-sized representation
for \textsc{cnf}s of bounded primal treewidth. We are interested
if the same is true if the conjunction gates are imbalanced. 
\begin{question} \label{quest:maindnnf}
Does a \textsc{cnf} $\varphi$ have a representation as a \textsc{dnnf} with 
all the conjunction gates being $\alpha$-imbalanced that is 
of size \textsc{fpt} in the number of variables  of $\varphi$ parameterized
by the primal treewidth of $\varphi$? 
\end{question} 

Question \ref{quest:maindnnf} is well motivated due to two reasons. 
First, to the best of our knowledge, all the known \textsc{dnnf} restrictions
e.g. \cite{SDD,DesDNNF} do retain the power of \textsc{fpt}-sized representation of 
\textsc{cnf}s of bounded primal treewidht. Another motivation arises if we view
\textsc{dnnf} as a non-deterministic read-once branching program ($1$-\textsc{nbp})
equipped with decomposable conjunction gates. 
In this context, the \textsc{dnnf} with imbalanced conjunction gates lies
in-between the $1$-\textsc{nbp}, where no conjunction gates are used, and
\textsc{dnnf} where fully fledged decomposable conjunction gates are used.
As the former is known to be non-\textsc{fpt} on \textsc{cnf}s of bounded treewidth \cite{RazgonAlgo}
and the latter is \textsc{fpt} \cite{DarwicheJACM} , it is interesting to find out what happens in-between. 

In this paper, we leave Question \ref{quest:maindnnf} open and instead 
resolve its restricted version for Decision \textsc{dnnf} \cite{DesDNNF} rather than 
the \textsc{dnnf} as a whole. 
\begin{theorem} \label{th:mainintro}
For each sufficiently large $k$, there is an infinite  class ${\bf \Phi}_k$
of \textsc{cnf}s of primal treewidth at most $k$ whose representation as a Decision
\textsc{dnnf} with $\alpha$-imbalanced gates has size $n^{\Omega((1-\alpha) \cdot \sqrt{k})}$. 
\end{theorem}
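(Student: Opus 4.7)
My plan follows the standard rectangle-cover paradigm for Decision \textsc{dnnf} lower bounds, with a new combinatorial ingredient tailored to the $\alpha$-imbalanced restriction.

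First, I would construct ${\bf \Phi}_k$ from \textsc{cnf}s whose primal graphs belong to a family $\mathcal{G}_k$ of bounded-degree graphs of treewidth at most $k$ exhibiting \emph{bidimensional rigidity}: every vertex subset $S$ with $|S| \le n^{\alpha}$ should leave a residue $G \setminus S$ that still contains, in many locations, a $\sqrt{k} \times \sqrt{k}$ grid-like substructure. A natural candidate is a family built from subdivided walls or disjoint packings of small grids, tuned so that the primal treewidth is exactly $k$ while $G$ resists ``small cuts''. The \textsc{cnf} $\varphi_G$ associated to $G \in \mathcal{G}_k$ would encode a standard hard local constraint system on $G$, following the template behind classical \textsc{cnf}-treewidth lower bounds (cf.\ \cite{RazgonAlgo}), so that distinct satisfying assignments project onto sufficiently rich combinatorial objects across any cut.

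Second, I would extract a combinatorial rectangle cover from the putative \textsc{dnnf}. Given an $\alpha$-imbalanced Decision \textsc{dnnf} $D$ of size $s$ computing $\varphi_G$, the classical partition-by-conjunction-gate construction yields a disjoint cover of $\varphi_G^{-1}(1)$ by at most $s$ combinatorial rectangles. The $\alpha$-imbalance of each conjunction gate lifts to a structural property of each rectangle $R$ in the cover: $R$ carries a hierarchical decomposition of its variable set in which every internal split has one side of size at most $n^{\alpha}$. Unrolling such a tree along its ``large side'' produces precisely the path-with-small-side-bags decomposition of the primal graph alluded to in the introduction.

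Third, I would prove the core combinatorial lemma: covering $\varphi_G^{-1}(1)$ by $\alpha$-imbalanced rectangles requires at least $n^{\Omega((1-\alpha)\sqrt{k})}$ of them. Bidimensionality of $G$ should force every $\alpha$-imbalanced decomposition of its variable set to contain some ``spine bag'' of size $\Omega((1-\alpha)\sqrt{k})$; a standard rank-type argument on $\varphi_G$ then bounds the number of satisfying assignments covered by any single rectangle by $|\varphi_G^{-1}(1)| \cdot n^{-\Omega((1-\alpha)\sqrt{k})}$, forcing the claimed bound on $s$.

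The main obstacle will be the third stage. Producing a single statement that (i) captures bidimensional rigidity of $G$ against vertex deletions of size $n^{\alpha}$, (ii) applies to \emph{arbitrary} imbalanced decomposition trees (not merely balanced or depth-bounded ones), and (iii) yields the clean $(1-\alpha)\sqrt{k}$ exponent rather than a weaker hybrid of $\sqrt{k}$ and $(1-\alpha)k$, is the heart of the matter. I expect the novel bidimensionality theorem advertised in the abstract to play exactly this role, replacing the more ad hoc pathwidth lower bounds used in prior \textsc{fbdd}-style arguments.
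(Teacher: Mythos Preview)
Your second step contains a genuine gap. In the standard partition-by-conjunction construction, each rectangle is attached to a \emph{single} $\wedge$-gate and carries a \emph{single} bipartition $(Y,Z)$ of the variables; the $\alpha$-imbalance tells you only that $\min(|Y|,|Z|)\le n^{\alpha}$. A rectangle $A\times B$ with one small side can still cover almost all of $\varphi_G^{-1}(1)$, so imbalance of one cut does not by itself yield the bound you need. Your sentence ``$R$ carries a hierarchical decomposition of its variable set'' conflates the rectangle with the proof tree below the gate; the hierarchy lives in the circuit, not in the rectangle, and once you start following that hierarchy along the heavy side you have left the rectangle-cover paradigm and are doing something else.

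The paper does that ``something else'' directly and never invokes rectangle covers. The hard instances are the monotone 2-\textsc{cnf}s $\varphi(T[h,k])$, where $T[h,k]=T[h]\,\square\,P_k$ is the Cartesian product of a complete ternary tree of height $h$ with a path on $k$ vertices (treewidth $\le 2k-1$); no walls or grid packings are used. For each satisfying assignment ${\bf g}$, the paper follows a \emph{mainstream path} from the root (at every $\wedge$-gate take the large child) and stops at a node $u({\bf g})$. The combinatorial engine (Lemma~\ref{lem:largeindep1}) is a statement about permutations, not about vertex deletions: every permutation of $V(T[h,k])$ has a prefix $\pi_0$ admitting a \emph{target triple} $(\pi_0,U_0,U_1)$ with $|U_0|=\Omega((1-\alpha)\sqrt{k}\log n)$, where each vertex of $U_0$ is adjacent both to $\pi_0$ and to $U_1$, and every connected component of $U_1$ in the suffix has size exceeding $n^{\alpha}$. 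The large-component condition is the non-local ingredient that prevents an $\alpha$-imbalanced gate from separating $U_0\cup U_1$; it is quite different from your ``$G\setminus S$ still contains a grid'' robustness. Finally, the counting is probabilistic rather than rank-based: a random satisfying assignment of $\varphi(T[h,k])$ (each edge independently selects an endpoint to set true) is carried through a fixed node with probability at most $\beta^{-|I|}$ for an independent set $I$ of size $\Omega((1-\alpha)\sqrt{k}\log n)$ forced positive by the target triple, and a union bound over nodes gives the $n^{\Omega((1-\alpha)\sqrt{k})}$ lower bound.
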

Theorem \ref{th:mainintro} can be seen as a deterministic version of Question \ref{quest:maindnnf}.
Indeed,  Decision \textsc{dnnf} can be seen as a \emph{deterministic} read-once branching program
$1$-\textsc{bp} equipped with decomposable conjunction gates \cite{BeameDNNF} while \textsc{dnnf} is the analogous upgrade
for $1$-\textsc{nbp}. 

\subsection{Overview of our approach}
Let us start from defining the set of hard instances that we use for proving 
Theorem \ref{th:mainintro}. 
For a graph $G$, let $\varphi(G)$ be a \textsc{cnf} over $V(G)$ with the clauses
$\{(u \vee v)|\{u,v\} \in E(G)\}$. 
For each sufficiently large $k$, we establish the lower bound of Theorem \ref{th:mainintro} 
for a class of \textsc{cnf}s $\varphi(G)$ 
where $G$ is a family of graphs of treewidth at most $k$ as defined below. 

Let $T[h]$ be a complete ternary tree of height $h$. We regard $T[h]$ as an undirected graph
with a specified root and hence the descendant relation. Recall that a path of $k$ vertices is denoted by $P_k$. 
In order to prove the lower bound we use the graphs $T[h] \square P_k$ for a sufficiently large $k$. 
The sign $\square$ denotes the Cartesian product of two graphs. In particular, $H_1 \square H_2$ is a graph on $V(H_1) \times V(H_2)$
such that $(u_1,u_2)$ is adjacent to $(v_1,v_2)$ is and only if $u_1=v_1$ and $\{u_2,v_2\} \in E(H_2)$ or if
$\{u_1,v_1\} \in E(H_1)$ and $u_2=v_2$. For the sake of brevity, we refer to $T[h] \times P_k$ as $T[h,k]$.
The graphs $T[1]$ and $T[1.2]$ are illustrated on Figure \ref{fig:treeproduct}. 

In order to prove Theorem \ref{th:mainintro}, we view Decision \textsc{dnnf} as a $1$-\textsc{bp}
equipped with decomposable conjunction gates. As $1$-\textsc{bp} is well known under the name
Free Binary Decision Diagaram (\textsc{fbdd}), throughout the paper, we refer to
Decision \textsc{dnnf} as $\wedge_d$-\textsc{fbdd} and as $\wedge_{d,\alpha}$-\textsc{fbdd} 
to the special case where all the conjunction gates are $\alpha$-imbalanced.

Our approach to proving lower bounds for $\wedge_{d,\alpha}$-\textsc{fbdd} 
is an upgrade of an approach for proving lower bounds for \textsc{fbdd} \cite{RazgonAlgo}, therefore
we consider the latter first. 
The main engine for proving an \textsc{fbdd} lower bound for $\varphi(G)$ such that
$G$ is of a constant max-degree is demonstrating that $G$ has a large 
linear maximal induced matching width (linear \textsc{mim} width) \cite{VaThesis,obddtcompsys}. 
In particular, linear \textsc{mim}-wdith at least $p$  means that any permutation
$\pi$ of $V(H)$ (viewed as  a linear order of $V(G)$) can be partitioned into a prefix $\pi_0$
and the corresponding suffix $\pi_1$ so that there is an induced matching $M$ of size at least $p$
one end of each edge $e \in M$ belongs to $\pi_0$, the other to $\pi_1$. 
This situation is schematically illustrated on the left-hand part of Figure \ref{fig:permstruct}, 
the vertices in $\pi_0$ coloured in red, the vertices in $\pi_1$ 
are coloured in green. 
It can be shown that the linear \textsc{mim} width of $T[h,k]$ is $\Omega(h \cdot k)$ or,
to put it differently, $\Omega(\log n \cdot k)$ where $n=|V(T[h,k])|$. 

In order to demonstrate how the above $\Omega(\log n \cdot k)$ bound yields an \textsc{xp}
lower bound for \textsc{fbdd} representing $T[h,k]$, we need to extend our terminology. 
Let $B$ be an \textsc{fbdd} and let $P$ be a directed path of $B$. 
Let $Var(P)$ be the set of variables labelling the nodes of $P$ but the last one. 
Let ${\bf a}(P)$ be the assignment of $P$ where each $x \in Var(P)$ is assigned by the label
of the edge of $P$ leaving the node labelled by $x$. 

Let us say that $P$ is a \emph{target path} if it starts at the source of $B$.
We say that an assignment ${\bf g}$ is \emph{carried} by $P$ if ${\bf a}(P) \subseteq {\bf g}$. 
We also say in this case  that ${\bf g}$ is \emph{carried through} $u(P)$ where $u(P)$ is the 
final node of $P$. 

Let ${\bf g}$ be a satisfying assignment of $\varphi(T[h,k])$.
By definition of \textsc{fbdd}, there is a target path $P$ ending at the $True$ sink of $P$
carrying ${\bf g}$. Using the $\Omega(\log n \cdot k)$ lower bound, 
we demonstrate that $P$ can be partitioned into a prefix $P_0$ and the corresponding suffix $P_1$
so that that there is an induced matching $M$ of $T[h,k]$ such that one end of each edge of the matching 
belongs to to $Var(P_0)$ and the other end to $Var(P_1)$. 
We refer  to $u(P_0)$ as $u({\bf g})$. 
We then define a probability space over the set of satisfying assignments of $\varphi(T[h,k])$ 
and demonstrate that the probability of carrying such an assignment through $u({\bf g})$
is $O(n^{-k})$. Hence, by the union bound the number of distinct $u({\bf g})$ nodes must
$\Omega(n^k)$. 

The picture of the left-hand part of Figure \ref{fig:permstruct} can illustrate the above reasoning
with the vertices of $V(M) \cap Var(P_0)$ coloured in red and the vertices in $V(M) \cap Var(P_1)$
coloured in green. We demonstrate that, per each edge connecting a red and a green vertex, we can 
pick one vertex and the set of all picked vertices are assigned positively for all 
the assignments carried through $u({\bf g})$. In other words, the assignments of $\Omega(\log n \cdot k)$
variables is fixed and this implies the probability $O(n^{-k})$ bound. 

The above argument does not work for $\wedge_d$-\textsc{fbdd}. 
The reason for that 
is that decomposable $\wedge$ nodes can split the  variables apart. 
In particular,it is possible to construct a $\wedge_d$-\textsc{fbdd}
representing $\varphi(T[h,k])$ so that every matching between a prefix
and the corresponding suffix of a source-sink path 
is of size $O(k)$ (and such a $\wedge_d$-\textsc{fbdd} will have an \textsc{fpt}
size parameterized by $k$). It  turns out, however that imbalanced gates
are incapable to implement this efficient splitting. To demonstrate this,
we need an upgraded version of the combinatorial statement as demonstrated 
on the right-hand part  of Figure \ref{fig:permstruct}.

In the upgraded version, an additional constraint is imposed on the suffix of the permutation. 
Each green vertex now needs to belong to a connected component of the  subgraph of $T[h,k]$
induced  by the suffix that has size larger than $n^{\alpha}$. 
This connected component serves as an `anchor' that does not let the green vertices
to be  split apart: the conjunction gate splitting any two such components cannot be 
$\alpha$-imbalanced. Then an upgraded version of the argument for \textsc{fbdd} can be
applied. 

The upgraded argument takes into account that in $\wedge_d$-\textsc{fbdd}
(even  with imbalanced gates), assignments are no longer associated with paths
but rather with trees whose internal nodes correspond to the conjunction nodes
of the $\wedge_d$-\textsc{fbdd}. Since the conjunction gates are imbalanced, we identify a 
\emph{mainstream} path by going from the source to the sink and choosing the 
`heavy' child for each cojunction gate (unless both children are light and then an arbitrary one 
can be chosen). 
Then effectively the same argument as for \textsc{fbdd} (with several more rather natural
tweaks) can be applied to the mainstream paths. 

In  the context of width parameters, the right-hand part of 
Figure \ref{fig:permstruct} illustrates a `non-local' width measure. 
The `local' part of this measure is the induced matching size 
between a prefix and the corresponding suffix of the considered 
permutation. The `non-local' part is the connectivity condition imposed
on the subgraph induced by the vertices of the suffix. 
This width measure might be of an independent interest as it may lead to a new concept of
bidimensionality \cite{bidimensionality}, we provide a more 
detailed discussion in the Conclusion section. 
The idea of non-local condition is conceptually related 
to hypertree width \cite{HTWpaper} which is a generalized hypertree width plus
a non-local condition called \emph{special condition}. This special condition
guarantees \textsc{xp}-time computability of the hypertree width \cite{HTWpaper}
as opposed to the \textsc{np}-hardness of the generalized version  \cite{2009gottlob}. 

\subsection{Structure of the paper}
In Section \ref{sec:prelim}, we introduce the necessary background.
The main result is stated and proved in Section \ref{sec:main}.
The proof uses several auxiliary statements whose proofs are, in turn, provided 
in Subsections \ref{sec:harness1}, \ref{sec:harness2}, and Section \ref{sec:maincomb}.
In particular, Section \ref{sec:maincomb} provides a proof for the main comnbinatorial engine
as illustrated on the right-hand part of Figure \ref{fig:permstruct}, 
and Subsections \ref{sec:harness1} and \ref{sec:harness2} provide proofs of statements 
needed to harness the engine into the proof of the main lower bound statement. 
Section \ref{sec:conclusion} discusses future research. 
The paper also includes an appendix where we prove a technical lemma stated 
in the Preliminaries section.

\begin{figure}[h]
\begin{tikzpicture}

\draw [fill=black]  (1,1) circle [radius=0.2];
\draw [fill=black]  (2,1) circle [radius=0.2];
\draw [fill=black]  (3,1) circle [radius=0.2];
\draw [fill=black]  (2,2) circle [radius=0.2];
\draw [ultra thick] (1,1) --(2,2); 
\draw [ultra thick] (2,1) --(2,2); 
\draw [ultra thick] (3,1) --(2,2); 

\draw [fill=black]  (5,1) circle [radius=0.2];
\draw [fill=black]  (5.5,1) circle [radius=0.2];
\draw [fill=black]  (6.5,1) circle [radius=0.2];
\draw [fill=black]  (7,1) circle [radius=0.2];
\draw [fill=black]  (8,1) circle [radius=0.2];
\draw [fill=black]  (8.5,1) circle [radius=0.2];
\draw [ultra thick] (5,1) --(5.5,1);
\draw [ultra thick] (6.5,1) --(7,1);
\draw [ultra thick] (8,1) --(8.5,1);
\draw [fill=black]  (6.5,2) circle [radius=0.2];
\draw [fill=black]  (7,2) circle [radius=0.2];
\draw [ultra thick] (6.5,2) --(7,2);
\draw [ultra thick] (5,1) --(6.5,2);
\draw [ultra thick] (5.5,1) --(7,2);
\draw [ultra thick] (6.5,1) --(6.5,2);
\draw [ultra thick] (7,1) --(7,2);
\draw [ultra thick] (8,1) --(6.5,2);
\draw [ultra thick] (8.5,1) --(7,2);
\end{tikzpicture}
\caption{$T[1]$ and $T[1,2]$}
\label{fig:treeproduct}
\end{figure}
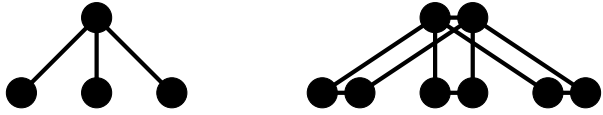

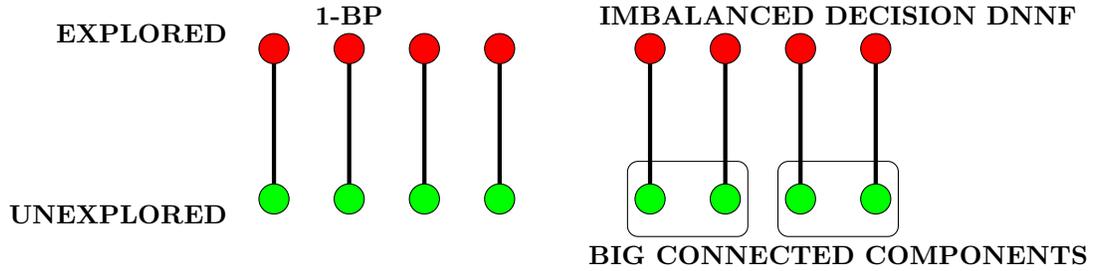
\begin{figure}[h]
\begin{tikzpicture}
\draw [fill=red]  (1,4) circle [radius=0.2];
\draw [fill=red]  (2,4) circle [radius=0.2];
\draw [fill=red]  (3,4) circle [radius=0.2];
\draw [fill=red]  (4,4) circle [radius=0.2];

\draw [fill=green]  (1,2) circle [radius=0.2];
\draw [fill=green]  (2,2) circle [radius=0.2];
\draw [fill=green]  (3,2) circle [radius=0.2];
\draw [fill=green]  (4,2) circle [radius=0.2];

\draw [fill=red]  (6,4) circle [radius=0.2];
\draw [fill=red]  (7,4) circle [radius=0.2];
\draw [fill=red]  (8,4) circle [radius=0.2];
\draw [fill=red]  (9,4) circle [radius=0.2];

\draw [fill=green]  (6,2) circle [radius=0.2];
\draw [fill=green]  (7,2) circle [radius=0.2];
\draw [fill=green]  (8,2) circle [radius=0.2];
\draw [fill=green]  (9,2) circle [radius=0.2];

\draw [ultra thick] (1,3.8) --(1,2.2); 
\draw [ultra thick] (2,3.8) --(2,2.2); 
\draw [ultra thick] (3,3.8) --(3,2.2); 
\draw [ultra thick] (4,3.8) --(4,2.2); 

\draw [ultra thick] (6,3.8) --(6,2.2); 
\draw [ultra thick] (7,3.8) --(7,2.2); 
\draw [ultra thick] (8,3.8) --(8,2.2); 
\draw [ultra thick] (9,3.8) --(9,2.2); 

\draw [rounded corners] (5.7,2.5)  rectangle (7.3,1.5);
\draw [rounded corners] (7.7,2.5)  rectangle (9.3,1.5);

\node [left]  at (0.5,4.2)  {\bf EXPLORED};
\node [left]  at (0.5,1.8)  {\bf UNEXPLORED};
\node [above]  at (2,4.2)  {\bf 1-BP};
\node [above]  at (8.5,4.2)  {\bf IMBALANCED DECISION DNNF};
\node [below]  at (8.5,1.5)  {\bf BIG CONNECTED COMPONENTS};
\end{tikzpicture}
\caption{Illustration of structural properties of permutations of $V(G)$  for branching programs 
(left) and for imbalanced Decision DNNFs (right)}
\label{fig:permstruct}
\end{figure}

\section{Preliminaries} \label{sec:prelim}
\subsection{Assignments, their sets, and rectangles}
Let $X$ be a finite set of variables.
We call a function ${\bf a}: X \rightarrow \{0,1\}$ an 
assignment (over $X$). We also refer to $X$ as $Var({\bf a})$. 
(In the rest of the paper, we often use the notation $Var(\mathcal{O})$
 denoting the set of variables of the object $\mathcal{O}$.)
The \emph{projection} of ${\bf a}$ to a set $Y$ of variables,
denoted by $Proj({\bf a},Y)$ is an assignment over $X \cap Y$ that
maps each $x \in X \cap Y$ to ${\bf a}(x)$. 
Let ${\bf b}$ be an assignment such that $Proj({\bf a},Var({\bf a}) \cap Var({\bf b}))=
Proj({\bf b}, Var({\bf a}) \cap Var({\bf b}))$. 
Then the assignments ${\bf a} \cup {\bf b}$, ${\bf a} \cap {\bf b}$,
${\bf a} \setminus {\bf b}$ and ${\bf b} \setminus {\bf a}$ are naturally
defined as the  corresponding operations on functions. 

Throughout this paper, when we consider a set $\mathcal{H}$ of assignments,
they are over the same set of variables. Therefore, if $\mathcal{H} \neq \emptyset$,
we define $Var(\mathcal{H})$ as $Var({\bf g})$ for an arbitrary ${\bf g} \in \mathcal{H}$
(if $\mathcal{H}=\emptyset$, let us set $Var(\mathcal{H})=\emptyset$). 
Let $Y$ be a set of variables. The \emph{projection} of $\mathcal{H}$ to $Y$,
denoted by $Proj(\mathcal{H},Y)$ is $\{Proj({\bf g},Y)| {\bf g} \in \mathcal{H}\}$. 
Let ${\bf a}$ be an assignment and 
let $\mathcal{H}|_{\bf a}=\{{\bf b} \setminus {\bf a}| {\bf b} \in \mathcal{H}, 
Proj({\bf a},Var(\mathcal{H})) \subseteq {\bf b}\}$. 
We refer to $\mathcal{H}|_{\bf a}$ as the \emph{restriction} of $\mathcal{H}$
by ${\bf a}$. 

For example, if 
$\mathcal{H}=\{
\{(x_1,0),(x_2,0),(x_3,1)\},
\{(x_1,1),(x_2,1),(x_3,0)\}
\}$ and ${\bf a}=\{(x_1,1)\}$
then $\mathcal{H}|_{\bf a}=\{\{(x_2,1),(x_3,0)\}\}$. 
Note that we could have used ${\bf a}=\{(x_1,1),(x_4,0)\}$
with exactly the same effect. 

\begin{remark} \label{rem:nonsubset}
Note that, in the definitions of a projection, we allowed
that the set of variables an assignment or a set of assignments is projected 
on is not necessarily a subset of the set of variables of that assignment or a set of assignments. 
Likewise, in the definition of a restriction, we allowed that the set
of variables of the restricting assignment is not necessarily a subset of the
set of variables of the set of assignments being restricted. 
This extra generality will help us to streamline our reasoning about
$\wedge_d$-\textsc{fbdd}s in that we will not need to consider unnecessary lateral
cases. 
\end{remark}

We let $\{0,1\}^X$ be the set of all assignments over a set $X$. 
We refer to $f: \{0,1\}^X \rightarrow \{0,1\}$ as a \emph{Boolean
function} over $X$ (that is, $Var(f)=X$). 
An assignment ${\bf g} \in \{0,1\}^X$ is a \emph{satisfying assignment} of $f$
if $f({\bf g})=1$. 
We denote by $\mathcal{S}(f)$ the set of all the satisfying assignments of $f$. 
Note that a Boolean function over the given set of variables is uniquely determined by its
set of satisfying assignments. 
Let ${\bf a}$ be an assignment. We denote by $f|_{\bf a}$ the Boolean function
over $Var(f) \setminus Var({\bf a})$ such  that
$\mathcal{S}(f_{\bf a})=\mathcal{S}(f)|_{\bf a}$. 

Let $\mathcal{H}_1$ and $\mathcal{H}_2$ 
be sets of assignments such that $Var(\mathcal{H}_1) \cap Var(\mathcal{H}_2)=\emptyset$. 
We define $\mathcal{H}_1 \times \mathcal{H}_2=\{{\bf a} \cup {\bf b}| 
{\bf a} \in \mathcal{H}_1, {\bf b} \in \mathcal{H}_2\}$. Note a slight abuse 
of notation in the use of the Cartesian product symbol for what is, strinctly 
speaking, not a Cartesian product. 
We call the set $\mathcal{H}$ such that there are $\mathcal{H}_1$ and $\mathcal{H}_2$
as above such that $\mathcal{H}=\mathcal{H}_1 \times \mathcal{H}_2$ a \emph{rectangle}. 

Note that if $\mathcal{H}_2=\emptyset$ then $\mathcal{H}_1 \times \mathcal{H}_2=\emptyset$.
However, $\mathcal{H}_2=\{\emptyset\}$ then $\mathcal{H}_1 \times \mathcal{H}_2=\mathcal{H}_1$
Thus, every set of assignments can be viewed as a rectangle. We are interested however 
in \emph{non-trivial rectangles} $\mathcal{H}_1 \times \mathcal{H}_2$ where 
$Var(\mathcal{H}_i) \neq \emptyset$ for each $i \in \{1,2\}$. In this paper, we use 
rectangles in the context of them breaking or not breaking sets of variables, the corresponding
notion is defined below. 

\begin{definition} \label{def:varbreak}
Let $\mathcal{H}$ be a set of assignments and let $Y$ be a set of 
variables. We say that $\mathcal{H}$ \emph{breaks} $Y$ if 
$\mathcal{H}=\mathcal{H}_1 \times \mathcal{H}_2$ so that both
$Var(\mathcal{H}_1) \cap Y$ and $Var(\mathcal{H}_2) \cap Y$ are non-empty. 
\end{definition} 
For example if $\mathcal{H}=\mathcal{H}_1 \times \mathcal{H}_2$ 
and $Var(\mathcal{H}_1)=\{x_1,x_2\}$ and
$Var(\mathcal{H}_2)=\{x_3,x_4\}$ then $\mathcal{H}$ breaks $\{x_2.x_3\}$
but does not break $\{x_1,x_2\}$. 

For sets $\mathcal{H}_1, \dots \mathcal{H}_q$ of assignments over pairwise 
disjoint sets of variables and $q>2$, the operation
$\prod_{i=1}^q \mathcal{H}_i$ is defined as $(\prod_{i=1}^{q-1}) \times \mathcal{H}_q$.
Clearly, the order of the sets in the product does not matter. 
\begin{proposition}
Let $\mathcal{H}=\prod_{i=1}^q \mathcal{H}_i$ for $q \geq 2$. 
Let $Y \subseteq Var(\mathcal{H})$ such that $\mathcal{H}$
does not break $Y$. Then there is $i \in [q]$ such that
$Y \subseteq Var(\mathcal{H}_i)$.
\end{proposition}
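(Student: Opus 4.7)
The plan is to argue by contrapositive. I will assume that $Y$ is not contained in any single $Var(\mathcal{H}_i)$ and derive that $\mathcal{H}$ must then break $Y$, contradicting the hypothesis.

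First I would unpack the assumption. If $Y \not\subseteq Var(\mathcal{H}_i)$ for every $i \in [q]$, then for each $i$ there is some element of $Y$ lying outside $Var(\mathcal{H}_i)$. Since $Y \subseteq Var(\mathcal{H}) = \bigcup_{i=1}^q Var(\mathcal{H}_i)$ and the sets $Var(\mathcal{H}_i)$ are pairwise disjoint, each variable of $Y$ lies in exactly one $Var(\mathcal{H}_i)$. Picking any $i$ for which $Var(\mathcal{H}_i) \cap Y \neq \emptyset$, the assumption $Y \not\subseteq Var(\mathcal{H}_i)$ forces the existence of a second index $j \neq i$ with $Var(\mathcal{H}_j) \cap Y \neq \emptyset$.

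Next I would use the associativity/commutativity of the $\prod$ operation, already noted in the excerpt, to regroup the product as $\mathcal{H} = \mathcal{H}_i \times \prod_{k \neq i} \mathcal{H}_k$. Setting $\mathcal{H}'_1 := \mathcal{H}_i$ and $\mathcal{H}'_2 := \prod_{k \neq i} \mathcal{H}_k$, this is a valid rectangle decomposition because $Var(\mathcal{H}'_1)$ and $Var(\mathcal{H}'_2) = \bigcup_{k \neq i} Var(\mathcal{H}_k)$ are disjoint by the pairwise disjointness assumption on the variable sets of the factors. Now $Var(\mathcal{H}'_1) \cap Y = Var(\mathcal{H}_i) \cap Y \neq \emptyset$ by choice of $i$, and $Var(\mathcal{H}'_2) \cap Y \supseteq Var(\mathcal{H}_j) \cap Y \neq \emptyset$ by choice of $j$. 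By Definition \ref{def:varbreak}, this means $\mathcal{H}$ breaks $Y$, contradicting the hypothesis.

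There is no real obstacle here; the statement is essentially a bookkeeping consequence of the associativity of $\prod$ together with the disjointness of the $Var(\mathcal{H}_i)$. The only point worth stating carefully is that the regrouping $\mathcal{H}_i \times \prod_{k \neq i} \mathcal{H}_k$ is indeed equal to $\mathcal{H}$, which the paper already guarantees by the remark that the order of factors in the product does not matter. A formal induction on $q$ starting from the base $q=2$ (which is tautological) would make this fully rigorous, but in the two-to-four-paragraph write-up the appeal to the stated associativity should suffice.
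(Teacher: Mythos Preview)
Your proposal is correct and follows essentially the same route as the paper's proof: assume no single $Var(\mathcal{H}_i)$ contains $Y$, locate two distinct indices whose variable sets meet $Y$, and regroup the product into two factors separating those indices to exhibit a break. The only cosmetic difference is that the paper takes an arbitrary bipartition $I_1,I_2$ of $[q]$ with one witness index in each part, whereas you use the specific bipartition $\{i\}$ versus $[q]\setminus\{i\}$; both work for the same reason.
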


\begin{proof}
Indeed, assume that such an $i$ does not exist. 
Then we can identify $i_1 \neq i_2 \in [q]$ such that
both $Var(\mathcal{H}_{i_1}) \cap Y$ and 
$Var(\mathcal{H}_{i_2}) \cap Y$ are non-empty. 
Let $I_1,I_2$ be an arbitrary partition of $[q]$ such that
$i_1 \in I_1$ and $i_2 \in I_2$. 
Then $\mathcal{H}=\prod_{i \in I_1} \mathcal{H}_i \times \prod_{i \in I_2} \mathcal{H}_2$
breaks $Y$, a contradiction.
\end{proof}

\subsection{Graphs, treewidth, Cartesian products, permutations}
We use the standard terminology for graphs as in e.g. \cite{Diestel3}.
In particular, for a graph $G$ and $U \subseteq V(G)$, $G[U]$ denotes the subgraph
of $G$ \emph{induced} by $U$. We say that $U$ is a \emph{connected set} meaning that $G[U]$
is connected. $U$ being an \emph{independent set} means that $G[U]$ has no edges. 

\begin{definition}
Let $H_1$ and $H_2$ be two graphs. 
The \emph{Cartesian product} $H_1 \square H_2$ of
$H_1$ and $H_2$ is a graph with the set of vertices being
$V(H_1) \times V(H_2)$ and the set of edges
$\{\{(u,v_1),(u,v_2)\}|u \in V(H_1), \{v_1,v_2\} \in E(H_2)\} \cup
\{\{(u_1,v),(u_2,v)\}|\{u_1,u_2\} \in E(H_1), v \in V(H_2) \}$.
\end{definition}

\begin{definition}
We denote by $T[h]$ the complete rooted ternary tree of height $h$
as recursively defined below. 
$T[0]$ has a singleton set of vertices and no edges,
$root(T[0])$ is the only vertex of $T[0]$. 
Suppose that $h>0$ and let $T_1,T_2,T_3$ be three disjoint copies 
of $T[h-1]$. Then $T[h]$ is obtained by introducing a new vertex
$u$ and making it adjacent to $root(T_1),root(T_2),root(T_3)$.
We set $root(T[h])=u$. 

We denote the leaves and the internal vertices of $T[h]$ 
by $Leaves(T[h])$ and $Inter(T[h])$ respectively

The introduction of a root naturally introduces the descendancy 
relation. In particular, for $t \in Inter(T[h])$, 
$Children(t)$ denotes the set of children of $t$. 
\end{definition}

We denote by $P_k$ a path of $k$ vertices.
The graph $T[h] \square P_k$ plays an important role
for our reasoning. For the sake of brevity we denote
the graph as $T[h,k]$. 

We do not define the notions of tree decomposition and treewidth
since we only need them to establish the following (easy to prove)
statement. 

\begin{proposition} \label{prop:twup}
The treewidth of $T[h,k]$ is at most $2k-1$.
\end{proposition}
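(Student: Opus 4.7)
The plan is to exhibit an explicit tree decomposition of $T[h,k] = T[h] \square P_k$ whose maximum bag size is at most $2k$, and then conclude that the treewidth is at most $2k-1$. The underlying tree of the decomposition will be $T[h]$ itself, viewed as a rooted tree with the root specified in the definition of $T[h]$.

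Concretely, for each $t \in V(T[h])$ I would define the bag
\[
  X_t \;=\; \bigl(\{t\} \times V(P_k)\bigr) \;\cup\; \bigl(\{\mathrm{parent}(t)\} \times V(P_k)\bigr),
\]
with the convention that the second set is omitted when $t = \mathrm{root}(T[h])$. Thus $|X_t| \leq 2k$. The tree structure on the decomposition is inherited from $T[h]$: $X_t$ and $X_{t'}$ are adjacent if and only if one of $t, t'$ is the parent of the other in $T[h]$.

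Next I would verify the three tree-decomposition axioms. For vertex coverage, an arbitrary vertex $(t,i)$ lies in $X_t$ by construction. For edge coverage, every edge of $T[h,k]$ is of one of the two types arising in the definition of $\square$: an edge $\{(t,i),(t,j)\}$ with $\{i,j\} \in E(P_k)$ lies entirely in $X_t$; an edge $\{(t,i),(t',i)\}$ with $\{t,t'\} \in E(T[h])$ satisfies, without loss of generality, $t' = \mathrm{parent}(t)$, and both endpoints lie in $X_t$. For the running intersection property, the bags containing a fixed vertex $(t,i)$ are precisely $X_t$ and the bags $X_{t''}$ for each child $t''$ of $t$; in the tree $T[h]$ these correspond to the star at $t$, which is a connected subtree, as required.

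Concluding that the decomposition has width $\max_t |X_t| - 1 \leq 2k - 1$ then gives the claim. The main obstacle is essentially clerical: it is checking the running intersection property, but once one notices that the set of nodes $t'$ whose bag contains $(t,i)$ is exactly $\{t\} \cup \mathrm{Children}(t)$, the verification is immediate. There is no real combinatorial difficulty here, which is why the author flags the statement as easy to prove.
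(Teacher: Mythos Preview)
Your argument is correct and is the standard construction: using $T[h]$ as the underlying tree with bags $X_t = \{t,\mathrm{parent}(t)\}\times V(P_k)$ gives width at most $2k-1$, and your verification of the three axioms is clean. Note that the paper does not actually prove this proposition---it only flags it as ``easy to prove'' and omits the argument---so there is nothing to compare against; what you have written is exactly the natural proof one would supply.
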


Next, we define an important notation that
we use for directed acylic graphs (\textsc{dag}s) 
and rooted trees. 
\begin{definition}
Let $D$ be a \textsc{dag} and let $u \in V(D)$.
Then $D_u$ is the subgraph of $D$ induced by all the vertices reachable
from $u$ (obviously, including $u$ itself), the labels on vertices
and edges (if any) are preserved as in $D$. 
Similarly, if $T$ is a rooted tree and $t \in V(T)$
then $T_t$ is the subgraph of $T$ induced by all the descendants of $t$
(including $t$ itself). Clearly, $root(T_t)=t$. 
We also remark that we sometimes refer to vertices of a \textsc{dag} 
as \emph{nodes}. 
\end{definition}

For our reasoning we extensively use permutations 
regarding them as linear orders. For example,
$\pi=(1,4,2,3,5)$ is a permutation of $[5]$. 
Such a setting enables us to refer to suffixes and prefixes 
of of permutations. For example 
$(1,4,2)$ is a prefix of $\pi$ and 
$(3,5)$ is the corresponding suffix. 
We naturally define a permutation \emph{induced}
by a subset of its elements using the 
same notation as for induced subgraphs. For example, $\pi[\{3,4,5\}]=(4,3,5)$. 
We may also identify linear orders with their underlying sets 
writing e.g. $4 \in (1,4,2)$. The correct use will always be clear from 
the context. 

\subsection{CNFs representing graphs}
We assume that the reader is familiar with the notion
of Conjunctive Normal Forms (\textsc{cnf}s). 
We identify a \textsc{cnf} $\varphi$ with the set of its 
clauses. In this paper we will consider only \textsc{cnf}s
representing graphs as defined below. 

\begin{definition}
Let $G$ be a graph without isolated vertices. 
Then $\varphi(G)$ is the \textsc{cnf} with the 
set of variables $Var(\varphi(G))$ being $V(G)$
and the clauses $\{(u \vee v)| \{u,v\} \in E(G)\}$. 
\end{definition}

We will identify $\varphi=\varphi(G)$ with the function
it represents so we will use $\mathcal{S}(\varphi)$ to denote
the set of satisfying assignments of $\varphi$. 
Let ${\bf g}$ be an assignment over a subset of $V(G)$. 
Then ${\bf g}$ \emph{falsifies} $(u \vee v)$ if 
$u,v \in Var({\bf g})$ and ${\bf g}(u)={\bf g}(v)=0$. 
It is not hard to see that ${\bf g}$ is a satisfying 
assignment of $\varphi$ if and only if $Var({\bf g})=V(G)$
and ${\bf g}$ does not falsify any clause of $\varphi$.
We say that ${\bf g}$ \emph{fixes} $v \in V(G)$ 
if there is $u \in Var(\varphi)$ such that ${\bf g}(u)=0$
and $(u \vee v)$ is a clause of $\varphi$. Further on, ${\bf g}$
fixes $U \subseteq V(G)$ if it fixes at least one element of $U$ or, to put
it differently, $U$ is unfixed by ${\bf g}$ if no element of $U$
is fixed by ${\bf g}$. 
Clearly any satisfying assignment of $\varphi$ that is a superset
of ${\bf g}$ must assign positively all the variables fixed by ${\bf g}$.

The following two statements are important for our further reasoning.  

\begin{lemma} \label{lem:nobreak0}
Let ${\bf g}$ be an assignment to a subset of $V(G)$ 
that does not falsify any clauses. 
Let $u \in V(G) \setminus Var({\bf g})$ be a variable 
unfixed by ${\bf g}$. 
Let ${\bf g}_u$ be an extension of ${\bf g}$
that assigns $u$ negatively and assigns positively 
the rest of the variables of $Var(G) \setminus (Var({\bf g}) \cup \{u\})$
Then ${\bf g}_u$ is a satisfying assignment of $\varphi(G)$. 
\end{lemma}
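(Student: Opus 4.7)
The plan is to check that $\mathbf{g}_u$ satisfies every clause of $\varphi(G)$ by inspecting, for each edge $\{v_1,v_2\} \in E(G)$, which side of the partition of $V(G)$ into $Var(\mathbf{g})$, $\{u\}$ and $V(G)\setminus(Var(\mathbf{g}) \cup \{u\})$ each endpoint falls into. Note that $Var(\mathbf{g}_u)=V(G)$ by construction, so being a satisfying assignment reduces to not falsifying any clause, and a clause $(v_1 \vee v_2)$ is falsified by $\mathbf{g}_u$ precisely when $\mathbf{g}_u(v_1)=\mathbf{g}_u(v_2)=0$. Since $\mathbf{g}_u$ assigns $1$ to every variable outside $Var(\mathbf{g}) \cup \{u\}$, any falsified clause must have both endpoints lying in $Var(\mathbf{g}) \cup \{u\}$.

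First I would dispose of the case where both endpoints are in $Var(\mathbf{g})$: this would mean $\mathbf{g}$ itself falsifies $(v_1 \vee v_2)$, contradicting the hypothesis on $\mathbf{g}$. Next I would consider the two remaining subcases where $u$ is one of the endpoints, say $v_1=u$. If $v_2 \notin Var(\mathbf{g})$, then $v_2 \in V(G)\setminus(Var(\mathbf{g})\cup\{u\})$ so $\mathbf{g}_u(v_2)=1$ and the clause is satisfied. If $v_2 \in Var(\mathbf{g})$, then $\{u,v_2\}$ being an edge with $\mathbf{g}(v_2)=0$ would mean $\mathbf{g}$ fixes $u$ via the clause $(u \vee v_2)$; but $u$ is unfixed by $\mathbf{g}$, so we must have $\mathbf{g}(v_2)=1$, and the clause is again satisfied.

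Putting the cases together, no clause is falsified by $\mathbf{g}_u$, and since $Var(\mathbf{g}_u)=V(G)$, the assignment $\mathbf{g}_u$ is a satisfying assignment of $\varphi(G)$, as recorded in the remark immediately preceding the lemma.

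I do not anticipate any real obstacle here; the only mildly delicate point is remembering that the definition of ``fixed'' is exactly what is needed to rule out the problematic case in which $u$ itself is paired by an edge with a variable that $\mathbf{g}$ already assigns to $0$. All other cases are immediate from the disjointness structure of $Var(\mathbf{g})$, $\{u\}$, and the remaining variables.
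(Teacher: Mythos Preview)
Your proof is correct and follows essentially the same case analysis as the paper's own proof, partitioning clause endpoints according to whether they lie in $Var(\mathbf{g})$, equal $u$, or lie outside $Var(\mathbf{g})\cup\{u\}$, and invoking the definition of ``unfixed'' in the one nontrivial subcase. The only cosmetic difference is that you first observe both endpoints of a falsified clause must lie in $Var(\mathbf{g})\cup\{u\}$, whereas the paper runs through all three location patterns directly.
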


\begin{proof}
Let $(u_1 \vee u_2)$ be a clause of $\varphi(G)$. 
By the assumption about ${\bf g}$, $(u_1 \vee u_2)$ is satisfied 
if $\{u_1,u_2\} \subseteq Var({\bf g})$. 
If $\{u_1,u_2\} \subseteq V(G) \setminus Var({\bf g})$
then only one of $u_1,u_2$ can be assigned negatively (if it equals $u$)
hence the clause is again satisfied. 
Finally, assume that, say $u_1 \in Var({\bf g})$ whereas 
$u_2 \notin Var({\bf g})$. If $u_2=u$ then $u_1$ must be assigned positively
due to $u$ being not fixed. Otherwise, $u_2$ is assigned positively. 
In both cases $(u_1 \vee u_2)$ is satisfied. 
\end{proof}

\begin{lemma} \label{lem:nobreak1}
Assume ${\bf g}$ does not falsify any clause 
of $\varphi(G)$ and that $U$ is a connected set of $G$ unfixed by ${\bf g}$. 
Then $\mathcal{S}(\varphi)|_{\bf g}$ does not break $U$. 
\end{lemma}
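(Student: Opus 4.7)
The plan is to argue by contradiction. Suppose $\mathcal{S}(\varphi)|_{\bf g}$ \emph{does} break $U$, so we may write $\mathcal{S}(\varphi)|_{\bf g}=\mathcal{H}_1\times\mathcal{H}_2$ with both $Var(\mathcal{H}_1)\cap U$ and $Var(\mathcal{H}_2)\cap U$ non-empty. The goal is to build a satisfying assignment of $\varphi(G)$ that falsifies an edge inside $G[U]$, which will be the desired contradiction.

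First I would locate an edge of $G[U]$ whose two endpoints sit on opposite sides of the rectangle. Every vertex of $U$ lies in $V(G)\setminus Var({\bf g})$: any $v\in U\cap Var({\bf g})$ with ${\bf g}(v)=0$ would fix each of its $G$-neighbours, and some such neighbour would lie in $U$ as soon as $|U|\geq 2$, contradicting the hypothesis that $U$ is unfixed (the case $|U|=1$ is vacuous, since a single-vertex set cannot be broken by any rectangle). Consequently $U\subseteq Var(\mathcal{H}_1)\cup Var(\mathcal{H}_2)$ and each $v\in U$ lies on exactly one side. Pick $u_1\in Var(\mathcal{H}_1)\cap U$ and $u_2\in Var(\mathcal{H}_2)\cap U$; since $G[U]$ is connected, take a path $w_0=u_1,w_1,\ldots,w_l=u_2$ inside $G[U]$. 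The side-label must flip somewhere along this path, yielding an index $i$ with $w_i\in Var(\mathcal{H}_1)$ and $w_{i+1}\in Var(\mathcal{H}_2)$. In particular $(w_i\vee w_{i+1})$ is a clause of $\varphi(G)$.

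The core of the argument is then a mixing step. Both $w_i$ and $w_{i+1}$ lie in $U$, hence are unfixed by ${\bf g}$, so two applications of Lemma \ref{lem:nobreak0} produce satisfying assignments ${\bf g}_1,{\bf g}_2$ of $\varphi(G)$ extending ${\bf g}$ with ${\bf g}_1(w_i)=0$ and ${\bf g}_2(w_{i+1})=0$. The restrictions ${\bf h}_j:={\bf g}_j\setminus{\bf g}$ lie in $\mathcal{H}_1\times\mathcal{H}_2$, and so split uniquely as ${\bf h}_j={\bf h}_j^{(1)}\cup{\bf h}_j^{(2)}$ with ${\bf h}_j^{(s)}\in\mathcal{H}_s$. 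Combining the ``wrong halves'' gives ${\bf h}^{\star}:={\bf h}_1^{(1)}\cup{\bf h}_2^{(2)}\in\mathcal{H}_1\times\mathcal{H}_2=\mathcal{S}(\varphi)|_{\bf g}$, so ${\bf g}\cup{\bf h}^{\star}$ is a satisfying assignment of $\varphi(G)$; but by construction it sends both $w_i$ and $w_{i+1}$ to $0$, falsifying $(w_i\vee w_{i+1})$.

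The main obstacle is really the first step: guaranteeing that the switching pair on the path is genuinely an edge of $G$ with endpoints on distinct sides of the rectangle. This rests on the observation that all of $U$ lies outside $Var({\bf g})$, so that every vertex of the chosen path is placed on exactly one side. Once that is in place, the rectangle-mixing step is a routine consequence of the definition of a rectangle together with Lemma \ref{lem:nobreak0}.
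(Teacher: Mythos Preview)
Your approach mirrors the paper's: assume a rectangle decomposition breaking $U$, find an edge of $G[U]$ whose endpoints lie on opposite sides, and use Lemma~\ref{lem:nobreak0} to produce assignments on each side whose combination falsifies that edge's clause. The mixing step you describe is exactly what the paper does (with ${\bf g}_{u_i}$ in place of your ${\bf g}_j$).

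The gap is in your claim that every vertex of $U$ lies in $V(G)\setminus Var({\bf g})$. Your justification only treats $v\in U\cap Var({\bf g})$ with ${\bf g}(v)=0$; the case ${\bf g}(v)=1$ is not excluded by the hypotheses, and in fact cannot be. Take $G$ to be the three-vertex path with vertices $a,b,c$ in that order, set ${\bf g}=\{(b,1)\}$ and $U=\{a,b,c\}$: then ${\bf g}$ falsifies no clause, $U$ is connected and unfixed, yet $\mathcal{S}(\varphi)|_{\bf g}=\{0,1\}^{\{a\}}\times\{0,1\}^{\{c\}}$ breaks $U$. So the lemma as literally stated is false; what is really needed (and what holds in every application the paper makes of it) is the extra hypothesis $U\cap Var({\bf g})=\emptyset$. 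The paper's own proof silently assumes this when it jumps from ``$U$ is connected'' to the existence of an edge between $U\cap Var(\mathcal{S}_1)$ and $U\cap Var(\mathcal{S}_2)$. Under that additional hypothesis your argument and the paper's coincide and are correct.
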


\begin{proof}
Assume that the statement of the lemma is not true. 
Then $\mathcal{S}(\varphi(G))|_{\bf g}=\mathcal{S}_1 \times \mathcal{S}_2$
is a rectangle such that for each $i \in [2]$, $U_i=U \cap Var(\mathcal{S}_i) \neq \emptyset$. 
Since $U$ is connected, there is an edge $\{u_1,u_2\}$ of $G$ such that $u_1 \in U_1$
and $u_2 \in U_2$ 
Let ${\bf g^*}_i=Proj({\bf g}_{u_i} \setminus {\bf g},U_i)$ for each $i \in [2]$. 
It follows from Lemma \ref{lem:nobreak0} that ${\bf g^*}_1 \in \mathcal{S}_1$
and ${\bf g^*}_2 \in \mathcal{S}_2$. 
Therefore, ${\bf g^*}_1 \cup {\bf g^*}_2 \in \mathcal{S}_1 \times \mathcal{S}_2$. 
Hence, ${\bf g} \cup {\bf g^*}_1 \cup {\bf g^*}_2$ is a satisfying assignment of $\varphi(G)$. 
However, this is a contradiction since ${\bf g} \cup {\bf g^*}_1 \cup {\bf g^*}_2$
falsifies $(u_1 \vee u_2)$.
\end{proof}

\subsection{Decision DNNFs (a.k.a.) $\wedge_d$-\textsc{fbdd}s}
It is well known that Decision \textsc{dnnf}s can be equivalently defined
as Free Binary Decision Diagrams (\textsc{fbdd}s) equipped 
with decomposable conjunction nodes \cite{BeameDNNF}. We refer to such that nodes 
as $\wedge_d$ nodes and to the resulting model as $\wedge_d$-\textsc{fbdd}. 
This is the way we are going to define Decision \textsc{dnnf}s. 

\begin{definition} \label{def:desdnnfsynt}
A $\wedge_d$-\textsc{fbdd} is a \textsc{dag} $B$ with one source and
at most two sinks. Each non-sink node has exactly two children. 
The labelling of nodes and edges as defined below.

If there are two sinks then one of them is labelled
with $True$, the other with $False$. If there is one sink, it can be labelled
either as $True$ or $False$. 

Each non-sink node is either labelled with a variable or with
$\wedge$, we thus refer to the former as a \emph{variable node} and to the latter 
as a \emph{conjunction node}.  We denote by $Var(B)$ the set of variables labelling the 
vertices of $B$, the $Var$ notation naturally extends to $B_u$ for $u \in  V(B)$. 
The labelling of nodes must observe the constraints of \emph{read-onceness}
and \emph{decomposability}. The read-onceness means that for in any directed
path there are no two distinct nodes labelled with the same variable.
The decomposability means that for any conjunction node $u$ with children $u_1$ and $u_2$,
$Var(B_{u_1}) \cap Var(B_{u_2})=\emptyset$. We thus refer to the conjunction nodes as 
$\wedge_d$-\emph{nodes} where $d$ in the subscript stands for 'decomposable'. 

Finally, one outgoing edge of each variable node is labelled with $0$, the other with $1$. 
\end{definition}

\begin{definition} \label{def:desdnnfsem}
Continuing on Definition \ref{def:desdnnfsynt}, we now define semantics of the 
$\wedge_d$-\textsc{fbbd} $B$. 
We first define a set of assignments $\mathcal{S}(B)$ over $Var(B)$. 
We then define $f(B)$, the function represented by $B$, as the function with $\mathcal{S}(f(B))=\mathcal{S}(B)$. 
Assume first that $B$ is a sink. If the node is labelled with $True$
then $\mathcal{S}(B)=\{\emptyset\}$. 
If the node is labelled with $False$ then $\mathcal{S}(B)=\emptyset$. 

In case $B$ has more than one node, let $u$ be the source of $B$
and let $u_0$ and $u_1$ be the children of $u$. 
Assume first that $u$ is labelled with a variable $x$. 
Let $(u,u_0)$ be labelled with $0$ and $(u,u_1)$ be labelled with $1$. 
Let $V_0=Var(B_{u_1}) \setminus Var(B_{u_0})$ and 
let $V_1=Var(B_{u_0}) \setminus Var(B_{u_1})$. 

We define $\mathcal{S}(B)=\{(x,0)\} \times \mathcal{S}(B_{u_0}) \times \{0,1\}^{V_0} \cup  
                          \{(x,1)\} \times \mathcal{S}(B_{u_1}) \times \{0,1\}^{V_1}$. 
Assume now that $u$ is a $\wedge_d$ node.
Then $\mathcal{S}(B)=\mathcal{S}(B_{u_0}) \times \mathcal{S}(B_{u_1})$. 													
\end{definition}

\begin{definition} \label{def:targetp1}
Let $B$ be a $\wedge_d$-\textsc{fbdd}.
Let $P$ be a path of $B$ starting  from the source. 
We refer to $P$ as a \emph{target path} of $B$. 

For a target path $P$, let  $Var(P)$ be the set of variables
labelling nodes of $P$ but the last one. 
Further on, let $\pi(P)$ the permutation of $Var(P)$
occurring in the order of their occurrence on $P$
starting from the source. 
Finally, let ${\bf a}={\bf a}(P)$ be the assignment of $Var(P)$ such that
for each $x \in var(P)$, ${\bf a}(x)$ is the label of $(u,v)$,
the edge of $P$ such that $u$ is labelled with $x$. 
\end{definition}

\begin{definition} \label{def:targetp2}
Let $B$ and $P$ be as in Definition \ref{def:targetp1}. 
We continue to introduce terminology related to target paths. 
First of all, let $u(P)$ be the final node of $P$. 
A \emph{junction} of $P$
is a $\wedge_{d}$-node of $P$ that is not $u(P)$. 
Let $u$ be a junction of $P$ and let $v$ be the child of $u$
that does not belong to $P$. We refer to $v$ as the \emph{alternative} for $u$. 
Also we say that $v$ is \emph{an alternative} for $P$ if $v$ is the alternative
for a junction of $u$. 

We denote by $Alt(P)$ the set of alternatives for $P$. 
We also let $V_0(P)=Var(B) \setminus (Var(B_{u(P)}) \cup \bigcup_{v \in Alt(P)} Var(B_v))$. 
\end{definition}

\begin{theorem} \label{th:decomp1}
With $P$ and $B$ as in Definition \ref{def:targetp1},
suppose that ${\bf a}={\bf a}(P)$ can be extended to a satisfying assignment of $f(B)$.
Then $V_0(P)$ are all not essential variables of $f(B)|_{\bf a}$. 
Moreover $\mathcal{S}(B)|_{\bf a}=\mathcal{S}(B_{u(P)}) \times \prod_{v \in Alt(P)} \mathcal{S}(B_v) \times \{0,1\}^{V_0(P)}$
if $|Alt(P)| \neq \emptyset$ and $\mathcal{S}(B)|_{\bf a}=\mathcal{S}(B_u) \times \{0,1\}^{V_0(P)}$
otherwise. 
\end{theorem}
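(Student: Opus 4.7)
The plan is to proceed by induction on the number of nodes of $P$, using the recursive structure of $\mathcal{S}(B)$ supplied by Definition \ref{def:desdnnfsem}. Throughout, I will read $V_0(P)$ as the set of variables of $B$ that are neither in $Var(B_{u(P)})$, in any $Var(B_v)$ for $v \in Alt(P)$, nor in $Var(P)$ itself; read-onceness guarantees $Var(P) \cap Var(B_{u(P)}) = \emptyset$ and $Var(P) \cap Var(B_v) = \emptyset$ for every alternative $v$, so this reading is consistent with the definition after removing variables already fixed by ${\bf a}(P)$.

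The base case is $P$ consisting solely of the source of $B$: then ${\bf a}(P) = \emptyset$, $u(P)$ is the source, $Alt(P) = \emptyset$, and $V_0(P) = \emptyset$, so both sides of the claimed equality coincide with $\mathcal{S}(B)$. For the inductive step I let $u$ be the source of $B$, let $P'$ be $P$ with $u$ removed, and split into two cases depending on the type of $u$. If $u$ is a variable node labeled $x$ whose outgoing edge on $P$ carries $b \in \{0,1\}$ and enters child $u_b$, then $P'$ is a target path of $B_{u_b}$, and ${\bf a}(P) = \{(x,b)\} \cup {\bf a}(P')$, while $Alt(P) = Alt(P')$ and $u(P) = u(P')$. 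Expanding $\mathcal{S}(B)$ by the semantics at the source and restricting first by $(x,b)$ kills the opposite branch and leaves $\mathcal{S}(B_{u_b}) \times \{0,1\}^{V_b}$, where $V_b = Var(B_{u_{1-b}}) \setminus Var(B_{u_b})$ is disjoint from $Var(P')$; further restricting by ${\bf a}(P')$ therefore commutes with the $\{0,1\}^{V_b}$ factor, and the inductive hypothesis applied inside $B_{u_b}$ yields the desired product after checking that $V_0(P,B) = V_b \cup V_0(P', B_{u_b})$.

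If instead $u$ is a $\wedge_d$ node with children $u_0, u_1$ and $P$ continues through, say, $u_1$, then $u_0 \in Alt(P)$, $Alt(P') = Alt(P) \setminus \{u_0\}$, and ${\bf a}(P) = {\bf a}(P')$. Decomposability gives $Var(B_{u_0}) \cap Var(B_{u_1}) = \emptyset$, so the restriction by ${\bf a}(P')$ acts only on the $B_{u_1}$-factor of $\mathcal{S}(B) = \mathcal{S}(B_{u_0}) \times \mathcal{S}(B_{u_1})$; invoking the induction hypothesis on $P'$ in $B_{u_1}$, folding $\mathcal{S}(B_{u_0})$ into the alternatives product, and checking $V_0(P,B) = V_0(P', B_{u_1})$ finishes the equality. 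The non-essentiality of $V_0(P)$ in $f(B)|_{\bf a}$ is then an immediate consequence: the explicit $\{0,1\}^{V_0(P)}$ factor shows that flipping any such variable maps satisfying assignments to satisfying assignments, so no such variable influences $f(B)|_{\bf a}$.

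The main obstacle is the bookkeeping around the variable-index sets across the inductive split, in particular verifying the identity $V_0(P,B) = V_b \cup V_0(P',B_{u_b})$ in the variable-node case (where some variables of $Var(B_{u_0}) \triangle Var(B_{u_1})$ need to be gathered into the free-variable factor) and confirming that the $\wedge_d$ case absorbs the sibling sub-DAG correctly as a single element of $Alt(P)$. The hypothesis that ${\bf a}(P)$ extends to a satisfying assignment plays only a mild role — it guarantees that each $\mathcal{S}(B_{u(P)})$ and each $\mathcal{S}(B_v)$ appearing in the product is non-empty, so that neither side collapses to $\emptyset$ vacuously; the structural equality itself is driven entirely by the semantic recurrence together with read-onceness and decomposability.
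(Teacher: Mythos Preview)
Your proposal is correct and follows essentially the same inductive argument as the paper's proof in the Appendix: both strip the source node from $P$, apply the induction hypothesis to the residual target path inside the corresponding sub-\textsc{dag}, and treat the variable-node and $\wedge_d$-node cases separately via the semantic recurrence and \eqref{eq:decomp2}. Your base case (the single-node path) is a minor simplification of the paper's length-$1$ base case, and your explicit reading of $V_0(P)$ as also excluding $Var(P)$ correctly resolves a small imprecision in the paper's stated definition.
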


\begin{proof}
Postponed to the Appendix. 
\end{proof}

Now, we introduce the terminology related to imbalanced $\wedge_d$-\textsc{fbdd}s.
\begin{definition}
Let $0<\alpha<1$. 
Let $B$ be a $\wedge_d$-\textsc{fbdd}. 
Let $n=|Var(B)|$. 
Let $u \in V(B)$. 
We say that $u$ is \emph{small}
$|Var(B_u)| \leq n^{\alpha}$. 
Otherwise, $u$ is \emph{large}

Let $u$ be a $\wedge_d$-gate and let $u_1$ and $u_2$ 
be the outgoing neighbours of $u$. 
We say that $u$ is $\alpha$-\emph{balanced}
if both $u_1$ and $u_2$ are large. 
Otherwise, the gate is $\alpha$-\emph{imbalanced}. 
We also refer to $\alpha$-imbalanced gates as $\wedge_{d,\alpha}$ ones.
We say that $B$ is $\wedge_{d,\alpha}$-\textsc{fbdd} if 
all the conjunction gates are $\wedge_{d,\alpha}$ ones. 
\end{definition}

From now on, we assume that $0<\alpha<1$ is a {\bf fixed constant}.
\begin{definition}
Let $B$ be a $\wedge_d$-\textsc{fbdd} and let $P$
be a target path of $B$. 
We say that $P$ is a \emph{mainstream path} of $B$
of all the alternatives of $P$ are small. 
\end{definition}

\begin{lemma} \label{lem:prelimconc}
Let $B$ be a $\wedge_d$-\textsc{fbdd} and let $P$
be a mainstream path of $B$. Suppose that
$\mathcal{S}(B)|_{{\bf a}(P)} \neq \emptyset$. 
Let $U \subseteq Var(B) \setminus Var(P)$ such that $|U|>max(n^{\alpha},2)$
(where $n=|var(B)|$) and ${\bf a}(P)$ does not break $U$. 
Then $u(P)$ is a large node and $U \subseteq Var(B_{u(P)})$. 
\end{lemma}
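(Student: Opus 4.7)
The plan is to apply Theorem \ref{th:decomp1} to decompose $\mathcal{S}(B)|_{{\bf a}(P)}$ as a product of many factors, and then argue that $U$, being unbreakable, must lie entirely inside the factor corresponding to $u(P)$; the size hypothesis on $U$ will then force $u(P)$ to be large.

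Concretely, since $\mathcal{S}(B)|_{{\bf a}(P)} \neq \emptyset$, Theorem \ref{th:decomp1} yields
\[
\mathcal{S}(B)|_{{\bf a}(P)} \;=\; \mathcal{S}(B_{u(P)}) \;\times\; \prod_{v \in Alt(P)} \mathcal{S}(B_v) \;\times\; \{0,1\}^{V_0(P)}.
\]
Refining the last factor as $\{0,1\}^{V_0(P)} = \prod_{x \in V_0(P)} \{0,1\}^{\{x\}}$, one obtains a product whose factors are (i) $\mathcal{S}(B_{u(P)})$ with variable set $Var(B_{u(P)})$, (ii) one factor $\mathcal{S}(B_v)$ with variable set $Var(B_v)$ for each $v \in Alt(P)$, and (iii) a singleton-variable factor for each $x \in V_0(P)$. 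Since $\mathcal{S}(B)|_{{\bf a}(P)}$ does not break $U$, the earlier proposition on products not breaking sets of variables forces $U$ to be contained in the variable set of a \emph{single} one of these factors.

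I then rule out every factor except $\mathcal{S}(B_{u(P)})$. Because $P$ is a mainstream path, each alternative $v \in Alt(P)$ is small, so $|Var(B_v)| \leq n^{\alpha} < |U|$, which precludes $U \subseteq Var(B_v)$. Each singleton factor contains just one variable, which cannot accommodate a set of size $|U|>2$. Hence $U \subseteq Var(B_{u(P)})$, and then $|U|>n^{\alpha}$ immediately gives $|Var(B_{u(P)})|>n^{\alpha}$, i.e.\ $u(P)$ is large.

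The only mildly delicate point is to read the hypothesis ``${\bf a}(P)$ does not break $U$'' as ``$\mathcal{S}(B)|_{{\bf a}(P)}$ does not break $U$'', matching Definition \ref{def:varbreak}, which only defines ``breaks'' for sets of assignments. Once this is settled, the proof is essentially bookkeeping on the rectangle decomposition of Theorem \ref{th:decomp1}, with the threshold $\max(n^{\alpha},2)$ precisely calibrated to exclude both the alternative factors (ruled out by the mainstream property) and the singleton factors in $\{0,1\}^{V_0(P)}$ (ruled out by $|U|>2$).
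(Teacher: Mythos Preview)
Your proof is correct and follows essentially the same route as the paper: apply Theorem \ref{th:decomp1} to get the rectangle decomposition of $\mathcal{S}(B)|_{{\bf a}(P)}$, use the non-breaking hypothesis (together with the Proposition on products) to force $U$ into a single factor, and then use the size bound $|U|>\max(n^{\alpha},2)$ to eliminate the small alternatives and the $V_0(P)$ part. The only cosmetic difference is that the paper first isolates the elimination of $V_0(P)$ as a separate claim about essentiality of the variables in $U$, whereas you handle it by refining $\{0,1\}^{V_0(P)}$ into singleton factors; both arguments are equivalent, and your reading of ``${\bf a}(P)$ does not break $U$'' as ``$\mathcal{S}(B)|_{{\bf a}(P)}$ does not break $U$'' is exactly what is intended.
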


\begin{proof}
\begin{claim} \label{clm:prelimconc}
All the variables of $U$ are essential for $\mathcal{S}(B)|_{{\bf a}(P)}$
\end{claim}

\begin{proof}
Assume the opposite and let $u \in U$ be a variable that
is not essential for $\mathcal{S}(B)|_{{\bf a}(P)}$. 
Then $\mathcal{S}(B)|_{{\bf a}(P)}$ can be represented as a rectangle
$\mathcal{S} \times \{0,1\}^{\{u\}}$. As $|U|>2$, $Var(\mathcal{S}) \cap U \neq \emptyset$. 
We conclude that $\mathcal{S}(B)|_{{\bf a}(P)}$ breaks $U$ in contradiction to our assumption.
\end{proof}

It follows from the combination of 
Claim \ref{clm:prelimconc}
and Theorem \ref{th:decomp1}
that either $U \subseteq B_{u(P)}$
or $U \subseteq B_v$ for some $v \in Alt(P)$. 
As $v$ is small  due to $P$ being
mainstream,
the latter possibility can be ruled by assumption about the
size  of $U$. This completes the proof of the lemma.
\end{proof}

\begin{definition}
Let $B$ be a $\wedge_d$-\textsc{fbdd} and let ${\bf g}$ 
be  an assignment of $Var(B)$. 
Let $u \in V(B)$. 
We say that ${\bf g}$ is \emph{carried through} $u$
if there is a target path $P$ of $B$ with $u=u(P)$ 
such that ${\bf a}(P) \subseteq {\bf g}$. 
\end{definition}

\begin{lemma} \label{lem:prelimconc2}
Let $B$ be a $\wedge_d$-\textsc{fbdd} and let $u \in V(B)$ (that is, $u$ is a node of $B$). 
Let $x \in Var(B_u)$, $a \in \{0,1\}$, and assume that 
there is ${\bf h} \in \mathcal{S}(B_u)$ such that ${\bf h}(x)=a$. 

Further on, let $y \in Var(B) \setminus Var(B_u)$, $b \in \{0,1\}$,
and let ${\bf g} \in \mathcal{S}(B)$ be an assignment carried through $u$
such that ${\bf g}(y)=b$. 
Then $f(B)$ has a satisfying assignment assigning $x$ with $a$ and $y$ with $b$. 
\end{lemma}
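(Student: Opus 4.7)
The plan is to invoke Theorem \ref{th:decomp1} on a target path witnessing that ${\bf g}$ is carried through $u$, and then rebuild a satisfying assignment by swapping the $B_u$-component of ${\bf g}$ for one derived from ${\bf h}$ while preserving the value of $y$. More precisely, I would first fix a target path $P$ with $u(P)=u$ and ${\bf a}(P) \subseteq {\bf g}$, which exists by the definition of ``carried through $u$''. Since ${\bf g} \setminus {\bf a}(P) \in \mathcal{S}(B)|_{{\bf a}(P)}$, this restriction is nonempty, so Theorem \ref{th:decomp1} applies and yields the product decomposition
\[
\mathcal{S}(B)|_{{\bf a}(P)} = \mathcal{S}(B_u) \times \prod_{v \in Alt(P)} \mathcal{S}(B_v) \times \{0,1\}^{V_0(P)}
\]
(and the analogous expression without the middle factor when $Alt(P) = \emptyset$). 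In particular, the sets $Var(P)$, $Var(B_u)$, $\{Var(B_v)\}_{v \in Alt(P)}$, $V_0(P)$ partition $Var(B)$.

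Next, I would assemble the desired satisfying assignment ${\bf g}^*$ piece by piece along this partition. On $Var(P)$, set ${\bf g}^* = {\bf a}(P)$. On $Var(B_u)$, set ${\bf g}^* = {\bf h}$; this is the only point where $x$ lives, so ${\bf g}^*(x) = {\bf h}(x) = a$. On each $Var(B_v)$ with $v \in Alt(P)$, copy the projection $Proj({\bf g} \setminus {\bf a}(P), Var(B_v))$, which lies in $\mathcal{S}(B_v)$ by the product decomposition above. On $V_0(P)$, copy $Proj({\bf g}, V_0(P))$. Because the factors in the decomposition are independent, the union ${\bf g}^* \setminus {\bf a}(P)$ lies in $\mathcal{S}(B)|_{{\bf a}(P)}$, so ${\bf g}^* \in \mathcal{S}(B) = \mathcal{S}(f(B))$.

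It remains to verify ${\bf g}^*(y) = b$. Since $y \in Var(B) \setminus Var(B_u)$, the partition above places $y$ in exactly one of $Var(P)$, some $Var(B_v)$ with $v \in Alt(P)$, or $V_0(P)$. In the first case ${\bf g}^*(y) = {\bf a}(P)(y) = {\bf g}(y) = b$ because ${\bf a}(P) \subseteq {\bf g}$; in the other two cases the construction copied ${\bf g}(y)$ directly. Thus ${\bf g}^*$ is a satisfying assignment of $f(B)$ with ${\bf g}^*(x)=a$ and ${\bf g}^*(y)=b$, as required. The only real obstacle is bookkeeping: checking that the four groups of variables genuinely partition $Var(B)$ (which is precisely what the definitions of $Alt(P)$ and $V_0(P)$ were engineered to ensure) and that the piecewise choices are mutually consistent with the product structure supplied by Theorem \ref{th:decomp1}.
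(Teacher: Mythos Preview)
Your proposal is correct and follows essentially the same approach as the paper: invoke Theorem \ref{th:decomp1} on a target path witnessing that ${\bf g}$ is carried through $u$, then swap the $Var(B_u)$-component for ${\bf h}$ while keeping the rest of ${\bf g}$. The paper compresses all of your bookkeeping into a single line, setting ${\bf g'}=Proj({\bf g}, Var(B)\setminus Var(B_u))$ and asserting ${\bf g'}\cup{\bf h}\in\mathcal{S}(B)$ by Theorem \ref{th:decomp1}; your piecewise construction of ${\bf g}^*$ is exactly this assignment, just written out factor by factor.
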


{\bf Proof.}
Let ${\bf g'}=Proj({\bf g}, Var(B) \setminus Var(B_u))$. 
By assumption, $y \in Var({\bf g'})$ and hence ${\bf g'}(y)=b$. 
It follows from Theorem \ref{th:decomp1} that
${\bf g'} \cup {\bf h} \in \mathcal{S}(B)$. 
Clearly, ${\bf g'} \cup {\bf h}$ assigns $x$ and $a$ and $y$ with $b$. 
$\blacksquare$

\section{An XP lower bound for $\wedge_{d,\alpha}$-\textsc{fbdd} representing
CNFs of bounded treewidth} \label{sec:main}
\begin{theorem} \label{th:mainlowerbound}
Let $B$ be a $\wedge_{d,\alpha}$-\textsc{fbdd}
representing $\varphi(T[h,k])$. 
Then for a sufficiently large $h$ compared to $k$,
$|B| \geq n^{\Omega((1-\alpha) \sqrt{k}}$. 
\end{theorem}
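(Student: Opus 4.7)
Plan: Fix a $\wedge_{d,\alpha}$-\textsc{fbdd} $B$ representing $\varphi(T[h,k])$, with $n=|V(T[h,k])|$ variables. To each ${\bf g}\in\mathcal{S}(\varphi(T[h,k]))$ I associate a mainstream target path $P({\bf g})$ carrying ${\bf g}$ obtained by descending from the source: at each variable node follow the edge prescribed by ${\bf g}$, and at each $\wedge_{d,\alpha}$-gate follow the large child (which is unique by imbalance; if both children are small, either choice still leaves the path mainstream). By construction every alternative of $P({\bf g})$ is small, so $P({\bf g})$ is mainstream and ${\bf a}(P({\bf g}))\subseteq{\bf g}$.

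Next I feed the permutation $\pi(P({\bf g}))$, extended to a permutation of $V(T[h,k])$ with the alternative-subtrees and the $V_0$-variables from Theorem \ref{th:decomp1} placed on the appropriate sides, into the combinatorial engine of Section \ref{sec:maincomb}. The engine outputs a split $P({\bf g})=P_0\cdot P_1$ and an induced matching $M\subseteq E(T[h,k])$ of size $s=\Omega((1-\alpha)\sqrt{k}\log n)$, where each edge connects a red vertex in $Var(P_0)$ with a green vertex lying in a connected component $C_g$ of the suffix-induced subgraph of $T[h,k]$ of size exceeding $n^{\alpha}$. Let $u({\bf g}):=u(P_0)$. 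The \emph{localization} claim is that $C_g\subseteq Var(B_{u({\bf g})})$ for every green endpoint $g$: since ${\bf a}(P_0)\subseteq{\bf g}$ falsifies no clause and the combined induced-matching / suffix-component structure leaves $C_g$ unfixed by ${\bf a}(P_0)$, Lemma \ref{lem:nobreak1} gives that $\mathcal{S}(B)|_{{\bf a}(P_0)}=\mathcal{S}(\varphi)|_{{\bf a}(P_0)}$ does not break $C_g$; then Lemma \ref{lem:prelimconc}, applicable because $P_0$ is mainstream and $|C_g|>\max(n^{\alpha},2)$, places $C_g$ inside $Var(B_{u({\bf g})})$.

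With localization in hand I choose, for each edge $\{r,g\}\in M$, a witness $w$: if ${\bf a}(P_0)(r)=1$ take $w=r$; otherwise take $w=g$, in which case the clause $(r\vee g)$ forces $w=1$ in every satisfying extension of ${\bf a}(P_0)$ and Lemma \ref{lem:prelimconc2} propagates this forcing to every satisfying assignment of $f(B)$ carried through $u({\bf g})$. Thus every satisfying assignment of $\varphi(T[h,k])$ carried through $u({\bf g})$ along $P_0$ agrees with ${\bf g}$ on a set $W({\bf g})$ of $s$ positively assigned variables. Equipping $\mathcal{S}(\varphi(T[h,k]))$ with the uniform distribution and combining a standard tail bound on the number of $1$-coordinates of satisfying assignments (i.e.\ vertex covers) of $\varphi(T[h,k])$ with a union bound over $V(B)$ delivers $|V(B)|\ge n^{\Omega((1-\alpha)\sqrt{k})}$. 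The main obstacle is the localization step, where the non-local suffix-connectivity condition supplied by the engine must be transferred from a property of $T[h,k]$ into a structural statement about $B$; the secondary bookkeeping concerns extending $\pi(P({\bf g}))$ to $V(T[h,k])$ so the engine's hypotheses apply, together with handling distinct ${\bf g}$'s that reach the same $u$ via different mainstream prefixes $P_0$.
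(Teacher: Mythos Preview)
Your proposal follows the same architecture as the paper (mainstream path, combinatorial engine, localization via Lemmas \ref{lem:nobreak1} and \ref{lem:prelimconc}, witness selection, union bound), but there are two genuine gaps that you treat too lightly.

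\textbf{The probability space.} You equip $\mathcal{S}(\varphi(T[h,k]))$ with the \emph{uniform} distribution and invoke a ``standard tail bound''. There is no such standard bound: for a fixed independent set $W$, the fraction of vertex covers of a bounded-degree graph that contain $W$ need not be at most $c^{-|W|}$. Already for a single high-degree vertex $v$ the conditional probability $Pr[v\text{ positive}]$ can be arbitrarily close to $1$ (think of a star centre), and for a collection of such vertices the events are correlated in an uncontrolled way. The paper instead uses a carefully engineered \emph{non-uniform} distribution: orient every edge independently and uniformly, and let $Pos({\bf g})$ be the set of heads. Under this distribution the events $\{v\in Pos({\bf g})\}$ for $v$ ranging over an independent set $W$ are genuinely independent (they depend on pairwise disjoint edge sets), and each has probability $1-2^{-\deg(v)}\le 63/64$, yielding $Pr[W\subseteq Pos({\bf g})]\le (63/64)^{|W|}$. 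This is Claim \ref{clm:probmain}, imported from \cite{RazgonAlgo}, and it is the point where bounded degree is actually used.

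\textbf{``Carried through $u$ along $P_0$'' versus ``carried through $u$''.} Your witness argument only constrains those ${\bf g}'$ with ${\bf a}(P_0)\subseteq{\bf g}'$: in the red case $w=r$ you simply read ${\bf g}'(r)=1$ off ${\bf a}(P_0)$. But the union bound is over nodes $u$, so you must control every ${\bf g}'$ with $u({\bf g}')=u$, and such ${\bf g}'$ may reach $u$ along a different mainstream prefix $P_0'$ on which $r$ is assigned $0$. This is not bookkeeping; it is exactly the content of the paper's Lemma \ref{lem:elpos}. The fix in the red case is to use localization once more: the green vertex $g$ lies in $Var(B_{u})$ and (being unfixed by ${\bf a}(P_0)$) admits some ${\bf h}\in\mathcal{S}(B_u)$ with ${\bf h}(g)=0$; then for any ${\bf g}'$ carried through $u$ with ${\bf g}'(r)=0$, Lemma \ref{lem:prelimconc2} manufactures a satisfying assignment with $r=g=0$, falsifying the clause $(r\vee g)$. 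Without this extra step your inclusion ${\bf C}(u)\subseteq{\bf Set}(W({\bf g}))$ is not established.
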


Theorem \ref{th:mainlowerbound} is the main technical statement of this paper. 
In combination with Proposition \ref{prop:twup}, it implies Theorem \ref{th:mainintro}
as an immediate corllary. In the rest of this section we prove
Theorem \ref{th:mainlowerbound}.

\begin{definition}
Let $G$ be a graph. 
Let $W,U_0,U_1$ be disjoint subsets of $V(G)$. 
We call $(W,U_0,U_1)$ a \emph{target triple} if the following conditions hold. 
\begin{enumerate}
\item Each connected component of $G[U_1]$ is of size 
larger than $n^{\alpha}$. 
\item Each vertex of $U_0$ is adjacent to both $U_1$ and $W$.
\item $U_1$ is not adjacent to $W$. 
\item $U_0$ is an independent set. 
\item $N(U_0) \cap W$ is an independent set. 
\item Each vertex of $W$ is adjacent to at most one vertex of $U_0$. 
To put it differently, let $u_1,u_2$ be two distinct vertices of $U_0$
then $N(u_1) \cap W$ is disjoint with $N(u_2) \cap W$. 
\end{enumerate}
The \emph{rank} of $(W,U_0,U_1)$ is $|U_0|$. 
\end{definition}

The notion of target triple is a formalization of the intuition illustrated 
on the right-hand part of Figure \ref{fig:permstruct}. The green vertices represent $U_0$.
The red vertices belong to $W$, one per neighborhood of each green vertex. 
Finally, the constraint on $U_1$ as in the first item of the definition
plus adjacency of each vertex $U_0$ to $U_1$ clearly implies the same
constraint on $U_0 \cup U_1$. This ensures that each green vertex belongs to a heavy
`anchor' of size greater than $n^{\alpha}$ as depicted on the bottom part of
the right-hand side of Figure \ref{fig:permstruct}.


The following statement is the main combinatorial engine 
of the paper. It is proved in Section \ref{sec:maincomb}. 

\begin{lemma} \label{lem:largeindep1}
Let $\pi$ be a permutation of $V(T[h,k])$. 
Then there is a prefix $\pi_0$ of $\pi$
and disjoint subsets $U_0,U_1$ of $V(G) \setminus \pi_0$ 
such that $(\pi_0,U_0,U_1)$ is a target triple
of rank $\Omega((1-\alpha) \cdot h \sqrt{k})$.
\end{lemma}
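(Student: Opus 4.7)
The plan is to construct the prefix $\pi_0$ and the sets $U_0, U_1$ by a structural sweep of $\pi$ that exploits both the recursive tree structure of $T[h]$ and the path structure of $P_k$. I will view $V(T[h,k])$ as columns $C_t = \{(t,1),\dots,(t,k)\}$ indexed by $t \in V(T[h])$ and connected by tree-edges across rows. As $\pi$ is scanned from left to right, each column accumulates explored vertices while the suffix induces an ever-shrinking unexplored subgraph. The goal is to locate a cut position at which the unexplored part still contains many large components (the candidates for $U_1$), while many vertices already lie on the interface between the explored and unexplored regions (the candidates for $U_0$).

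\textbf{Step 1 (picking the cut).} To locate the cut I would track, for each position $p$ along $\pi$ and each dyadic tree-scale $s \in \{1,2,4,\dots,h\}$, how many depth-$s$ subtrees of $T[h]$ are ``partially but moderately'' explored inside $\pi_{<p}$. An averaging argument over the $\Theta(\log h)$ scales and over the rows $i \in [k]$ should yield a prefix $\pi_0$ together with a collection $\mathcal{T}$ of $\Omega(\log h)$ subtrees at geometrically increasing depths, such that within each $T' \in \mathcal{T}$ the explored/unexplored split is favourable along $\Omega(\sqrt{k})$ rows simultaneously.

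\textbf{Step 2 (assembling $U_1$ and $U_0$).} For each $T' \in \mathcal{T}$, the unexplored part of the subgrid $V(T') \times [k]$ should contain a connected component of size exceeding $n^\alpha$; the union of these components becomes $U_1$. The $(1{-}\alpha)$ factor surfaces here, since the anchor threshold $n^\alpha$ constrains how small a qualifying block may be and hence how many scales actually produce a usable anchor. Within each block, I would select $\Omega(\sqrt{k})$ boundary vertices $(t,i) \notin \pi_0$ that have a $\pi_0$-neighbor and whose remaining neighbors feed into the block's anchor component, and take $U_0$ to be the union of these selections across blocks. Lemma~\ref{lem:nobreak1} then verifies that the anchors remain unbroken under the eventual restriction, which in turn underlies condition~(1) of the target triple.

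\textbf{Main obstacle.} The hardest part will be the simultaneous enforcement of conditions (4)--(6) of the target triple together with the advertised rank. Conditions (5) and (6) force the $W$-footprint of $U_0$ to be a matching-like independent set in $G$, and in the $T[h] \square P_k$ geometry boundary candidates in the same row, or in adjacent rows of adjacent columns, collide very easily. I expect the $\sqrt{k}$ factor (rather than a naive $k$ or $\log k$) to emerge precisely from a pigeonhole / bidimensionality argument that trades off ``width'' in the tree-direction against ``width'' in the path-direction---the novel bidimensionality concept advertised in the abstract. Concretely, I would prove an auxiliary sparsification claim asserting that from any $k$ candidate rows in a block one can extract $\Omega(\sqrt{k})$ whose boundary vertices pairwise satisfy (4)--(6). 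Once that combinatorial trade-off is in place, packaging the balanced prefix and the selected blocks into a bona fide target triple of rank $\Omega((1{-}\alpha)\log h \sqrt{k})$ becomes careful but routine bookkeeping.
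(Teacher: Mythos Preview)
Your proposal has the right flavour in two places---the $\sqrt{k}$ trade-off between the tree direction and the path direction, and the need for a sparsification step to enforce conditions (4)--(6)---but the overall architecture is missing the paper's two load-bearing ideas, and the plan as stated would not close.

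First, the paper does \emph{not} average over many dyadic scales. It fixes a single anchor height $h_0 \approx \alpha h$ (so that every height-$(h_0{-}1)$ subtree already has more than $n^{\alpha}$ vertices) and a single upper height $h_1 = h_0 + O(\log k + \log h)$, and then runs a \emph{win--win dichotomy}: either $\pi$ is $(h_0,h_1)$-\textsc{td} (for every $t$ at height $h_1$, the shortest prefix collecting $k$ ``high'' vertices under $t$ still leaves some height-$h_0$ subtree untouched), or else a pigeonhole shows $\pi$ is $(h_0,h_2)$-\textsc{bu} for $h_2=h_1-\lceil\log_3 k\rceil$ (some prefix touches every height-$h_0$ subtree under a single $t$ while avoiding everything above $h_0$). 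The \textsc{bu} branch is handled by a direct construction; the \textsc{td} branch is where the $(1-\alpha)h\cdot\sqrt{k}$ accumulation happens. Your ``averaging over $\Theta(\log h)$ scales and $k$ rows'' does not substitute for this dichotomy, and without it you have no mechanism to force a favourable split at any single scale.

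Second, and more seriously, you never explain how to obtain a \emph{single} prefix $\pi_0$ that works simultaneously for all contributing subtrees. This is the crux of the paper's \textsc{td} induction (Theorem~\ref{th:maintopdowncomb}): at each step one takes three grandchildren $t_1,t_2,t_3$, obtains inductive prefixes $\pi_1\subseteq\pi_2\subseteq\pi_3$, and uses the \emph{middle} one $\pi_2$---$\pi_1$ guarantees $\pi_2$ already contains $k$ high vertices outside the $t_2$-branch, $\pi_3$ guarantees an untouched height-$h_0$ subtree survives in $\pi_2$, and $\pi_2$'s own paths sit in the $t_2$-branch at distance $\geq 4$ from the new ones. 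Your Step~1 gestures at an averaging argument but gives no analogue of this ``1--2--3 trick,'' and without it the $\Omega(\log h)$ blocks you select need not be witnessed by the same prefix. Finally, Lemma~\ref{lem:nobreak1} is a statement about $\mathcal{S}(\varphi)$ not breaking connected sets; it is used in Section~\ref{sec:harness1} to harness the combinatorics into the branching-program bound, and plays no role in proving Lemma~\ref{lem:largeindep1} itself, which is a purely graph-theoretic statement about $T[h,k]$.
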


In order to `harness' Lemma \ref{lem:largeindep1}
into the proof of Theorem \ref{th:mainlowerbound},
we need several auxiliary statements. 
The first one, provided below, effectively restates
Lemma \ref{lem:largeindep1} in terms of a branching 
program representing $\varphi(T[h,k])$. Its proof
is provided in Subsection \ref{sec:harness1}

\begin{lemma} \label{lem:largeindep3}
Let $B$ be a $\wedge_{d,\alpha}$-\textsc{fbdd}
representing $\varphi(T[h,k])$. 
Let ${\bf g}$ be a satisfying assignment 
of $\varphi(T[h,k])$. 
Then there are $u({\bf g}) \in V(B)$,
a mainstream path $P({\bf g})$ of $B$ and 
$U_0({\bf g}),U_1({\bf g}) \subseteq V(G) \setminus Var(P)$
subject  to the following. 
\begin{enumerate}
\item $u({\bf g})=u(P({\bf g}))$. 
\item $(Var(P({\bf g})),U_0({\bf g}),U_1({\bf g}))$ is a target
triple of rank $\Omega((1-\alpha) \cdot h \sqrt{k})$.
\end{enumerate}
\end{lemma}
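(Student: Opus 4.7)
\textbf{First,} I would follow ${\bf g}$ through $B$ to build a mainstream target path $P^{\max}$: starting at the source, at each variable node labelled $x$ follow the outgoing edge labelled ${\bf g}(x)$, and at each $\wedge_d$-node proceed to the heavy child, breaking ties arbitrarily when both children are small. Because $B$ is $\wedge_{d,\alpha}$, at every $\wedge_d$-node at least one child is small, so the discarded sibling---every alternative of $P^{\max}$---is small and $P^{\max}$ is mainstream. Since ${\bf g}\in \mathcal{S}(B)$ forces both factors of each conjunction to accept, the traversal reaches the True sink with ${\bf a}(P^{\max})\subseteq {\bf g}$. Combining Theorem~\ref{th:decomp1} with the fact that $\varphi(T[h,k])$ has no inessential variable (since $T[h,k]$ has no isolated vertex) then gives the disjoint decomposition $V(G) = Var(P^{\max}) \cup \bigcup_{v\in Alt(P^{\max})} Var(B_v)$ with $|Var(B_v)|\le n^{\alpha}$ for every alternative $v$.

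\textbf{Second,} I would lay out a permutation $\pi$ of $V(T[h,k])$ by listing $Var(P^{\max})$ in the order of its appearance along $P^{\max}$ and then appending the remaining vertices block by block along the alternatives of $P^{\max}$. Feeding $\pi$ into Lemma~\ref{lem:largeindep1} produces a prefix $\pi_0$ and disjoint sets $U_0, U_1 \subseteq V(G)\setminus \pi_0$ forming a target triple of rank $r=\Omega\bigl((1-\alpha)\log h\sqrt{k}\bigr)$.

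\textbf{Third,} I would pull the triple back onto the spine. Set $\pi_0' = \pi_0 \cap Var(P^{\max})$; because $Var(P^{\max})$ is itself a prefix of $\pi$, the set $\pi_0'$ is a prefix of $Var(P^{\max})$, so it equals $Var(P({\bf g}))$ for a unique prefix $P({\bf g})$ of $P^{\max}$. Set $u({\bf g})=u(P({\bf g}))$, $U_1({\bf g})=U_1$, and $U_0({\bf g})=\{u\in U_0 : N(u)\cap \pi_0' \neq \emptyset\}$. Target-triple conditions 1, 3, 4, 5 and 6 for $(\pi_0',U_0({\bf g}),U_1({\bf g}))$ all transfer from $(\pi_0,U_0,U_1)$ by restriction (smaller $W$, smaller $U_0$, same $U_1$), and condition 2 holds by construction of $U_0({\bf g})$.

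\textbf{The hard part} is verifying that $|U_0({\bf g})|$ is still $\Omega(r)$. Condition 6 implies that picking, for each $u\in U_0$, a neighbour in $\pi_0$ yields an injection $U_0 \hookrightarrow \pi_0$; restricting this injection to $U_0\setminus U_0({\bf g})$ lands in $\pi_0\setminus Var(P^{\max})$, giving $|U_0\setminus U_0({\bf g})|\le|\pi_0\setminus Var(P^{\max})|$. I would control the right-hand side using that every vertex of $\pi_0\setminus Var(P^{\max})$ lies inside some alternative block $Var(B_v)$ of size at most $n^{\alpha}$, together with the large-component condition on the suffix imposed by Lemma~\ref{lem:largeindep1}: letting $\pi_0$ extend substantially past $Var(P^{\max})$ would shatter the components that $U_1$ is required to sit inside and force the rank below $r$. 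If the crudest form of this tension is insufficient, I would refine the ordering in the tail of $\pi$ (for example, arranging alternative blocks by decreasing connectivity with the rest of $G$) so that the cut produced by Lemma~\ref{lem:largeindep1} is forced to terminate inside $Var(P^{\max})$ from the outset. Making this $\alpha$-sensitive interaction between the imbalance of $B$ and the structural guarantees of Lemma~\ref{lem:largeindep1} quantitative is the step I expect to absorb most of the technical work.
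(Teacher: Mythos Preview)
Your setup is right---build a mainstream path compatible with ${\bf g}$, list its variables first in a permutation $\pi$, and invoke Lemma~\ref{lem:largeindep1}---and this is exactly the paper's approach. The gap is in what you call the ``hard part'': you try to bound $|U_0\setminus U_0({\bf g})|$ quantitatively, but that programme is both unnecessary and not completable along the lines you sketch. The decisive observation you are missing is that the case $Var(P^{\max})\subseteq\pi_0$ \emph{cannot occur at all}. Indeed, in that case $U_1$ is disjoint from and, by target-triple condition~3, non-adjacent to $\pi_0\supseteq Var(P^{\max})$, so any connected component $V^*$ of $G[U_1]$ (which has $|V^*|>n^{\alpha}$) is unfixed by ${\bf a}(P^{\max})$; Lemma~\ref{lem:nobreak1} then says $\mathcal{S}(B)|_{{\bf a}(P^{\max})}$ does not break $V^*$, and Lemma~\ref{lem:prelimconc} forces $u(P^{\max})$ to be large with $V^*\subseteq Var(B_{u(P^{\max})})$. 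But $u(P^{\max})$ is the $True$ sink, whose variable set is empty---a contradiction. Hence $\pi_0$ is a proper prefix of $Var(P^{\max})$, so $\pi_0'=\pi_0$, $U_0({\bf g})=U_0$, and no shrinking or counting is needed. The paper runs exactly this argument (its path $P_0({\bf g})$ stops already at the first small node rather than at the sink, but the logic is identical: the endpoint is small, and Lemma~\ref{lem:prelimconc} yields the contradiction).

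Two smaller points. First, your decomposition $V(G)=Var(P^{\max})\cup\bigcup_{v\in Alt(P^{\max})} Var(B_v)$ is not justified: Theorem~\ref{th:decomp1} says the variables in $V_0(P^{\max})$ are inessential for $f(B)|_{{\bf a}(P^{\max})}$, not for $f(B)$, so the absence of inessential variables in $\varphi(T[h,k])$ does not force $V_0(P^{\max})=\emptyset$. Second, your ``shattering'' intuition is off: the blocks $Var(B_v)$ partition variables of the circuit, not connected pieces of the graph $G$, so a connected subset of $V(G)$ can span many alternative blocks. Both issues become moot once you use the contradiction above, but as written they are errors.
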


The lower bound of Theorem \ref{th:mainlowerbound}
is established by observing that the set of nodes $u({\bf g})$
for all the satisfying assignments ${\bf g}$ of $\varphi(T[h,k])$
is large. This, in turn, is done by defining a probability space
over the satisfying assignments of $\varphi(T[h,k])$ and 
demonstrating that the probability that an assignment is carried 
through $u({\bf g})$ is small. Since each satisfying assignment 
is carried through some $u({\bf g})$, the total number of such 
nodes must be large due to the union bound. 

The remaining auxiliary statements enable the above reasoning 
by defining an independent set of size at least 
$\Omega((1-\alpha) \cdot h \sqrt{k})$ and showing 
that each satisfying assignment carried through $u({\bf g})$
assigns positively all the vertices of this independent set. 
The proof of one of these statements (Lemma \ref{lem:elpos})
is provided in Subsection \ref{sec:harness2}.

\begin{definition} \label{def:iset}
Let $G$ be a graph without  isolated vertices.
Let $B$ be a $\wedge_{d,\alpha}$-\textsc{fbdd}
representing $\varphi(G)$. 
Let $P$ be a mainstream path of $B$. 
Let $U_0,U_1 \subseteq V(G) \setminus Var(P)$ be such that
$(Var(P),U_0,U_1)$ is a target triple. 
For each $u \in U_0$, let $I_u(P,U_0,U_1)$ be defined as follows.
\begin{enumerate}
\item If ${\bf a}(P)$ assigns positively all  of $N(u) \cap W$
then $I_u(P,U_0,U_1)=N(u) \cap W$. 
\item Otherwise, $I_u(P,U_0,U_1)=N(u) \cap W=\{u\}$.
\end{enumerate}
We let $I(P,U_0,U_1)=\bigcup_{u \in U_0}$ 
\end{definition}

\begin{lemma} \label{lem:largeindep2}
With the notation as in Definition \ref{def:iset},
$I(P,U_0,U_1)$ is an independent set. 
\end{lemma}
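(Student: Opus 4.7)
The plan is to partition $I = I(P,U_0,U_1)$ into its portion lying in $W = Var(P)$ and its portion lying in $U_0$, and then rule out edges of three types: inside $I \cap W$, inside $I \cap U_0$, and between the two parts. Each type will be handled by a different clause of the target triple definition.

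First I would observe that by construction each $I_u$ is either a subset of $W$ (first case of Definition \ref{def:iset}) or equals $\{u\} \subseteq U_0$ (second case). Hence $I \subseteq W \cup U_0$ and we can set $I_W = I \cap W$ and $I_{U_0} = I \cap U_0$. For $I_W$, every $w \in I_W$ satisfies $w \in N(u') \cap W$ for some $u' \in U_0$, so $I_W \subseteq N(U_0) \cap W$, which is independent by property 5 of a target triple. For $I_{U_0}$, we have $I_{U_0} \subseteq U_0$, which is independent by property 4.

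The only remaining task is to rule out an edge $\{u,w\}$ between $u \in I_{U_0}$ and $w \in I_W$. Here is where the more delicate bookkeeping happens. The key preliminary observation is that if $u \in I_{U_0}$ then $u$ belongs to $I_{u'}$ for some $u' \in U_0$; but $I_{u'}$ is either $\{u'\}$ or a subset of $W$, and since $u \in U_0$ (so $u \notin W$) we must have $u' = u$ and $I_u = \{u\}$. In particular, \emph{the second case of Definition \ref{def:iset} holds for $u$}, which means ${\bf a}(P)$ does not assign positively every vertex of $N(u) \cap W$. Similarly, $w \in I_W$ forces $w \in N(u'') \cap W$ for some $u'' \in U_0$ whose $I_{u''}$ falls in the first case. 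Now if $\{u,w\} \in E(G)$ then $w \in N(u) \cap W$, hence $w$ is a common neighbour of $u$ and $u''$ in $W$, and property 6 of a target triple forces $u = u''$. But then $I_u = N(u) \cap W$ corresponds to the first case for $u$, contradicting the conclusion of the preliminary observation.

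The main obstacle, and the reason I would write it carefully, is this cross-edge step: one has to notice that a vertex of $U_0$ can only contribute itself to $I$, and combine this with the single-neighbour property (property 6) of $W$ to force $u = u''$ and derive the contradiction. Once that interplay is spelled out, the rest of the argument is a direct application of independence properties 4 and 5 of the target triple.
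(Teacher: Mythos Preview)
Your proof is correct and follows essentially the same approach as the paper's: both arguments use properties 4, 5, and 6 of a target triple to rule out edges within $N(U_0)\cap W$, within $U_0$, and across the two parts, respectively. The only organizational difference is that the paper first shows each individual $I_u$ is independent and then argues any edge must run between $I_{u_1}$ and $I_{u_2}$ of different types, whereas you directly partition $I$ into $I\cap W$ and $I\cap U_0$; the cross-edge contradiction via property 6 is the same in both.
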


\begin{proof}
Assume the opposite. 
We observe that for each $u \in U_0$,
$I_u(P,U_0,U_1)$ is independent. 
Indeed, if $I_u(P,U_0,U_1)$ is of size greater than
$1$ then $I_u(P,U_0,U_1)=N(u) \cap Var(P) \subseteq N(U_0) \cap Var(P)$
and the last set is independent by definition of a target triple. 
It follows that any edge must
be between $I_{u_1}(P,U_0,U_1)$ and $I_{u_2}(O,U_0,U_1)$
for distinct  $u_1,u_2 \in U_0$.
It follows from definition of target triple 
that say $I_{u_1}(P,U_0,U_1)=N_{u_1} \cap  Var(P)$ 
and $I_{u_2}(P,U_0,U_1)=\{u_2\}$. 
However, the adjacency of $u_2$ to $N_{u_1} \cap Var(P)$
is ruled out by the last item in the definition of a target triple.
\end{proof}

\begin{lemma} \label{lem:elpos}
With the notation as in Definition \ref{def:iset},
Assume that ${\bf a}(P)$ can be extended to a satisfying assignment of $\varphi(G)$
and let ${\bf g}$ be a satisfying assignment of $\varphi(G)$ carried 
through $u(P)$. 
Then ${\bf g}$ assigns positively all the elements of
$I(P,U_0,U_1)$.  
\end{lemma}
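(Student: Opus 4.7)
The plan is to prove the inclusion pointwise: for each $u \in U_0$, I will verify that every element of $I_u(P,U_0,U_1)$ is assigned $1$ by ${\bf g}$, and then take the union over $u \in U_0$. Two facts do the heavy lifting throughout. First, because ${\bf g}$ is carried through $u(P)$ along $P$, we have ${\bf a}(P) \subseteq {\bf g}$, so every variable assigned positively by ${\bf a}(P)$ is also assigned positively by ${\bf g}$. Second, the structure of $\varphi(G)$ guarantees that whenever $\{x,y\}\in E(G)$ and ${\bf g}(y)=0$, the clause $(x\vee y)$ forces ${\bf g}(x)=1$.

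Case~1 is immediate: if $I_u(P,U_0,U_1)=N(u)\cap W$, then by the first clause of Definition \ref{def:iset} all of $N(u)\cap W$ is assigned $1$ by ${\bf a}(P)$, and since $N(u)\cap W\subseteq W=Var(P)=Var({\bf a}(P))$, this propagates to ${\bf g}$ by the inclusion ${\bf a}(P)\subseteq{\bf g}$.

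For Case~2, where $I_u(P,U_0,U_1)=\{u\}$, I need to show ${\bf g}(u)=1$. Condition~2 in the definition of a target triple guarantees $N(u)\cap W\neq\emptyset$, and since every element of $N(u)\cap W$ lies in $Var(P)$, the values of ${\bf a}(P)$ on these vertices are totally defined. The hypothesis defining Case~2 is that ${\bf a}(P)$ does \emph{not} assign all of $N(u)\cap W$ positively; combined with totality this produces a witness $w\in N(u)\cap W$ with ${\bf a}(P)(w)=0$. Therefore ${\bf g}(w)=0$, and the clause $(u\vee w)\in\varphi(G)$ coming from the edge $\{u,w\}\in E(G)$ forces ${\bf g}(u)=1$. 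Taking the union over $u\in U_0$ yields the lemma.

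The argument is essentially a direct unfolding of Definition \ref{def:iset} together with the edge-clause semantics of $\varphi(G)$; no combinatorial or probabilistic machinery from elsewhere in the paper is invoked here, nor is anything about imbalanced conjunction gates or the component structure of $U_1$ (those facts will be needed by the later lemmas that count $u({\bf g})$-nodes, not by this one). The only bookkeeping point worth flagging is in Case~2: the negation of ``${\bf a}(P)$ assigns all of $N(u)\cap W$ positively'' must be upgraded from ``not assigned $1$'' to ``some vertex is assigned $0$'', which is legitimate because ${\bf a}(P)$ is total on $W$ and $N(u)\cap W\neq\emptyset$. I do not foresee any genuine obstacle.
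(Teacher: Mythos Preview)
Your argument rests on the claim that ${\bf a}(P)\subseteq{\bf g}$, but this is not what the hypothesis gives you. The definition of ``carried through $u$'' says only that there exists \emph{some} target path $P'$ ending at $u$ with ${\bf a}(P')\subseteq{\bf g}$; it does not say that $P'=P$. In a $\wedge_d$-\textsc{fbdd} there can be many target paths reaching the same node $u(P)$, with different variable sets and different partial assignments, so from ``${\bf g}$ is carried through $u(P)$'' you cannot conclude that ${\bf g}$ agrees with ${\bf a}(P)$ on $Var(P)$. Both of your cases collapse at this point: in Case~1 you need ${\bf g}$ to inherit the positive values of ${\bf a}(P)$ on $N(u)\cap W$, and in Case~2 you need ${\bf g}(w)=0$ from ${\bf a}(P)(w)=0$; neither follows.

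The paper's proof is genuinely more involved and, contrary to your closing remark, it \emph{does} use the component structure of $U_1$ and the mainstream property of $P$. It first proves (Lemma~\ref{lem:nobreak2}, via Lemmas~\ref{lem:nobreak1} and~\ref{lem:prelimconc}) that $U_1\cup Pos(P,U_0,U_1)\subseteq Var(B_{u(P)})$ and that each such variable can be set to $0$ by some element of $\mathcal{S}(B_{u(P)})$; this is precisely where the large connected components of $U_0\cup U_1$ are used to prevent them from being split away from $B_{u(P)}$. Then each case is handled by contradiction using the ``swap'' Lemma~\ref{lem:prelimconc2}: one exhibits an assignment in $\mathcal{S}(B_{u(P)})$ setting one endpoint of an edge to $0$, and a satisfying assignment carried through $u(P)$ setting the other endpoint to $0$ (sometimes this second assignment is ${\bf g}$, sometimes it is an extension of ${\bf a}(P)$), and Lemma~\ref{lem:prelimconc2} then manufactures a satisfying assignment of $\varphi(G)$ falsifying the corresponding clause. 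The distinction between $P$ and the path $P'$ witnessing that ${\bf g}$ is carried through $u(P)$ is explicit in the paper's argument.
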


\begin{proof}
{(\bf of Theorem \ref{th:mainlowerbound}).}
Let ${\bf e}$ be a function on $E(T[h,k])$
mapping each edge $e$ to one of its ends.
That is, if $e=\{u,v\}$ then either  ${\bf e}(e)=u$
or ${\bf e}(e)=v$. Let $\mathcal{E}$ be the uniform 
probability space over such functions. 
To put it differently, we throw a fair coin for each
edge selecting one of the ends depending on the outcome. 
So, the probability of the result of throwing a coin for each
edge is $0.5$ to the power of $|E(T[h,k])|$.

For each ${\bf e} \in \mathcal{E}$,
let $Out({\bf e})=\{u| \exists e \in E(T_{h,k}), {\bf e}(e)=u\}$. 
For an assignment ${\bf g} \in \mathcal{S}(\varphi(T[h,k]))$, let $Pos({\bf g})$
be the subset of $V(T[h,k])$ consisting of all the elements that
are assigned with $1$ by ${\bf g}$. 
Let $Assign({\bf e})$ be the assignment ${\bf g}$ such that $
Pos({\bf g})=Out({\bf e})$. 
We thus also consider ${\bf g}$ as the event consisting 
of all ${\bf e}$ such that $Assign({\bf e})={\bf g}$. 
We let $Pr({\bf g})$ be the probability of this event.
Thus, we have defined a probability space over $\mathcal{S}(T[h,k])$. 
In what follows we will use only this probability space. 

For $S \subseteq V(T[h,k])$, 
let ${\bf Set}(S)$ be the event consisting of all ${\bf g}$
such that $S \subseteq Pos({\bf g})$. Since the maximum degree
of $T[h,k]$ is $6$, the following holds \cite{RazgonAlgo}.

\begin{claim} \label{clm:probmain}
There is a constant $\beta>1$ such that the following holds. 
Let $S$ be an independent set of $T[h,k]$. 
Then $Pr({\bf Set}(S)) \leq \beta^{-|S|}$
\end{claim} 

For each $u \in V(B)$, let ${\bf C}(u)$ be the event
consisting of all ${\bf g}$ such that 
$u({\bf g})=u$. We observe that 
$\bigcup_{u \in V(B)}({\bf C}(u))=\mathcal{S}(\varphi(T[h,k]))$. 
Indeed, clearly, the equality may be replaced by a subset.
On the other hand, a satisfying assignment ${\bf g}$ of $\varphi(T[h,k])$
belongs to ${\bf C}_{u({\bf g})}$ and hence to the union of the ${\bf C}$-sets. 
In light  of the union bound, we conclude that 

\begin{equation} \label{eq:mainprob}
\sum_{u \in V(B)} Pr({\bf C}_u) \geq 1
\end{equation}

We proceed to upper bound the probability of ${\bf C}(u)$ 
for a specific $u \in V(B)$. 
If for each ${\bf g} \in \mathcal{S}(\varphi(T[h,k]))$,
$u({\bf g}) \neq u$ then $Pr({\bf C}(u))=0$. 
Otherwise, fix ${\bf g} \in \mathcal{S}(\varphi(T[h,k]))$
such that $u({\bf g})=u$. Let ${\bf C^*}(u)$ be the 
even consisting of all ${\bf h}$ such that ${\bf h}$
is carried through $u$. 
Clearly, ${\bf C}(u) \subseteq {\bf C^*}(u)$, hence we
can upper-bound the latter. 

It follows  from Lemma \ref{lem:elpos},
that ${\bf C^*}(u) \subseteq {\bf Set}(I(P({\bf g}),U_0({\bf g}),U_1({\bf g})))$. 
It follows from combination of Lemma \ref{lem:largeindep2} and Claim \ref{clm:probmain}
that $Pr({\bf Set}(I(P({\bf g}),U_0({\bf g}),U_1({\bf g})))) \leq 2^{|I(P({\bf g}),U_0({\bf g}),U_1({\bf g}))|}$. 
By construction, $|I(P({\bf g}),U_0({\bf g}),U_1({\bf g}))| \geq |U_0|$
whereas by Lemma \ref{lem:largeindep3}, $|U_0| \geq \Omega((1-\alpha) \cdot h \sqrt{k})$.
We thus conclude that $Pr({\bf C^*}(u({\bf g}))) \leq n^{-\Omega((1-\alpha) \cdot k)}$. 
Combining the last equation with
\eqref{eq:mainprob},
we conclude that the number of distinct $u \in V(B)$ is at least
$n^{\Omega((1-\alpha) \sqrt{k})}$ as required.
\end{proof}

\subsection{Proof of Lemma \ref{lem:largeindep3}} \label{sec:harness1}
We define the path $P_0({\bf g})$ as the output of the algorithm below. 

\begin{enumerate}
\item $P \leftarrow ()$ and let $curnode$ be the source of $B$.
\item While $curnode$ is large
   \begin{enumerate}
	 \item If $curnode$ is a $\wedge_{d, \alpha}$-node
	       \begin{enumerate}
				 \item If $curnode$ has a large child let $newnode$
				       be this child (note that such a child is unique). 
							 Otherwise, let $newnode$ be an arbitrary child of $curnode$.
				 \end{enumerate}
	 \item Else
	       \begin{enumerate}
				 \item Let $x$ be the variable labelling $curnode$
				 \item Let $newnode$ be the child of $curnode$ such that
				       $(curnode,newnode)$ is labelled with ${\bf g}(x)$. 
				 \end{enumerate}
	 \item $P \leftarrow P.append(newnode)$
	       (that is, $newnode$ is added to the end of $P$), 
				 $curnode \leftarrow newnode$
	 \end{enumerate}
\item Return $P$. 
\end{enumerate}

Let $\pi_0({\bf g})=\pi(P_0({\bf g}))$. 
Fix an arbitrary permutation $\pi_1({\bf g})$ for which $\pi_0({\bf g})$ 
is a prefix. Then we apply Lemma \ref{lem:largeindep1}
to identify a prefix $\pi^*({\bf g})$ of $\pi_1({\bf g})$
and two subsets $U_0({\bf g})$ and $U_1({\bf g})$ 
of $V(G) \setminus \pi^*({\bf g})$ such that
$(\pi^*({\bf g}),U_0({\bf g}),U_1({\bf g}))$ is a target triple.

We claim that $\pi^*({\bf g})$ is a proper prefix of $\pi_0({\bf g})$. 
Note that the correctness of the claim implies the lemma. Indeed, we just
let $P({\bf g})$ be the prefix of $P_0({\bf g})$ such that 
$\pi(P({\bf g}))=\pi^*({\bf g})$ and observe that 
$(Var(P({\bf g})),U_0({\bf g}),U_1({\bf g}))$ is a desired triple.
It thus remains to prove the claim.

Assume the opposite. Then $\pi_0({\bf g})$ is a prefix (not necessarily proper)
of $\pi^*({\bf g})$. 
Let ${\bf g^*}=Proj({\bf g},\pi^*({\bf g}))$. 
Let $V^*$ be a connected component of $G[U_1({\bf g})]$ of size greater than $n^{\alpha}$. 
Since $U_1({\bf g})$ is not adjacent to $\pi^*({\bf g})$, we conclude that $V^*$
is unfixed by ${\bf g^*}$. 
Let ${\bf g}_0=Proj({\bf g},\pi_0({\bf g}))$. 
As, by assumption, ${\bf g}_0 \subseteq {\bf g^*}$, we conclude
that $V^*$ is unfixed by ${\bf g}_0$ either. 
We further note that ${\bf g}_0={\bf a}(P_0({\bf g}))$
It follows from Lemma \ref{lem:nobreak2} that 
$\mathcal{S}(B)|_{{\bf a}(P_0({\bf g}))}$ does not break $V^*$. 
By Lemma \ref{lem:prelimconc}, $V^* \subseteq Var(B_{u(P_0({\bf g}))})$.
However, this is a contradiction since, by construction, $u(P_0({\bf g}))$
is not a large node, while on the other hand $|V^*|>n^{\alpha}$ thus 
confirming the claim.

\subsection{Proof of Lemma \ref{lem:elpos}} \label{sec:harness2}


Let $Pos(P,U_0,U_1)$ be the subset of $U_0$ 
consisting of all vertices $u$ such that 
all of $N(u) \cap Var(P)$ are assigned positively by ${\bf a}(P)$. 
We denote $U_0 \setminus Pos(P,U_0,U_1)$ by $Neg(P,U_0,U_1)$. 
It is not hard to see that

\begin{equation} \label{eq:indep111}
I(P,U_0,U_1)=(N(Pos(P,U_0,U_1)) \cap Var(P)) \cup Neg(P,U_0,U_1)
\end{equation}

\begin{lemma} \label{lem:nobreak2}
\begin{enumerate}
\item $U_1 \cup Pos(P,U_0,U_1) \subseteq Var(B_{u(P)})$. 
\item For each $x \in U_1 \cup Pos(P,U_1)$,
$\mathcal{S}(B_{u(P)})$ contains an assignment that 
assigns $x$ negatively.
\end{enumerate}
\end{lemma}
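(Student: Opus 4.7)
\medskip

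The plan is to deduce both parts from the structural decomposition of $\mathcal{S}(B)|_{{\bf a}(P)}$ provided by Theorem \ref{th:decomp1}, together with the connectivity tools \ref{lem:nobreak0} and \ref{lem:nobreak1} and the ``mainstream'' consequence \ref{lem:prelimconc}. The key observation that drives everything is that every vertex $x \in U_1 \cup Pos(P, U_0, U_1)$ is \emph{unfixed} by ${\bf a}(P)$: vertices of $U_1$ have no neighbour in $W = Var(P)$ by item~3 of the target triple definition, while a vertex $u \in Pos(P,U_0,U_1)$ has, by definition, all of its neighbours in $Var(P)$ assigned positively by ${\bf a}(P)$, so no such neighbour can fix it; its remaining neighbours lie outside $Var(P)$ and hence do not contribute to ${\bf a}(P)$ at all.

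For part~1, I would process $U_1$ first. Each connected component $V^*$ of $G[U_1]$ has $|V^*| > n^\alpha$ by item~1 of the target triple definition, and it is unfixed by ${\bf a}(P)$ by the observation above. Since ${\bf a}(P)$ can be extended to a satisfying assignment it falsifies no clause, so Lemma \ref{lem:nobreak1} yields that $\mathcal{S}(\varphi(G))|_{{\bf a}(P)} = \mathcal{S}(B)|_{{\bf a}(P)}$ does not break $V^*$. Lemma \ref{lem:prelimconc}, applicable because $P$ is mainstream and $|V^*| > n^\alpha$, then forces $V^* \subseteq Var(B_{u(P)})$. Ranging over all components gives $U_1 \subseteq Var(B_{u(P)})$. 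For a vertex $u \in Pos(P,U_0,U_1)$, I would pick some neighbour of $u$ in $U_1$ (it exists by item~2 of the target triple definition) and let $V^*$ be the connected component of $G[U_1]$ containing it; then $V^* \cup \{u\}$ is connected, unfixed by ${\bf a}(P)$, and of size strictly greater than $n^\alpha$, so the same argument delivers $V^* \cup \{u\} \subseteq Var(B_{u(P)})$, hence $u \in Var(B_{u(P)})$.

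For part~2, given $x \in U_1 \cup Pos(P,U_0,U_1)$, I would invoke Lemma \ref{lem:nobreak0}: since ${\bf a}(P)$ falsifies no clause and $x$ is unfixed by ${\bf a}(P)$, there is a satisfying assignment ${\bf g}$ of $\varphi(G) = f(B)$ extending ${\bf a}(P)$ with ${\bf g}(x) = 0$. Applying Theorem \ref{th:decomp1}, the ``remainder'' ${\bf g} \setminus {\bf a}(P)$ lies in the product $\mathcal{S}(B_{u(P)}) \times \prod_{v \in Alt(P)} \mathcal{S}(B_v) \times \{0,1\}^{V_0(P)}$. Using part~1, $x \in Var(B_{u(P)})$, so the projection of ${\bf g} \setminus {\bf a}(P)$ onto $Var(B_{u(P)})$ is an element of $\mathcal{S}(B_{u(P)})$ that assigns $x$ negatively, as required.

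The only step I expect to require care is part~1 for $Pos(P,U_0,U_1)$: one must ensure that the set $V^* \cup \{u\}$ remains unfixed, which relies crucially on the ``positive'' condition in the definition of $Pos$ combined with the fact that $N(u) \setminus Var(P) \subseteq U_1 \cup (V(G) \setminus (Var(P) \cup Var({\bf a}(P))))$ contributes nothing to fixing. Everything else is a clean combination of the preliminaries.
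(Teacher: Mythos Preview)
Your proof is correct and follows essentially the same route as the paper. The only cosmetic differences are that the paper treats $U_1 \cup Pos(P,U_0,U_1)$ in one shot (observing that its connected components are all of size greater than $n^{\alpha}$ and unfixed by ${\bf a}(P)$), whereas you handle $U_1$ first and then attach each $u \in Pos(P,U_0,U_1)$ to an adjacent $U_1$-component; and for part~2 you spell out the projection argument via Theorem~\ref{th:decomp1}, while the paper simply declares it immediate from Lemma~\ref{lem:nobreak0}.
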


\begin{proof}
It follows from the definition of a target triple
that $U_1 \cup Pos(P,U_0,U_1)$ is the union of connected 
components $V_1, \dots, V_q$ of $G$ of size greater than $n^{\alpha}$
each. Also each $V_i$ is unfixed by ${\bf a}(P)$. 
It follows from Lemma \ref{lem:nobreak2} that 
$\mathcal{S}(B)|_{{\bf a}(P)}$ does not break $V_i$. 
By Lemma \ref{lem:prelimconc}, $V_i \subseteq Var(B_{u(P)})$. 
This completes the proof of the first statement.
The second statement is immediate from Theorem \ref{th:decomp1} and Lemma \ref{lem:nobreak0}.
\end{proof}

\begin{proof}
{\bf of Lemma \ref{lem:elpos}.}
Let $u \in Pos(P,U_0,U_1)$. 
Towards a contradiction, assume existence of a satisfying 
assignment ${\bf g}$ of $\varphi(G)$ that is carried through 
$u(P)$ and assigns negatively some $v \in N(u) \cap Var(P)$. 
As $v \in Var(P)$, $v \notin Var(B_{u(P)})$ due to read-onceness. 
It follows  from combination of Lemma \ref{lem:nobreak2} and
Lemma \ref{lem:prelimconc2} that there is a satisfying assignment 
of $\varphi(G)$ that assigns negatively both $u$ and $v$ thus falsifying
the clause $(u \vee v)$.

Assume now that $u \in Neg(P,U_0,U_1)$. 
Towards a contradiction, we assume existence of a satisfying assignment ${\bf g}$
carried through $u(P)$ that assigns $u$ negatively. 
Assume first that $u \notin Var(B_{u(P)})$. 
Let $v \in N_u \cap U_1$ (existing by definition of a target triple). 
By Lemma \ref{lem:nobreak2}, 
there is an assignment of $\mathcal{S}(B_{u(P)})$ that
assigns $v$ negatively. 
It follows from Lemma \ref{lem:prelimconc2} that there is a satisfying assignment 
of $\varphi(G)$ that assigns negatively both $u$ and $v$ thus falsifying
the clause $(u \vee v)$.

It remains to assume that $u \in Var(B_{u(P)})$. 
Let $P'$ be a target path ending at $u(P)$ such that ${\bf a}(P') \subseteq {\bf g}$. 
By Theorem \ref{th:decomp1} applied to $P'$ there is ${\bf h} \in \mathcal{S}(B_{u(P')})$
that assigns $u$ negaitvely. 
On the other hand, since $u \in Neg(P,U_0,U_1)$, there is $w \in N(u) \cap Var(P)$
that is assigned negatively by ${\bf a}(P)$. Due to read-onceness,
$w \notin Var(B_{u(P)})$. By assumption about ${\bf a}(P)$ it can be 
extended to a satisfying assignment ${\bf a}$ of $\varphi(G)$. 
Clearly, this assignment is carried through $u(P)$. 
It follows from Lemma \ref{lem:prelimconc2}, that there is a satisfying assignment 
of $\varphi(G)$ that assigns negatively both $u$ and $w$ thus falsifying
the clause $(u \vee w)$. 
\end{proof}

\section{Proof of Lemma \ref{lem:largeindep1}} \label{sec:maincomb}

We start from terminology expansion, we will then provide an example to illustrate the 
new notions. 

The \emph{height} of a node $t$ denoted by $height(t)$ of $T[h]$ is its distance 
to a leaf. That is, the leaves have height $0$, their parents have 
height $1$ and so on. We stratify $V(T[h])$ according to heights of the nodes. 
In particular $V_a(T[h])$ is the subset of $T[h]$ consisting 
of all the nodes of height $a$. 
Next $V_{>a}(T[h])$ is the set of all the nodes of height greater than $a$. 
The sets $V_{<a}(T[h])$, $V_{ \geq a}(T[h])$ and $V_{\leq a}(T[h])$
are defined accordingly. 

We extend this terminology to $T[h,k]$. 
In particular, $V_a(T[h,k])=V_a(T[h]) \times [k]$. 
The sets $V_{>a}(T[h,k])$, $V_{<a}(T[h,k])$,
$V_{\leq a}(T[h,k])$, $V_{\geq a}(T[h,k])$ are defined 
accordingly. 

In order to introduce a more refined view of the above
sets, we shorten the notation. In particular,
we use $V(h)$ instead of $V(T[h])$. 
The shortenings $V_a(h)$, $V_{>a}(h)$, $V_{<a}(h)$,
$V_{\leq a}(h)$ and $V_{\geq a}(h)$ are defined accordingly. 
We introduce analogous shortenings for the subsets of $V(T[h,k])$. 
In particular, we denote $V(T[h,k])$ by $V(h,k)$. 
The shortenings $V_a(h,k)$, $V_{>a}(h,k)$, $V_{<a}(h,k)$,
$V_{\leq a}(h,k)$ and $V_{\geq a}(h,k)$ are defined accordingly.

Let $t \in V(h)$. We denote $V(T[h]_t)$ by $V(h,t)$. 
We denote $V(h,t) \cap V_a(h)$ by $V_a(h,t)$. 
To put it differently, $V_a(h,t)$ is the set of descendants 
of $t$ of height $a$. 
This context naturally leads to definition of sets $V_{>a}(h,t)$, $V_{<a}(h,t)$,
$V_{\geq a}(h,t)$  and $V_{\leq a}(h,t)$. 

The notation in the previous paragraph naturally extends to $T[h,k]$. 
In particular, we denote $V(h,t) \times [k]$ by $V(h,k,t)$. 
Next, we denote $V(h,k,t) \cap V_a(h,k)$ by $V_a(h,k,t)$. 
This context naturally leads to definition of sets $V_{>a}(h,k,t)$, $V_{<a}(h,k,t)$,
$V_{\geq a}(h,k,t)$  and $V_{\leq a}(h,k,t)$.

Now, we are moving towards defining families of sets as above. 
First let ${\bf S}_a(h)=\{V(h,t)| t \in V_a(h)\}$. 
Next, ${\bf S}_a(h,t)=\{V(h,t')|t' \in V_a(h,t)\}$.
These notions naturally extend to $T[h,k]$. 
In particular, ${\bf S}_a(h,k)=\{U \times [k]|U \in {\bf S}_a(h)\}$
and ${\bf S}_a(h,k,t)=\{U \times [k]| U \in {\bf S}_a(h,t)\}$. 

For each $U \in {\bf S}_a(h)$, $root(U)$ is $t \in V_a(h)$ such that
$U=V(h,t)$. 
Accordingly, for each $U \in {\bf S}_a(h,k)$, 
$root(U)$ is $t \in V(h)$ such that $U=V(h,k,t)$. 


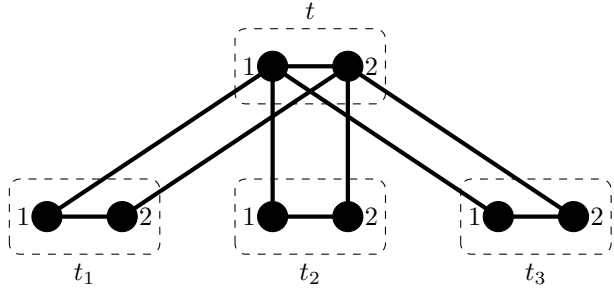
\begin{figure}[h]
\begin{tikzpicture}

\draw [fill=black]  (1,1) circle [radius=0.2];
\draw [fill=black]  (2,1) circle [radius=0.2];
\draw [dashed, rounded corners] (0.5,0.5) rectangle (2.5,1.5);
\node [left]  at (0.9,1)  {\bf $1$};
\node [right]  at (2.1,1)  {\bf $2$};
\node [below]  at (1.5,0.5)  {\bf $t_1$};

\draw [fill=black]  (4,1) circle [radius=0.2];
\draw [fill=black]  (5,1) circle [radius=0.2];
\draw [dashed, rounded corners] (3.5,0.5) rectangle (5.5,1.5);
\node [left]  at (3.9,1)  {\bf $1$};
\node [right]  at (5.1,1)  {\bf $2$};
\node [below]  at (4.5,0.5)  {\bf $t_2$};

\draw [fill=black]  (7,1) circle [radius=0.2];
\draw [fill=black]  (8,1) circle [radius=0.2];
\draw [dashed, rounded corners] (6.5,0.5) rectangle (8.5,1.5);
\node [left]  at (6.9,1)  {\bf $1$};
\node [right]  at (8.1,1)  {\bf $2$};
\node [below]  at (7.5,0.5)  {\bf $t_3$};

\draw [ultra thick] (1,1) --(2,1);

\draw [ultra thick] (4,1) --(5,1);

\draw [ultra thick] (7,1) --(8,1);

\draw [fill=black]  (4,3) circle [radius=0.2];
\draw [fill=black]  (5,3) circle [radius=0.2];
\draw [dashed, rounded corners] (3.5,2.5) rectangle (5.5,3.5);
\node [left]  at (3.9,3)  {\bf $1$};
\node [right]  at (5.1,3)  {\bf $2$};
\node [above]  at (4.5,3.5)  {\bf $t$};

\draw [ultra thick] (4,3) --(5,3);

\draw [ultra thick] (1,1) --(4,3);
\draw [ultra thick] (2,1) --(5,3);

\draw [ultra thick] (4,1) --(4,3);
\draw [ultra thick] (5,1) --(5,3);

\draw [ultra thick] (7,1) --(4,3);
\draw [ultra thick] (8,1) --(5,3);
\end{tikzpicture}
\caption{$T[1,2]$: a closer look}
\label{fig:treeproductperm}
\end{figure}
Let us consider several examples of the above terminology.
Figure \ref{fig:treeproductperm} illustrates $T[1,2]$ 
with the root of $T[1]$ being $t$ and children
$t_1,t_2,t_3$. The vertices of $P_k$ are $1,2$. 
First, $V_0(1)=\{t_1,t_3,t_3\}$, $V_{>0}(1)=\{t\}$. 
Further on, $V_0(1,2)=\{(t_1,1),(t_1,2),(t_2,1),(t_2,2),(t_3,1),(t_3,2)\}$
and $V_{>0}(1,2)=\{(t,1),(t,2)\}$. 
Next, ${\bf S}_0(1,2)=\{\{(t_1,1),(t_1,2)\},\{(t_2,1),(t_2,2)\},\{(t_3,1),(t_3,2)\}\}$
and for each $i \in [3]$, $root(\{(t_i,1),(t_i,2)\})=t_i$. 
Finally, ${\bf S}_{>0}(1)=\{\{t,t_1,t_2,t_3\}\}$.

\begin{definition}
Let $\pi$ be a permutation of $V(h,k)$. 
Let $h_0<h_1 \leq h$. 
We say that $\pi$ is $h_0,h_1$-\textsc{td} (where `td' stands for `top down')
if the following holds for each $t \in V_{h_1}(h)$. 
Let $\pi_0$ be the shortest prefix of $\pi$
such that $|\pi_0 \cap V_{>h_0}(h,k,t)| \geq k$. 
Then there is $U \in {\bf S}_{h_0}(h,k,t)$ such that
$\pi_0 \cap U=\emptyset$. 
\end{definition}

\begin{definition}
Let $\pi$ be a permutation of $V(h,k)$. 
Let $h_0<h_1 \leq h$. 
We say $\pi$ is $h_0,h_1$-\textsc{bu} (where `bu' stands for `bottom  up')
if there is $t \in V_{h_1}(h)$ such that there is a prefix 
$\pi_0$ of $\pi$ for which the following two statements hold. 
\begin{enumerate}
\item For each $U \in {\bf S}_{h_0}(h,k,t)$, $\pi_0 \cap U \neq \emptyset$. 
\item $\pi_0 \cap V_{>h_0}(h,k,t)=\emptyset$. 
\end{enumerate}
\end{definition}

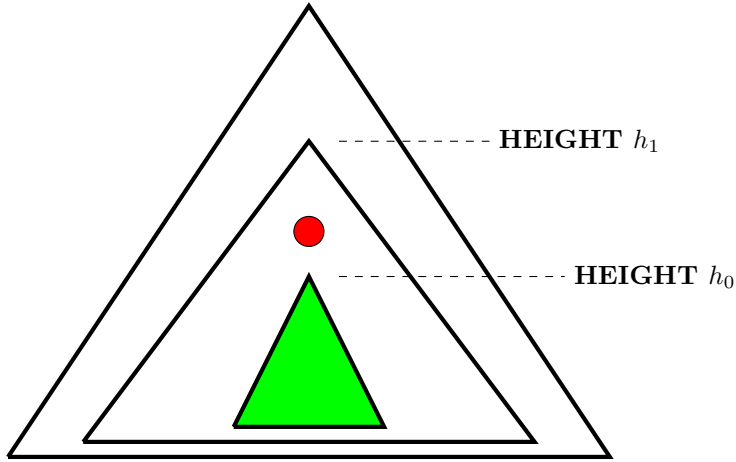
\begin{figure}[h]
\begin{tikzpicture}
\draw [ultra thick] (1,1) --(9,1)--(5,7)--(1,1); 
\draw [ultra thick] (2,1.2) --(8,1.2)--(5,5.2)--(2,1.2); 
\draw [ultra thick,fill=green] (4,1.4) --(6,1.4)--(5,3.4)--(4,1.4); 
\draw [fill=red]  (5,4) circle [radius=0.2];
\draw [dashed] (5.4,5.2)--(7.4,5.2); 
\draw [dashed] (5.4,3.4)--(8.4,3.4); 
\node [right]  at (7.4,5.2)  {\bf HEIGHT $h_1$};
\node [right]  at (8.4,3.4)  {\bf HEIGHT $h_0$};
\end{tikzpicture}
\caption{Top down permutation: an intuitive illustration}
\label{fig:permtd}
\end{figure}

\begin{figure}[h]
\begin{tikzpicture}
\draw [ultra thick] (1,1) --(9,1)--(5,7)--(1,1); 
\draw [ultra thick] (1.5,1.2) --(3.5,1.2)--(2.5,3)--(1.5,1.2); 
\draw [ultra thick] (4,1.2) --(6,1.2)--(5,3)--(4,1.2); 
\draw [ultra thick] (6.5,1.2) --(8.5,1.2)--(7.5,3)--(6.5,1.2); 
\draw [ultra thick,fill=green] (2.33,3) --(7.67,3)--(5,7)--(2.33,3); 
\draw [fill=red]  (2.5,2) circle [radius=0.2];
\draw [fill=red]  (5,2) circle [radius=0.2];
\draw [fill=red]  (7.5,2) circle [radius=0.2];
\draw [dashed] (5.4,7)--(7.4,7); 
\node [right]  at (7.4,7)  {\bf HEIGHT $h_1$};
\draw [dashed] (8,3)--(9,3); 
\node [right]  at (9,3)  {\bf HEIGHT $h_0$};
\end{tikzpicture}
\caption{Bottom up permutation: an intuitive illustration}
\label{fig:permbu}
\end{figure}
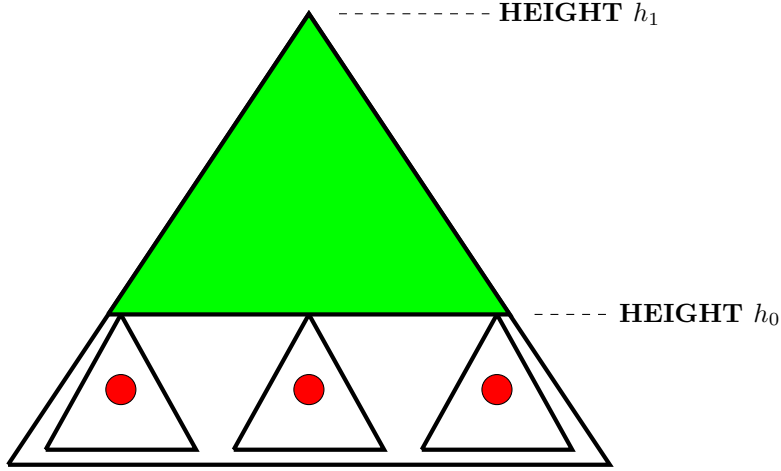

Figure \ref{fig:permtd} provides intuition for the 
notion of an $h_0,h_1$-\textsc{td} permutation. 
The inner white triangle depicts $V_{h_1}(h,k,t)$ (there may many such sets,
one for each $t \in V_{h_1}(h)$). 
The red circle inside the triangle denotes vertices of $\pi_0$. 
The green triangle at height $h_0$ denotes $U \in {\bf S}_{h_0}(h,k,t)$ such that
$\pi_0 \cap U=\emptyset$.

Intuition for the notion of a $h_0,h_1$-\textsc{bu} permutation is illustrated 
on Figure{fig:permbu}. The outermost trianlge on the picture 
depicts a set $V(h,k,t)$ for some $t \in V_{h_1}(h)$ that witnesses the properties
of $\pi_0$ as in the definition. In particular, the sets $U \in {\bf S}_{h_0}(h,k,t)$
are depicted as triangles at height $h_0$ and the red circle in each of them
means a non-empty intersection with $\pi_0$. 
The green triangle depicts $V_{>h_0}(h,k,t)$ and the green colour means 
the empty intersection with $\pi_0$. 

Note the asymmetry of the above two definitions.
To be $h_0,h_1-\textsc{td}$ something must hold for each $t \in V_{h_1}(h,k)$
whereas $h_0,h_1$-\textsc{bu} requires some conditions to hold just for a single 
$t \in V_{h_1}(h,k)$.

Below we demonstrate that these notions can be thought as dual. 

\begin{lemma} \label{lem:winwin}
Let $\pi$ be a permutation of $V(h,k)$. 
Let $h_0+ \log_3 k+1<h_1 \leq h$.
Suppose that $\pi$ is not $h_0,h_1$-\textsc{td}. 
Then $\pi$ is $h_0,h_2$-\textsc{bu} where 
$h_2=h_1-\lceil \log_3 k \rceil$.  
\end{lemma}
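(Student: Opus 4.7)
The plan is to mine the failure of $h_0,h_1$-\textsc{td} for a specific prefix $\pi_0$ that already hits every low-level subtree inside some upper subtree, and then to descend from height $h_1$ to height $h_2$ via a pigeonhole on the $k$ high elements of $\pi_0$.

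Concretely, I take the $t\in V_{h_1}(h)$ and shortest prefix $\pi_0$ of $\pi$ with $|\pi_0\cap V_{>h_0}(h,k,t)|\geq k$ supplied by the negation of $h_0,h_1$-\textsc{td}; by minimality $\pi_0$ contains exactly $k$ elements of $V_{>h_0}(h,k,t)$ and ends at such an element. I list these elements $v_1,\dots,v_k$ in the order they appear in $\pi$, so that $v_k$ is the last element of $\pi_0$. For each $v_i=(w_i,j_i)$ with $height(w_i)\leq h_2$, the element lies in $V(h,k,t')$ for a unique $t'\in V_{h_2}(h,t)$, which I call the subtree \emph{affected} by $v_i$; elements with $height(w_i)>h_2$ affect no subtree at height $h_2$. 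Deliberately discarding $v_k$, the collection $\{v_1,\dots,v_{k-1}\}$ affects at most $k-1$ subtrees, while $|V_{h_2}(h,t)|=3^{h_1-h_2}=3^{\lceil\log_3 k\rceil}\geq k$, so a pigeonhole produces some $t^*\in V_{h_2}(h,t)$ unaffected by any of $v_1,\dots,v_{k-1}$.

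I now set $\pi_0^*$ to be the prefix of $\pi$ ending just before the first element of $V_{>h_0}(h,k,t^*)$ in $\pi$; such an element exists because $(t^*,1)\in V_{>h_0}(h,k,t^*)$ and $\pi$ enumerates $V(h,k)$. Item~(2) of the $h_0,h_2$-\textsc{bu} definition holds by construction. For item~(1), since no $v_i$ with $i<k$ lies in $V_{>h_0}(h,k,t^*)$, the first high-for-$t^*$ element of $\pi$ is either $v_k$ itself (Case~A) or occurs strictly after the last position of $\pi_0$ (Case~B); in Case~A one has $\pi_0^*=\pi_0\setminus\{v_k\}$, in Case~B one has $\pi_0^*\supseteq\pi_0$, so in both cases $\pi_0^*\supseteq\pi_0\setminus\{v_k\}$. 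Because every $U\in\mathbf{S}_{h_0}(h,k,t^*)\subseteq\mathbf{S}_{h_0}(h,k,t)$ is hit by some element of $\pi_0$, and $v_k$ sits at height $>h_0$ so belongs to no such $U$, the hitting witness for $U$ survives in $\pi_0\setminus\{v_k\}\subseteq\pi_0^*$. Thus $(t^*,\pi_0^*)$ witnesses $h_0,h_2$-\textsc{bu}.

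The main obstacle is the tightness of the pigeonhole when $k$ is a power of $3$: then $3^{\lceil\log_3 k\rceil}=k$ exactly, so a naive count allows all $k$ subtrees in $V_{h_2}(h,t)$ to be affected by $v_1,\dots,v_k$. The remedy is the two-step trick of dropping $v_k$ from both the pigeonhole count and the prefix; this is legitimate precisely because $v_k$ is high and therefore sits in no set of $\mathbf{S}_{h_0}(h,k,t^*)$, so excising it leaves the low-level hitting property intact. The hypothesis $h_0+\log_3 k+1<h_1$ is used only to secure $h_2>h_0$, so that $V_{>h_0}(h,k,t^*)$ and $\mathbf{S}_{h_0}(h,k,t^*)$ are non-trivial and the \textsc{bu} property is well-posed.
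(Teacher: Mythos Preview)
Your argument is correct and follows the same route as the paper: extract from the failure of $h_0,h_1$-\textsc{td} a vertex $t\in V_{h_1}(h)$ and the minimal prefix $\pi_0$ hitting all of $\mathbf{S}_{h_0}(h,k,t)$, drop its last element (which lies in $V_{>h_0}(h,k,t)$ by minimality), and pigeonhole the remaining $k-1$ high elements against the $3^{\lceil\log_3 k\rceil}\geq k$ subtrees at height $h_2$ to locate an untouched $t^*$. The only difference is cosmetic: the paper uses $\pi_1=\pi_0\setminus\{v_k\}$ directly as the \textsc{bu} witness, whereas you build the maximal prefix $\pi_0^*$ avoiding $V_{>h_0}(h,k,t^*)$ and then argue $\pi_0^*\supseteq\pi_1$; your extra step is unnecessary but harmless.
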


\begin{proof}
First of all, we observe that by assumption
about $h_1$, $h_2>h_0$ so the corresponding 
constraint in the definition of $h_0,h_2$-\textsc{bu}
is satisfied. 

By assumption of the lemma, there is $t_0 \in V_{h_1}(h,k)$
such that the following holds. 

Let $\pi_0$ be the shortest prefix of $\pi$
such that $\pi_0 \cap V_{>h_0}(h,k,t_0) \geq k$. 
Then $\pi_0$ intersects with each $U \in {\bf S}_{h_0}(h,k,t_0)$. 
Let $\pi_1$ be the immediate predecessor of $\pi_0$
(that is obtained from $\pi_0$ by removal of the last element). 
By the minimality assumption about $\pi_0$, the removed element 
belongs to  $V_{>h_0}(h,k,t_0)$. 
That is, $\pi_1 \cap V_{>h_0}(h,k,t_0)<k$ while 
$\pi_1$ still intersects each $U \in {\bf S}_{h_0}(h,k,t_0)$.

By definition of $h_2$, 
$|{\bf S}_{h_2}(h,k,t_0)| \geq k$. 
Therefore, by the pigeonhole principle, 
there is $t_1 \in V_{h_2}(h,k,t_0) \subseteq V_{h_2}(h,k)$
such that $\pi_1$ intersects with each
$U \in {\bf S}_{h_0}(h,k,t_1)$ and does not intersect with
$V_{>h_0}(h,k,t_1)$ thus confirming the lemma. 
\end{proof}


Let us now outline the plan for the rest of this section. 
We will first prove Lemma \ref{lem:largeindep1} under assumption
that $\pi$ is a $h_0,h_1$-\textsc{td} permutation for suitably
selected parameters $h_0$ and $h_1$. Then we prove the lemma under assumption
$\pi$ is a $h_0,h_2$-\textsc{bu} permutation where $h_2=h_1-\lceil log_3 k \rceil$. 
Lemma \ref{lem:winwin} will then imply Lemma \ref{lem:largeindep1} for each permutation
$\pi$ of $T[h,k]$ for sufficiently large $h$ and $k$. 
The proof under assumption that $\pi$ is a $h_0,h_1$-\textsc{td}
permutation is a consequence of Lemma \ref{lem:tripletopdown} which, in turn, follows
from the combinatorial statement of Theorem \ref{th:maintopdowncomb}. 
The part of the section, starting from the next paragraph and up to Theorem \ref{th:maintopdowncomb}
provides terminology and auxiliary statements required for the proof of the theorem. 
The proof of Lemma \ref{lem:largeindep1} under assumption that $\pi$ is a $h_0,h_2$-\textsc{bu}
permutation is a consequence of Lemma \ref{lem:triplebottomup}. 

\begin{definition}
Let $h_0<h$ and let 
$U \subseteq V(h,k)$.
Let $P$ be a path of $T[h,k]$. 
We say that $P$ is a $U,h_0$-path if we can
name one terminal vertex of $P$ $first(P)$,
the other $last(P)$ (if $P$ consists of a single
vertex then $first(P)=last(P)$) so that the following 
holds.
\begin{enumerate}
\item $V(P) \setminus last(P) \subseteq V_{>h_0}(h,k)$. 
\item $last(P) \in V_{h_0}(h,k)$. Let $W \in {\bf S}_{h_0}(h,k)$ 
such that $last(P) \in W$. We refer to $W$ as $anchor(P)$. 
\item $(V(P) \cup anchor(P)) \cap U=\emptyset$. 
\item $first(P)$ is adjacent to $U$ and $V(P) \setminus \{first(P)\}$
is not adjacent to $U$. 
\end{enumerate} 
\end{definition} 

We refer to the set of all $U,h_0$-paths as ${\bf P}(U,h_0)$
Further on for $t \in V(h)$, we let ${\bf P}(U,h_0,t)$
be the subset of ${\bf P}(U,h_0)$ consisting of all $P$
such that $V(P) \subseteq V(h,k,t)$ (note that this implies that
$anchor(P) \subseteq V(h,k,t)$). 

Let ${\bf P} \subseteq {\bf P}(U,h_0)$. 
We let $first({\bf P})=\{first(P)|P \in {\bf P}\}$

\begin{definition}
With the notation as above, ${\bf P}$ is \emph{independent}
if the following conditions hold. 
\begin{enumerate}
\item $first({\bf P})$ is independent. 
\item $N(first({\bf P})) \cap U$ is independent. 
\item Each vertex of $N(first({\bf P})) \cap U$ is
adjacent to exactly one vertex of $first({\bf P})$. 
\end{enumerate}
\end{definition}




\begin{definition}
Let $U \subseteq V(h,k)$.
Then $Index(U)=\{i | \exists t \in V(h), (t,i) \in U\}$,
$Bags(U)=\{t| \exists i \in [k] (t,i) \in U\}$. 
Further on, we define 
$Index_t(U)=\{i| (t,i) \in U\}$ and
$Bags_i(U)=\{t| (t,i) \in U\}$. 
\end{definition}

Further on, we extend our notation  about paths. 
Let $t_1,t_2 \in V(h)$. We denote the unique path
between $t_1$ and $t_2$ by $P(t_1,t_2)$. 
Further on, let $P=t_1, \dots, t_q$ be a path 
of $T[h]$. We denote the path $(t_1,i), \dots, (t_q,i)$
of $T[h,k]$ by $P \times i$. 
Finally, let $P=i, \dots j$ be a path of $P_k$. 
We denote the path $(t,i) , \dots, (t,j)$ by $t \times P$. 

\begin{lemma} \label{lem:indepindex1}
Let $U \subseteq V(h,k)$. 
Let $U_0 \subseteq V(h)$ be a connected set of $T[h]$.  
Let $W_0=U_0 \times [k]$. 
Let $i \in Index(U \cap W_0 \cap V_{>h_0}(h,k))$ and 
let $S \in {\bf S}_{h_0}(h,k)$ such that $S \subseteq W_0$
and $S \cap  U=\emptyset$. 
Then there is $P \in {\bf P}(U,h_0)$ such that 
$V(P) \subseteq W_0 \cap (V(h) \times i)$.  
\end{lemma}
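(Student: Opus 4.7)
The plan is to construct the desired path $P$ by lifting a tree path in $T[h]$ into the slice $V(h) \times \{i\}$ and then truncating it at the first vertex that becomes adjacent to $U$. First I would fix $t_S = root(S)$, so that $S = V(h,t_S) \times [k]$ with $t_S \in V_{h_0}(h)$; since $S \subseteq W_0 = U_0 \times [k]$, in particular $t_S \in U_0$. By hypothesis there exists $t^* \in V_{>h_0}(h)$ with $(t^*,i) \in U \cap W_0$, so $t^* \in U_0$ as well. Because $(T[h])[U_0]$ is connected and $T[h]$ is a tree, the unique tree path $t_S = v_0, v_1, \ldots, v_m = t^*$ in $T[h]$ lies entirely in $U_0$.

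The key geometric observation is that every vertex $v_\ell$ with $\ell \geq 1$ has height strictly greater than $h_0$: this tree path goes from $t_S$ upward to the lowest common ancestor of $t_S$ and $t^*$ and then downward to $t^*$, so heights along the upward segment start at $h_0$ and strictly increase, while heights along the downward segment stay at least $height(t^*) > h_0$. Once this is noted, the natural candidate in the slice is the sequence $(v_0,i),(v_1,i),\ldots,(v_m,i)$, which already meets the height requirement for non-last vertices and whose endpoint $(t_S,i)$ lies in $S \in \mathbf{S}_{h_0}(h,k)$.

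Next I would define $j$ as the smallest index in $\{0,\ldots,m\}$ such that $(v_j,i)$ is adjacent to $U$ in $T[h,k]$; such a $j$ exists because $(v_{m-1},i)$ is adjacent to $(v_m,i) = (t^*,i) \in U$. I then set $P = (v_0,i),\ldots,(v_j,i)$ with $last(P) = (t_S,i)$ and $first(P) = (v_j,i)$ (collapsing to a single vertex if $j=0$, which is explicitly permitted). The conditions of a $U,h_0$-path are then routine to verify: the height observation gives condition (1); condition (2) holds with $anchor(P) = S$; the minimality of $j$ together with $S \cap U = \emptyset$ yields condition (3); and condition (4) is built into the definition of $j$. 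Finally $V(P) \subseteq U_0 \times \{i\} = W_0 \cap (V(h) \times \{i\})$ since every $v_\ell$ lies in $U_0$.

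The one delicate point, which I expect to be the main obstacle, is ensuring that $(v_j,i) \notin U$ (required for condition (3)). This is why I choose $j$ as the minimum over the property \emph{adjacent to $U$} rather than \emph{in $U$}: if $(v_j,i)$ were itself in $U$, then either $j=0$, which is impossible since $(t_S,i) \in S$ and $S \cap U = \emptyset$, or $j \geq 1$, in which case $(v_{j-1},i)$ would already be adjacent to $U$ via the tree edge $\{(v_{j-1},i),(v_j,i)\}$, contradicting minimality. This rules out the bad case and pins down $first(P)$ as a vertex that is adjacent to $U$ yet safely outside it.
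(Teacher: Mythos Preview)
Your proof is correct and follows essentially the same approach as the paper's. The paper picks $t_1$ closest to $t_0=root(S)$ among the bags meeting $U$ at index $i$, then truncates at the last vertex adjacent to $U$ when walking from the $t_1$-end; you pick an arbitrary $t^*$ and truncate at the first vertex adjacent to $U$ when walking from the $t_S$-end. These are the same construction read in opposite directions, and your single minimality argument for $j$ neatly handles both disjointness from $U$ and condition~(4) at once, whereas the paper uses two separate minimality choices.
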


\begin{proof}
Let $t_0=root(S)$. 
Let $t_1 \in Bags_i(U \cap W_0 \cap V_{>h_0}(h,k))$
be the closest to $t_0$. Let $Q_0=P(t_1,t_0) \times i$
and let $Q_1=Q_0 \setminus (t_1,i)$. Note that, since 
$Q_0$ consists of at least two vertices, $Q_1$ is non-empty. 
Due to the minimality of $t_1$, $V(Q_1) \cap U=\emptyset$. 

Let $first(Q_1)$ be the immediate of successor of $(t_1,i)$ in $Q_0$
and $last(Q_1)$ be the other terminal vertex of $Q_1$ (the two terminal vertices
may coincide). $first(Q_1)$ is adjacent to $U$, namely to $(t_1,i)$. 
However, there may be other vertices adjacent to $U$. 
Let $w$ be the last vertex of $Q_1$ adjacent to $U$ as the path is explored starting from 
$first(Q_1)$. Let $Q_2$ be the subpath of $Q_1$ between $w$ and $last(Q_1)$. 
Then $Q_2$ is a required path with $first(Q_2)=w$, $last(Q_2)=last(Q_1)$ and
$anchor(Q_2)=S$.
\end{proof}

\begin{lemma} \label{lem:indepindex2}
Let $U \subseteq V(h,k)$. 
Let $U_0 \subseteq V(h)$ be a connected set of $T[h]$. 
Let $W_0=U_0 \times [k]$.  
Let $S \in {\bf S}_{h_0}(h,k)$ such that $S \subseteq W_0$ and $S \cap U=\emptyset$. 
Let $I=Index(U \cap W_0 \cap V_{>h_0}(h,k))$. 

Then there is an independent ${\bf P} \subseteq {\bf P}(U,h_0)$ 
such that $V(P) \subseteq W_0$ for each $P \in {\bf P}$
and $|{\bf P}| \geq |I|/7$. 
\end{lemma}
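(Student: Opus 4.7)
The plan is to apply Lemma \ref{lem:indepindex1} to each index $i \in I$ to obtain a $U,h_0$-path $P_i$ with $V(P_i) \subseteq W_0 \cap (V(h) \times \{i\})$, and then to extract from the resulting family $\{P_i : i \in I\}$ an independent subfamily of size at least $|I|/7$. The key structural observation is that each $P_i$ is confined to the horizontal slice $V(h) \times \{i\}$, and the only edges of $T[h,k]$ crossing different slices join vertices whose slice-indices differ by exactly $1$. Consequently, selecting indices that are sufficiently far apart suppresses all interference between paths.

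First I would verify that each singleton $\{P_i\}$ is already independent. Writing $first(P_i) = (t_i, i)$, the neighbors of $(t_i,i)$ in $T[h,k]$ are $(t_i, i \pm 1)$ together with $(t', i)$ for each $t'$ adjacent to $t_i$ in $T[h]$. Any two of these neighbors are non-adjacent in $T[h,k]$: the two vertical neighbors differ in slice-index by $2$; a vertical and a horizontal neighbor share neither coordinate; and for two horizontal neighbors $(t', i)$ and $(t'', i)$, an edge would require $t'$ and $t''$ adjacent in $T[h]$, but two distinct neighbors of a common tree-vertex lie at distance $2$ in the tree. Hence $N(first(P_i)) \cap U$ is independent, and each of its elements is adjacent to the only member of $\{first(P_i)\}$, so conditions (1)--(3) hold for the singleton automatically.

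The heart of the argument is a pairwise claim: if $i, j \in I$ satisfy $|i - j| \geq 4$, then $\{P_i, P_j\}$ is independent. Condition (1) holds because $first(P_i) \not\sim first(P_j)$ whenever their slice-indices differ by more than $1$. For condition (3), a common $U$-neighbor of $(t_i, i)$ and $(t_j, j)$ would have slice-index in $\{i-1, i, i+1\} \cap \{j-1, j, j+1\}$, which is empty as soon as $|i - j| \geq 3$. For condition (2), if $u \in N(first(P_i)) \cap U$ and $v \in N(first(P_j)) \cap U$ were adjacent, then their slice-indices would have to agree or differ by $1$; but they lie in $\{i-1, i, i+1\}$ and $\{j-1, j, j+1\}$ respectively, which allows neither when $|i-j| \geq 4$.

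Each of the three independence conditions is a pairwise property, since any violation is witnessed by at most two paths of the family; hence pairwise independence of a family implies its independence. Partitioning $I$ by residue modulo $4$ and choosing a largest class $I'$ yields $|I'| \geq |I|/4 \geq |I|/7$ indices any two of which differ by a nonzero multiple of $4$, and therefore by at least $4$. The family ${\bf P} = \{P_i : i \in I'\}$ then satisfies all the required conditions. No step presents a genuine obstacle: the entire argument is slice-based casework followed by pigeonhole.
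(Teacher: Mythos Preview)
Your proposal is correct and follows essentially the same approach as the paper: apply Lemma~\ref{lem:indepindex1} to each $i\in I$ to get slice-confined paths, then keep only paths whose indices are pairwise at distance at least $4$ and verify independence from this separation. The only cosmetic difference is that the paper selects the well-separated subset greedily (discarding at most $7$ indices per chosen one, hence $|I|/7$), whereas you use residue classes modulo $4$ and pigeonhole to obtain $|I|/4\geq |I|/7$; your independence verification is also spelled out more carefully than the paper's, but the underlying argument is identical.
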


\begin{proof}
Let $I_0$ be a subset of $I$ of size at least $|I|/7$
such that for every two distinct $i_1,i_2 \in I_0$,
$|i_1-i_2| \geq 4$. 
Such a set can be formed in a greedy way. 
Take an element $i$ into $I_0$ and then remove all 
$u$ itself and all $j$ such that $|i-j| \leq 3$. 
Clearly, at most $7$ elements will be removed at the price of taking one. 

Next for each $i \in I_0$, let $P_i$ be a path as 
in Lemma \ref{lem:indepindex2}. Let ${\bf P}=\{P_i|i \in I_0\}$. 
It remains to show that these paths are independent. 
To this end, we observe that the distance between any two distinct
elements of $first({\bf P})$ is at least $4$. 
It follows that $first({\bf P})$ is an independent set, 
two distinct vertices of $first({\bf P})$ are not adjacent to the 
same element and vertices of $U$ adjacent to distinct elements of $first({\bf P})$
are not adjacent either. It remains to note that two distinct vertices adjacent
to the same vertex of $first({\bf P})$ are not adjacent because $T[h,k]$
is a triangle-free graph. 
\end{proof}

\begin{definition}
Let $t_0,t_1 \in V(h)$, $i,j \in [k]$. 
We refer to the path $t_0 \times (i \dots,j)+P(t_0,t_1) \times j$
as $P(t_0,t_1,i,j)$. 
To put it differently, the path first moves from 
$(t_0,i)$ to $(t_0,j)$ within $t_0 \times [k]$
and then from $(t_0,j)$ to $(t_1,j)$ along $P(t_0,t_1) \times j$. 
\end{definition}

\begin{lemma} \label{lem:indepbag1}
Let $U \subseteq V(h,k)$. 
Let $U_0 \subseteq V(h)$ be a connected set of $T[h]$. 
Let $W_0=U_0 \times [k]$.  
Let $S \in {\bf S}_{h_0}(h,k)$ such that $S \subseteq W_0$ and $S \cap U=\emptyset$. 
Let $I=Index(U \cap W_0 \cap V_{>h_0}(h,k))$. 
Let $B=Bags(U \cap W_0 \cap V_{>h_0}(h,k))$. 
Let $j \in [k] \setminus I$ such that for each $a \in I$, $|j-a| \geq 2$.

Let $t_0=root(S)$. 
Let $t_1 \in B$.
Let $i \in [k]$ such that $(t_1,i) \in U$ and
$i$ is closest to $j$. 
Then there is $P \in {\bf P}(U,h_0)$ 
that is a proper subpath of $P(t_1,t_0,i,j)$ and
that $first(P) \in t_1 \times [k]$.
\end{lemma}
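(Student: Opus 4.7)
The plan mirrors the proof of Lemma \ref{lem:indepindex1}, but with one additional analysis needed to locate $first(P)$ on the horizontal portion of the path. Let $Q=P(t_1,t_0,i,j)$; it starts at $(t_1,i)\in U$, traverses the horizontal segment $t_1\times[k]$ from $(t_1,i)$ to $(t_1,j)$, and then descends vertically along $P(t_0,t_1)\times\{j\}$ to $(t_0,j)$. Set $Q_1=Q\setminus\{(t_1,i)\}$; the endpoint $(t_0,j)$ is the unique height-$h_0$ vertex of $Q_1$ and lies in $S$, so $S$ will serve as the anchor of any suitable subpath. The check that $V(Q_1)\cap U=\emptyset$ is routine: horizontal vertices $(t_1,\ell)$ with $\ell$ strictly between $i$ and $j$ are excluded by the minimality of $|i-j|$ over $\ell\in Index_{t_1}(U)$; the endpoint $(t_1,j)$ together with $t_1\in U_0\cap V_{>h_0}$ would force $j\in I$ if $(t_1,j)\in U$; connectivity of $U_0$ yields $P(t_0,t_1)\subseteq U_0$, so intermediate vertical vertices trigger the same $j\in I$ contradiction, and $(t_0,j)\in S$ is disjoint from $U$ by hypothesis.

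Following the ``last adjacency'' trick of Lemma \ref{lem:indepindex1}, let $w$ be the last vertex of $Q_1$, scanned in the direction from $first(Q_1)$ toward $(t_0,j)$, that is adjacent to $U$; such a $w$ exists because $first(Q_1)$ is a neighbour of $(t_1,i)\in U$. Take $P$ to be the subpath of $Q_1$ from $w$ to $(t_0,j)$. Then $P$ is a proper subpath of $Q$ (it omits $(t_1,i)$), $anchor(P)=S$ is disjoint from $U$, $V(P)\subseteq V(Q_1)$ is disjoint from $U$, and by maximality of $w$ only $first(P)=w$ is adjacent to $U$, so all four defining conditions of a $U,h_0$-path are met.

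The main obstacle, and the only step that invokes the hypothesis $|j-a|\ge 2$ for all $a\in I$, is to show $w\in t_1\times[k]$; equivalently, that no vertex of the vertical tail $(P(t_0,t_1)\setminus\{t_1\})\times\{j\}$ is adjacent to $U$. For such $(t',j)$, the in-column neighbours $(t',j\pm 1)$ are killed by $j\pm 1\notin I$ (from $|j-a|\ge 2$) combined with $t'\in U_0\cap V_{>h_0}$ for $t'\neq t_0$, and by $(t_0,j\pm 1)\in S$ when $t'=t_0$. The tree-type neighbours $(t'',j)$ with $t''$ on $P(t_0,t_1)$ are already ruled out by the first paragraph; the remaining off-path tree-neighbours split by whether $t''\in U_0$ and by $height(t'')$, and are resolved using $j\notin I$ together with the inclusion $V(h,t_0)\subseteq U_0$ forced by $S\subseteq W_0$ and $S\cap U=\emptyset$. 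This last height-by-height case analysis is where all hypotheses on $S$, $W_0$, $I$ and $j$ must fit together, and is the step requiring the most care.
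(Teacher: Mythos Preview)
Your argument follows the paper's proof exactly: delete $(t_1,i)$ from $P(t_1,t_0,i,j)$, check that the remaining path $Q_1$ avoids $U$, take the last $U$-adjacent vertex $w$ as $first(P)$, and argue $w$ must lie on the horizontal segment $t_1\times[k]$. You are in fact more careful than the paper in justifying $V(Q_1)\cap U=\emptyset$; the paper's one-line justification tacitly treats every vertex of $U$ as having index in $I$, which is only true of $U\cap W_0\cap V_{>h_0}(h,k)$.

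The step you flag as delicate, however, does not go through as you describe. Your case split by whether $t''\in U_0$ and by $height(t'')$ leaves uncovered exactly the problematic cases: (i) $t''\notin U_0$ with $height(t'')>h_0$ (this can occur when the tree-path $P(t_0,t_1)$ passes through a vertex whose off-path neighbour lies outside $U_0$), and (ii) $height(t'')=h_0$ with $t''\notin V(h,t_0)$, namely when $t''$ is a sibling of $t_0$ and hence an off-path child of the parent $p$ of $t_0$. In neither case do the stated hypotheses forbid $(t'',j)\in U$: such a vertex lies outside $W_0\cap V_{>h_0}(h,k)$ and outside $S$, so neither $j\notin I$ nor $S\cap U=\emptyset$ nor $V(h,t_0)\subseteq U_0$ helps. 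If $(t'',j)\in U$ then $(p,j)$ on the vertical tail is adjacent to $U$ and $w\notin t_1\times[k]$. The paper's proof at this point simply asserts ``the vertices with the second coordinate $j$ are not adjacent to $U$'' without argument, so the gap is shared; but your claim that these cases ``are resolved'' using the listed hypotheses is not correct.
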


\begin{proof}
Let $P_0=P(t_1,t_0,i,j) \setminus (t_1,i)$. 
Let $first(P_0)$ be the immediate successor of $(t_1,i)$ in $P(t_1,t_0,i,j)$
and let $last(P_0)$ be the other terminal vertex of $P_0$. 
We observe that $V(P_0) \cap U=\emptyset$. 
Indeed, $V(P_0) \cap U \cap (t_1 \times [k])=\emptyset$ by the minimality of $i$. 
The remaining vertices of $P_0$ are of the form $(t',j)$ and are, by selection
of $j$, cannot belong to $U$ (whose vertices are of the form $(t'',a)$, $a \in I$). 
Also, $first(P_0)$ is adjacent to $U$, namely to $(t_1,i)$. 
Arguing as in Lemma \ref{lem:indepindex1}, 
we take the last vertex $w$ of $P_0$ (being explored from $first(P_0)$)
that is adjacent to $U$. The suffix of $P_0$ starting from $w$ is 
an element of ${\bf P}(U,h_0)$. 
We observe that the vertices with the second coordinate $j$ are not adjacent 
to $U$. Hence the first vertex of the resulting path belongs to $t_1 \times [k]$. 
\end{proof}

\begin{lemma} \label{lem:indepbag2}
Let $U \subseteq V(h,k)$. 
Let $U_0 \subseteq V(h)$ be a connected set of $T[h]$. 
Let $W_0=U_0 \times [k]$.  
Let $S \in {\bf S}_{h_0}(h,k)$ such that $S \subseteq W_0$ and $S \cap U=\emptyset$. 
Let $I=Index(U \cap W_0 \cap V_{>h_0}(h,k))$. 
Let $B=Bags(U \cap W_0 \cap V_{>h_0}(h,k))$. 

Suppose that $|I| \leq \sqrt{k}$. 
Then, for a sufficiently large $k$, 
there is an independent ${\bf P} \subseteq  {\bf P}(U,h_0)$ 
such that $\bigcup_{P \in {\bf P}} V(P) \subseteq W_0$ 
and $|{\bf P}| \geq min(|B|/65, \sqrt{k})$.
\end{lemma}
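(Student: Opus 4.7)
The plan is to reduce the construction to a single application of Lemma~\ref{lem:indepbag1} per bag, by picking one value of $j$ that works simultaneously for all bags in $B$ and then thinning $B$ to a subfamily whose bags are pairwise far apart in $T[h]$.

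First, I would fix $j$. Since $|I| \le \sqrt{k}$, the forbidden set $I \cup \{a-1 : a \in I\} \cup \{a+1 : a \in I\}$ has size at most $3\sqrt{k}$, which is strictly less than $k$ once $k$ is large, so a valid $j$ exists and meets the precondition $j \in [k]\setminus I$ with $|j-a|\ge 2$ for every $a\in I$. Next I would greedily select $B^\ast \subseteq B$ in which every two distinct elements are at $T[h]$-distance at least $4$. Since $T[h]$ has maximum degree $4$, the closed ball of radius $3$ about any vertex has at most $1+4+12+36=53$ vertices, so the greedy procedure yields $|B^\ast| \ge |B|/53 \ge |B|/65$; if $|B^\ast|$ exceeds $\sqrt{k}$, I would truncate it to $\lceil\sqrt{k}\rceil$ elements, guaranteeing $|B^\ast|\ge \min(|B|/65,\sqrt{k})$.

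For each $t_1 \in B^\ast$ I would pick $i_{t_1}$ to be an index closest to $j$ with $(t_1, i_{t_1}) \in U$ and invoke Lemma~\ref{lem:indepbag1} to obtain a path $P_{t_1} \in {\bf P}(U,h_0)$ with $first(P_{t_1}) \in \{t_1\}\times [k]$ that is a proper subpath of $P(t_1,t_0,i_{t_1},j)$. The inclusion $V(P_{t_1}) \subseteq W_0$ is immediate: $U_0$ induces a connected subgraph of $T[h]$ containing both $t_1$ (every element of $B$ has first coordinate in $U_0$, because its corresponding $U\cap W_0$-vertex already does) and $t_0 = root(S)$ (since $S = V(T[h]_{t_0}) \times [k] \subseteq W_0$ forces $V(T[h]_{t_0}) \subseteq U_0$), so the tree path from $t_1$ to $t_0$ lies inside $U_0$, and every vertex of $P_{t_1}$ has first coordinate in $V(P(t_1,t_0)) \cup \{t_1\} \subseteq U_0$.

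The main obstacle is verifying that ${\bf P} = \{P_{t_1} : t_1 \in B^\ast\}$ is independent, and this is exactly where the distance-$4$ thinning pays off. Writing $first(P_{t_1}) = (t_1, b_{t_1})$, every $U$-vertex adjacent to $first(P_{t_1})$ has first coordinate in the closed $T[h]$-neighbourhood $\{t_1\}\cup N_{T[h]}(t_1)$. For distinct $t_1, t_1' \in B^\ast$, distance at least $4$ in $T[h]$ makes these closed neighbourhoods disjoint and at mutual $T[h]$-distance at least $2$, so no $T[h,k]$-edge can connect $(\{t_1\}\cup N_{T[h]}(t_1)) \times [k]$ to $(\{t_1'\}\cup N_{T[h]}(t_1')) \times [k]$. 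This single observation simultaneously yields all three clauses of the independence definition: $first(P_{t_1})$ and $first(P_{t_1'})$ are non-adjacent, any two vertices in their respective $U$-neighbourhoods are non-adjacent, and no $U$-vertex can be a common neighbour of $first(P_{t_1})$ and $first(P_{t_1'})$ because its first coordinate would have to lie in two disjoint subsets of $V(T[h])$.
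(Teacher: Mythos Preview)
Your argument is correct and in fact cleaner than the paper's. The paper does not fix a single $j$; instead it constructs an injective map $ind_0\colon B_1 \to [k]\setminus I$ assigning a \emph{different} target index to each bag, with the extra requirement that $|ind_0(t_1)-ind_0(t_2)|\ge 4$ for distinct $t_1,t_2$. This is precisely why the paper truncates $B_1$ to size at most $\lceil\sqrt{k}\rceil$: the codomain of $ind_0$ has only about $(k-3\sqrt{k})/7$ well-separated slots. However, when the paper finally checks independence of ${\bf P}$ it appeals only to the tree-distance separation in $B_1$ (``By definition of $B_1$, the distance between $first(P_1)$ and $first(P_2)$ is at least $4$''), never to the index-distance property of $ind_0$. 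So your single-$j$ variant is a genuine simplification: the $ind_0$ machinery is superfluous, and with your approach the truncation step is not even needed --- you could conclude $|{\bf P}|\ge |B|/65$ outright, which is stronger than the stated bound.

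One small gap to patch: the second clause of the independence definition asks that $N(first({\bf P}))\cap U$ be an independent set, and your distance-$4$ argument only handles pairs of $U$-neighbours attached to \emph{different} $first(P_{t_1})$'s. You should add one line noting that two $U$-neighbours of the \emph{same} $first(P_{t_1})$ are non-adjacent because $T[h,k]$, being a Cartesian product of trees, is bipartite and hence triangle-free. The paper covers this by pointing back to the analogous step in Lemma~\ref{lem:indepindex2}.
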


\begin{proof}
We start from observing that there is a subset $B_0$ of $B$ 
of size at least $|B|/65$ that is an independent set of $T[h]$. Indeed, such a subset can be formed in a greedy way. 
Take an element $t' \in B$ and discard all the elements of $B$ at distance at most $3$ from $t'$. 
Since the max-degree of $T[h]$ is $4$, the number of discarded elements is $1+4^3=65$ (the extra $1$
is on account of $t'$). That is, we take one element to $B_0$ at the price of removal of at least $65$ elements 
from $B$. 

If $B_0 \leq \sqrt{k}$, let $B_1=B_0$. Otherwise, let $B_1$ be an arbitrary subset of 
$B_0$ of size $\lceil \sqrt{k} \rceil$. 
Let $ind_0$ be an injective mapping mapping from $B_1$ to $[k] \setminus I$
such that for any two distinct $t_1,t_2 \in B_1$,
$|ind_0(t_1)-ind_0(t_2)| \geq 4$ and such that for each $t' \in B_1$, $ind_0$ is at distannce at least $2$ from $I$.

We observe that such a mapping is possible for a sufficiently large $k$. 
Indeed, remove from $[k] \setminus I$ the elements with indicees adjacent to $I$. 
This will leave us with a set $I_0$ of size at least $k-3\sqrt{k}$. 
Out of this set form greedily a subset $I_1$ of elements at distance at least $4$
from each other. Arguing as in Lemma \ref{lem:indepindex2}, there is such a set
of size at least $(k-3\sqrt{k})/7$. 
For a sufficiently large $k$ the size of $I_1$ is greater than $\lceil \sqrt{k} \rceil$ and 
hence of $B_1$. Hence, the required mapping exists.

Next, let $ind_1$ be a mapping from $B_1$ to $[k]$ such that
for each $t_1 \in B_1$, $(t_1,ind_1(t_1)) \in U$ and $ind_1(t_1)$
is closest possible to $ind_0(t_1)$. 
Let $t_0=root(S)$. Let $P(t_1)$ be the subpath of $P(t_1,t_0,ind_1(t_1),ind_0(t_1))$
as specified in Lemma \ref{lem:indepbag1}. We let ${\bf P}=\{P(t_1)|t_1 \in B_1\}$. 
It remains to observe that ${\bf P}$ is independent. 

Let $t_1,t_2 \in B_1$. 
By Lemma \ref{lem:indepbag1}, $first(P(t_1)) \in t_1 \times [k]$
and $first(P(t_2)) \in t_2 \times [k]$. 
By definition of $B_1$, the distance between $first(P_1)$ and $first(P_2))$
is at least $4$. The remaining part of the argument is analogous to that of
Lemma \ref{lem:indepindex2}. 
\end{proof}

\begin{lemma} \label{lem:indepset0}
Let $U \subseteq V(h,k)$. 
Let $U_0 \subseteq V(h)$ be a connected set of $T[h]$. 
Let $W_0=U_0 \times [k]$.  
Let $S \in {\bf S}_{h_0}(h,k)$ such that $S \subseteq W_0$ and $S \cap U=\emptyset$. 
Let $W_1=U \cap W_0 \cap V_{>h_0}(h,k)$.
Suppose that $|W_1| \geq k$. 

Then, for a sufficiently large $k$, there is an independent ${\bf P} \in {\bf P}(U,h_0)$ 
with   $\bigcup_{P \in {\bf P}} V(P) \subseteq W_0$ 
and $|{\bf P}| \geq \sqrt{k}/65$.
\end{lemma}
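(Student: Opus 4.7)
The plan is to derive this lemma as an immediate case analysis on the size of $I := Index(W_1)$, combining Lemma \ref{lem:indepindex2} and Lemma \ref{lem:indepbag2}. The key counting observation is that $W_1 \subseteq Bags(W_1) \times Index(W_1)$, so writing $B := Bags(W_1)$ and $I := Index(W_1)$, the hypothesis $|W_1| \geq k$ forces $|B| \cdot |I| \geq k$.

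First I would handle the case $|I| \geq \sqrt{k}$. Here Lemma \ref{lem:indepindex2} applies directly (its hypotheses on $U$, $U_0$, $W_0$, and $S$ are identical to those of the present statement) and produces an independent family ${\bf P} \subseteq {\bf P}(U,h_0)$ with $\bigcup_{P \in {\bf P}} V(P) \subseteq W_0$ and $|{\bf P}| \geq |I|/7 \geq \sqrt{k}/7 \geq \sqrt{k}/65$, which is more than enough.

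In the complementary case $|I| < \sqrt{k}$, the counting inequality $|B| \cdot |I| \geq |W_1| \geq k$ yields $|B| > k/\sqrt{k} = \sqrt{k}$. Since now $|I| \leq \sqrt{k}$, the hypothesis of Lemma \ref{lem:indepbag2} is satisfied, and for $k$ sufficiently large that lemma supplies an independent family ${\bf P} \subseteq {\bf P}(U,h_0)$ confined to $W_0$ with $|{\bf P}| \geq \min(|B|/65, \sqrt{k}) \geq \min(\sqrt{k}/65, \sqrt{k}) = \sqrt{k}/65$.

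There is no real obstacle here; the proof is essentially a dichotomy glued together from the two previous lemmas. The only subtlety to spell out is the inequality $|W_1| \leq |B| \cdot |I|$, which is what allows the $|I|$-small case to translate into the $|B|$-large regime where Lemma \ref{lem:indepbag2} becomes applicable. Together the two cases yield the claimed bound of $\sqrt{k}/65$, completing the proof.
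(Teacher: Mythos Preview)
Your proposal is correct and follows essentially the same approach as the paper: a dichotomy on whether $|I|=|Index(W_1)|$ is at least $\sqrt{k}$, invoking Lemma~\ref{lem:indepindex2} in the first case and, via the inclusion $W_1 \subseteq B \times I$ and hence $|B|\cdot|I|\geq k$, Lemma~\ref{lem:indepbag2} in the second.
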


\begin{proof}
Let $I=Index(W_1)$. 
and let $B=Bags(W_1)$.
Assume first that $|I| \geq \sqrt{k}$.  
Then the statement is immediate from Lemma \ref{lem:indepindex2}
(in fact with $\sqrt{k}/7$ being the lower bound). 

Otherwise, we observe that $W_1 \subseteq B \times I$
and hence $|W_1| \leq |B| \cdot |I|$. 
Hence, $|B| \geq \sqrt{k}$ and the lemma is immediate 
from Lemma \ref{lem:indepbag2}. 
\end{proof}

\begin{theorem} \label{th:maintopdowncomb}
Let $c(x_0,x_1)=1/65 \cdot (\lfloor (x_1-x_0)/4 \rfloor+1)$.
Let $0<h_0 <h_1 \leq h$ and suppose that $k$ is sufficiently large. 
Let $\pi$ be a permutation that is $(h_0,h_1)$-\textsc{td}.
Let $t \in V_{\geq h_1}(h)$. 
Then there is a prefix $\pi_0$ of $\pi$ 
such that the following two statements hold
\begin{enumerate}
\item There is an independent
set ${\bf P} \subseteq {\bf P}(\pi_0,h_0, t)$ of size 
at least $c(h_1,height(t)) \sqrt{k}$. 
\item There is $S \in {\bf S}_{h_0}(h,k,t)$ such that $\pi_0 \cap S=\emptyset$.
\end{enumerate}
In particular, there is a prefix $\pi_0$ of $\pi$ and
an independent set ${\bf P} \subseteq {\bf P}(\pi_0,h_0)$
of size $1/260 \cdot (h-h_1) \cdot \sqrt{k}$. 
\end{theorem}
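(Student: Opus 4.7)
We proceed by induction on $d = \lfloor (height(t) - h_1)/4 \rfloor$, treating Lemma~\ref{lem:indepset0} as the main ``group generator'': each invocation produces one group of $\sqrt{k}/65$ independent paths, and the target bound $c(h_1, height(t))\sqrt{k}$ simply counts the $d+1$ groups we must accumulate.

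For the base case ($d = 0$), pick any $t' \in V_{h_1}(h, t)$. The $(h_0, h_1)$-\textsc{td} condition for $t'$ supplies a prefix $\pi_0$ with $|\pi_0 \cap V_{>h_0}(h,k,t')| \geq k$ and an anchor $S \in {\bf S}_{h_0}(h,k,t') \subseteq {\bf S}_{h_0}(h,k,t)$ disjoint from $\pi_0$. Lemma~\ref{lem:indepset0}, invoked with $U_0 = V(h, t')$ (connected, being a subtree) and $W_0 = V(h,k,t')$, yields an independent set of at least $\sqrt{k}/65$ paths in ${\bf P}(\pi_0, h_0, t') \subseteq {\bf P}(\pi_0, h_0, t)$.

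For the inductive step ($d \geq 1$), the easy sub-case is $height(t) - h_1 \not\equiv 0 \pmod{4}$, where the target equals $c(h_1, height(t) - 1)\sqrt{k}$; applying induction to any child of $t$ and lifting the triple to $t$ suffices. The core difficulty is $height(t) = h_1 + 4d$, where one extra group is required. The plan is: (i) apply induction to a child $t'$ of $t$ to obtain $\pi_0^{t'}$, a family ${\bf P}^{t'}$ of $d \sqrt{k}/65$ paths in $V(h,k,t')$, and an anchor $S^{t'}$; (ii) select a descendant $t''$ of $t$ at height $h_1$ lying in a sibling child-subtree of $t'$, and define $\pi_0 = \pi^*(t'')$, the $(h_0,h_1)$-\textsc{td} prefix for $t''$, chosen so that $\pi^*(t'') \supseteq \pi_0^{t'}$ (which can be arranged among the many descendants of $t$ at height $h_1$ outside $V(h,t')$ by picking one whose \textsc{td} event is sufficiently late); (iii) apply Lemma~\ref{lem:indepset0} inside $V(h,k,t'')$ with $U_0 = V(h,t'')$ to produce $\sqrt{k}/65$ further independent paths with $first$ vertices in $V(h,k,t'')$, using the \textsc{td}-provided anchor $S \in {\bf S}_{h_0}(h,k,t'') \subseteq {\bf S}_{h_0}(h,k,t)$; (iv) take the union of the two families. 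Independence of the combined collection follows from $V(h,k,t')$ and $V(h,k,t'')$ being vertex-disjoint and non-adjacent in $T[h,k]$---distinct children of $t$ are mutually non-adjacent in $T[h]$, so the Cartesian product with $P_k$ introduces no cross-edges.

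The principal obstacle is step (iv): ${\bf P}^{t'}$ is valid for $\pi_0^{t'}$ but must remain valid for the larger $\pi_0$, i.e., the vertices $\pi_0 \setminus \pi_0^{t'}$ must miss the anchors of old paths and must not render any interior vertex of an old path adjacent to $\pi_0$. The natural resolution is to strengthen the inductive hypothesis so as to encode invariance of $(\pi_0^{t'}, {\bf P}^{t'}, S^{t'})$ against prefix extensions that avoid $V(h,k,t')$ together with its sole external adjacency in $T[h,k]$, namely $\{t\} \times [k]$; because $t''$ is confined to a sibling child-subtree of $t'$, the intervening portion of $\pi$ between $\pi_0^{t'}$ and $\pi^*(t'')$ can be managed to respect this invariance. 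The ``in particular'' claim then follows by instantiating the main statement at the root of $T[h]$ (so $height(t) = h$), which gives $c(h_1,h)\sqrt{k} \geq (h-h_1)\sqrt{k}/260$ independent paths in ${\bf P}(\pi_0,h_0)$.
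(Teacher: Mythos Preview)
Your proposal identifies the right engine (Lemma~\ref{lem:indepset0}) and the right inductive shape, but steps (ii) and (iv) contain a genuine gap that your proposed strengthening does not close. In step (ii) you assert that among the descendants $t''$ of $t$ at height $h_1$ lying outside $V(h,t')$ one can be chosen with $\pi^*(t'') \supseteq \pi_0^{t'}$. This is not justified: the permutation $\pi$ is fixed, and it is perfectly possible that every sibling subtree of $t'$ accumulates its $k$ high vertices \emph{before} $\pi_0^{t'}$ is complete (for instance, if $\pi$ lists all vertices outside $V(h,k,t')$ first). In that regime every $\pi^*(t'')$ is a proper prefix of $\pi_0^{t'}$, and your construction has no candidate $t''$.

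The resolution you sketch in (iv)---strengthening the hypothesis so that the old family survives extensions avoiding $V(h,k,t') \cup (\{t\}\times[k])$---does not help, because you have no control over which vertices lie in $\pi^*(t'') \setminus \pi_0^{t'}$; these are dictated by $\pi$ and may well hit anchors of old paths or create new adjacencies to their interiors. The paper circumvents the whole extension problem by a three-way trick: it applies the induction hypothesis to \emph{three} descendants $t_1,t_2,t_3$ (one under each child of $t$, at height $height(t)-4$), obtaining prefixes $\pi_1 \subseteq \pi_2 \subseteq \pi_3$ after sorting. The inductive family ${\bf P}_2$ is already valid for $\pi_2$; the new group of $\sqrt{k}/65$ paths is then constructed directly for $U=\pi_2$ inside $W_0=(V(h,t)\setminus V(h,t_2^*))\times[k]$, using $\pi_1\subseteq\pi_2$ to supply the $\geq k$ high vertices in $W_0$ and the anchor of $t_3$ (disjoint from $\pi_3\supseteq\pi_2$) to supply the empty $S$. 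Since both families are born relative to the same prefix $\pi_2$, no extension step is needed, and the distance-$4$ separation between $V(h,k,t_2)$ and $W_0$ (this is why the step is $4$, not $1$) yields independence of the union. Your two-witness scheme lacks exactly this ``middle prefix'' device.
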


\begin{proof}
By induction on $height(t)$. 
Assume first that $height(t)=h_1$. 
Let $W_0=V(h,k,t)$. 
By definition of a $(h_0,h_1)$-\textsc{td}
permutation, there is a prefix $\pi_0$ 
of $\pi$ such that $|W_0 \cap \pi_0 \cap V_{>h_0}(h,k,t)| \geq k$  
and there is $S \in {\bf S}_{h_0}(h,k,t)$ such that $S \cap \pi_0=\emptyset$. 
By Lemma \ref{lem:indepset0}, 
there is an independent set ${\bf P} \subseteq {\bf P}(\pi_0,h_0)$
of size at least $\sqrt{k}/65$ such that $\bigcup_{P \in {\bf P}} V(P) \subseteq W_0$. 
By definition of $W_0$, ${\bf P} \subseteq {\bf P}(\pi_0,h_0,t)$ thus 
confirming the theorem for the considered case.  

For the main case, let $t$ be be a node
with $height(t)>h_1$ and $height(t)-h_1$ being a multiple of $4$. 
Let $t^*_1,t^*_2,t^*_3$ be the children of $t$. 
Let $t_1,t_2,t_3$ be descendants of $t$ of height $height(t)-4$ such that
for each $i \in [3]$, $t_i$ is a descendant of $t^*_i$. 

By the induction assumption, for each $i \in [3]$, there is a prefix $\pi_i$
of $\pi$ such that there is an independent ${\bf P}_i \subseteq {\bf P}(\pi_i,h_0,t_i)$ 
of size $c(h_0,height(t_i)) \sqrt{k}$. 
We assume w.l.o.g. that $\pi_1 \subseteq \pi_2 \subseteq \pi_3$. 

Let $U_0=V(h,t) \setminus V(h,t^*_2)$. 
In other words, $U_0$ is the set of vertices of $T[h]_t \setminus T[h]_{t^*_2}$. 
Let $W_0=U_0 \times [k]$. 

\begin{claim} \label{clm:engine1}
$|W_0 \cap \pi_2 \cap V_{>h_0}(h,k,t)| \geq k$.
\end{claim}

\begin{proof}
By assumption, 
$W_0 \cap \pi_2 \cap V_{>h_0}(h,k,t)$ is a superset 
of $W_0 \cap \pi_1 \cap V_{>h_0}(h,k,t_1)$ and the size of the latter 
set is at least $k$.
\end{proof}

\begin{claim} \label{clm:engine2}
There is $S \in {\bf S}_{h_0}(h,k,t)$ such that
$S \subseteq W_0$ and $\pi_2 \cap S=\emptyset$. 
\end{claim}

\begin{proof}
Let $S \in {\bf S}_{h_0}(h,k,t_3)$
such that $\pi_3 \cap S=\emptyset$ such an $S$ exists by the induction assumption. 
Clearly $S \subseteq W_0$. 
On the other hand, as $\pi_2 \subseteq \pi_3$,
$\pi_2 \cap S=\emptyset$. 
\end{proof}

We observe that the premises of Lemma \ref{lem:indepset0} are satisfied
with $U=\pi_2$. Therefore, it follows from the lemma that there is a
a set ${\bf P'} \subseteq {\bf P}(\pi_2, h_0)$ of size at least $\sqrt{k} /65$
such that $\bigcup_{P \in {\bf P'}} V(P) \subseteq W_0$. 

Let ${\bf P}={\bf P}_2 \cup {\bf P'}$. 
By definition, ${\bf P} \subseteq {\bf P}(\pi_2,h_0)$
and $\bigcup_{P \in {\bf P}} V(P) \subseteq V(h,k,t)$. 
We are going to demonstrate that $|{\bf P}| \geq c(h_0,height(t)) \cdot \sqrt{k}$. 
and that ${\bf P}$ is independent. 

First of all, we notice that ${\bf P}_2$ and ${\bf P'}$ are disjoint simply
because the vertices of the former belong to $V_{h,k,t_2}$ where the vertices 
of the latter belong to $W_0$ which is disjoint with $V_{h,k,t_2}$. 

It follows that $|{\bf P}|=|{\bf P}_2|+|{\bf P'}| \geq 
c(h_0,height(t_2)) \sqrt{k}+1/65 \cdot \sqrt{k}$. 
Let us have a closer look $c(h_0,height(t_2))$. 
By assumption, $height(t_2)=4a$ for some integer $a$. 
Therefore, $c(h_0,height(t_2))=1/65 \cdot (a+1)$. 
It follows that $|{\bf P}| \geq 1/65 \cdot (a+2) \sqrt{k}=c(h_0,4(a+1)) \sqrt{k}$. 
However, $4(a+1)$ is exactly $height(t)$. Hence, the required lower bound on the 
size of ${\bf P}$ holds. 

It remains to verify the independence of ${\bf P}$.
Since ${\bf P}_2$ and ${\bf P'}$ are known to be independent.
we need to verify independence of a path from ${\bf P}_2$ and a 
a path from ${\bf P'}$. For this,  we observe that the distance between $V(h,k,t_2)$
and $W_0$ is at least $4$ and then apply the argument as in Lemma \ref{lem:indepindex2}. 

We thus have completed the proof of the first statement of the theorem
treating $\pi_2$ as $\pi_0$. The second statement is immediate from
Claim \ref{clm:engine2}. 

It remains to consider the case where $height(t)$ is not a multiple of $4$. 
Let $t_0$ be a closest to $t$ descendant of $t$ whose height is a mutiple of $4$. 
Let $\pi_0$ and ${\bf P} \subseteq {\bf P}(p_0,h_0,t_0)$ be the witnessing prefix and set of paths for $t_0$
existing by the induction assumption.
Since ${\bf P}(\pi_0,h_0,t_0) \subseteq {\bf P}(\pi_0,h_0,t)$
and $c(h_0,height(t_0))=c(h_0,t)$, the statements of the theorem hold regarding $t$.  
\end{proof}

Now we turn into transformation of $\pi_0$ and the corresponding set ${\bf P}$
of independent $\pi_0,h_0$-paths into a target triple. 

Denote $|V(h)|$ by $m(h)$. 
We assume that $m(h_0-1)>m(h)^{\alpha}$. 
For $u \in V(h,k)$, we introduce $bag(u)$ and $index(u)$
so that $u=(bag(u),index(u))$. 
We define $height(u)$ as $height(bag(u))$.

Let $P \in {\bf P}(U,h_0)$. 
Let $t_0(P)=bag(last(P))$. Note that $t_0(P)=root(anchor(P))$. 
Fix $t_1(P)$ as an arbitrary child of $t_0(P)$. 
Let $shift(P)$ be the path obtained from $P$ by appending 
$(t_1(P)),index(last(P))$ to the end of $P$ and removal of $first(P)$. 
Let $U_1(P)=V(shift(P)) \cup V(h,k,t_1)$. 

\begin{lemma} \label{lem:tripletopdown}
Let ${\bf P} \subseteq {\bf P}(U,h_0)$ be non-empty and independent. 
Let $U_0({\bf P})=\{first(P)| P \in {\bf P}\}$. 
Let $U_1({\bf P})=\bigcup_{P \in {\bf P}} U_1(P)$. 
Then $(U,U_0({\bf P}),U_1({\bf P}))$ is a target triple. 
\end{lemma}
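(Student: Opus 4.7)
The plan is to verify, one by one, the six conditions of a target triple (together with the pairwise disjointness of the three component sets) for the candidate $(U, U_0({\bf P}), U_1({\bf P}))$. Most items will reduce either to the three defining clauses of ${\bf P}$ being an independent family of $U,h_0$-paths, or to the specific construction of $U_1(P) = V(shift(P)) \cup V(h,k,t_1(P))$ together with the fact that $t_1(P)$ is a child of $t_0(P) = root(anchor(P))$. I would proceed in a fixed order: first pairwise disjointness, then the large-component condition (item 1), then the two adjacency conditions (items 2 and 3), and finally the three independent-set-flavoured conditions (items 4--6).

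For disjointness, $U_0({\bf P}) \cap U = \emptyset$ is immediate since each $first(P) \in V(P)$ and $V(P) \cap U = \emptyset$ by the $U,h_0$-path definition. For $U_1({\bf P}) \cap U = \emptyset$, every vertex of $U_1(P)$ lies either in $V(P) \setminus \{first(P)\}$ or in $V(h,k,t_1(P)) \subseteq V(h,k,t_0(P)) = anchor(P)$, and both sets are disjoint from $U$ by the $U,h_0$-path definition. Disjointness of $U_0({\bf P})$ and $U_1({\bf P})$ I would handle by a heights argument: each $first(P_1)$ has height at least $h_0$, whereas $V(h,k,t_1(P_2))$ sits entirely at height $\leq h_0 - 1$; and $first(P_1)$ cannot lie in $V(P_2) \setminus \{first(P_2)\}$ either, because that would force $first(P_1)$ to be non-adjacent to $U$, contradicting its definition.

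The large-component condition is the step I expect to carry the main technical weight. The crux is that each $U_1(P)$ is itself connected: $V(h,k,t_1(P))$ is the Cartesian product of a connected subtree with a path and hence connected; $shift(P)$ is a path; and the two share the appended endpoint $(t_1(P), index(last(P)))$. Consequently every connected component of $G[U_1]$ fully contains at least one block $V(h,k,t_1(P))$ of size $m(h_0-1) \cdot k$. The standing assumption $m(h_0-1) > m(h)^\alpha$ combined with $k > k^\alpha$ (valid for $\alpha < 1$ and sufficiently large $k$) gives $m(h_0-1) \cdot k > m(h)^\alpha \cdot k^\alpha = n^\alpha$, as required, using $n = m(h) \cdot k$.

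For item 2, each $first(P)$ is adjacent to $U$ directly by the $U,h_0$-path definition, and it is adjacent to $U_1$ via its successor in $shift(P)$ (or, in the degenerate case $V(P) = \{first(P)\}$, via the appended vertex $(t_1(P), index(last(P)))$, which is adjacent to $first(P) = (t_0(P), index(last(P)))$ since $t_1(P)$ is a child of $t_0(P)$). For item 3, I would split on where a vertex $v \in U_1$ sits: if $v \in V(P) \setminus \{first(P)\}$ for some $P \in {\bf P}$, then $v$ has no neighbour in $U$ by the $U,h_0$-path definition; otherwise $v$ belongs to $V(h,k,t_1(P))$, whose only neighbours outside that block live in $V(h,k,t_0(P)) = anchor(P)$, again disjoint from $U$. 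Finally, items 4, 5, and 6 are direct transcriptions of the three defining clauses of an independent family ${\bf P}$, applied to $U_0({\bf P}) = first({\bf P})$ with $W = U$, so no further work is needed beyond invoking the corresponding bullets from the independence definition.
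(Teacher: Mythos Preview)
Your proposal is correct and follows essentially the same approach as the paper's proof: verify connectedness and the $n^\alpha$ size bound for each $U_1(P)$, derive non-adjacency of $U_1({\bf P})$ with $U$ from the $U,h_0$-path conditions and the fact that $V(h,k,t_1(P)) \subseteq anchor(P)$, check adjacency of each $first(P)$ to both sides, and read off items 4--6 directly from the independence of ${\bf P}$. You are in fact somewhat more careful than the paper in explicitly arguing the pairwise disjointness of $U$, $U_0({\bf P})$, $U_1({\bf P})$ (the paper only addresses $U_1({\bf P}) \cap U = \emptyset$), but the substance of the two proofs is the same.
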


\begin{proof}
Let $P \in {\bf P}$. 
We observe that $U_1(P)$  is a connected subset of $V(h,k)$. 
Moreover, $V(h,k,t_1(P)) \subseteq U_1(P)$. 
Since $height(t_1(P))=h_0-1$, $|V(h,t_1)|>m(h)^{\alpha}$. 
Hence $|V(h,k,t_1(P))|=|V(h,t_1(P))| \cdot k>m^{\alpha} \cdot \geq (mk)^{\alpha}=n^{\alpha}$. 
We thus conclude that $U_1({\bf P})$ is the union of connected sets of size 
greater than $n^{\alpha}$ each. 

We next observe that $U_1({\bf P})$ is not  adjacent with $U$ and does not intersect with $U$. 
It is enough to prove this for $U_1(P)$ for an arbitrary $P \in {\bf P}$. 
First of all, we observe that $shift(P) \cap V(P)$ is not adjacent with $U$.
This is immediate from definition of ${\bf P}(U,h_0)$ since 
$first(P) \notin shift(P) \cap V(P)$.
It thus remains to verify this for $V(h,k,t_1(P))$. 
The set does not intersect with $U$ since it is a subset of $anchor(P)$. 
The only outside neighbours of this set are the elements of 
$root(anchor(P)) \times [k]$ but they are not contained in $U$ again 
due to disjointness of $anchor(P)$ and $U$. 

Next, we observe that each element $u \in U_1({\bf P})$
is adjacent to both $U$ and $U_1({\bf P})$
Indeed, by construction, there is $P \in {\bf P}$ such that
$u=first(P)$ and $first(P)$ is adjacent to $U_1(P) \subseteq U_1({\bf P})$
simply by construction. Also $first(P)$ is adjacent to $U$ simply
since $P \in {\bf P}(U,h_0)$. 

The remaining properties of a target triple are satisfied 
due to the independence of ${\bf P}$. In particular, 
it implies the independence of $U_0({\bf P})$, the independence of $N(U_0(({\bf P})) \cap U$ and also that
each element of the latter set is adjacent ot exactly one element of $U_0({\bf P})$.
\end{proof}

We are now going  to define a construction for establishing
existence of a required target triple when $\pi$ is a bottom-up
permutation. 

With the assumption on $h_0$ as above, we also assume that $h_1$
is large enough compared to $h_0$. 

\begin{definition}
Let $t \in V_{h_1}(h)$ and let $U \subseteq V(h,k)$. 
We say that $U$ is \emph{easy on} $t$ if the following conditions hold. 
\begin{enumerate}
\item $U \cap V_{>h_0}(h,k,t)=\emptyset$. 
\item There is precisely one $S \in {\bf S}_{h_0+1}(h,k,t)$
such that  $U \cap S=\emptyset$. 
\end{enumerate}
\end{definition}

With $h_0$ and $h_1$ considered fixed, we denote by ${\bf Easy}(t)$
the family of all subsets of $V(h,k)$ that are easy on $t$. 
For each $U \in {\bf Easy}(t)$, 
we denote by $Free(U)$, the set $S \in {\bf S}_{h_0+1}(h,k,t)$
such that  $U \cap S=\emptyset$.
Further on, $root(U)$ is the child of $t$ that is an ancestor 
of $root(free(U))$. 
We also denote the family ${\bf S}_{h_0+1}(h,k,root(U)) \setminus \{Free(U)\}$
by ${\bf Loaded}(U)$.

\begin{figure}[h]
\begin{tikzpicture}
\draw [ultra thick] (1,1) --(9,1)--(5,7)--(1,1); 
\draw [ultra thick] (1.5,1.2) --(3.5,1.2)--(2.5,3)--(1.5,1.2); 
\draw [ultra thick] (4,1.2) --(6,1.2)--(5,3)--(4,1.2); 
\draw [ultra thick,fill=green] (6.5,1.2) --(8.5,1.2)--(7.5,3)--(6.5,1.2); 
\draw [fill=red]  (2.5,1.5) circle [radius=0.2];
\draw [fill=red]  (5,1.5) circle [radius=0.2];
\draw [fill=green]  (2.5,2.5) circle [radius=0.2];
\draw [fill=green]  (5,2.5) circle [radius=0.2];
\draw [fill=green]  (5,6.5) circle [radius=0.2];
\draw [green, ultra thick] (2.5,2.5)--(5,6.5); 
\draw [green, ultra thick] (5,2.5)--(5,6.5); 
\draw [green, ultra thick] (7.5,3)--(5,6.5); 
\draw [ultra thick] (2.5,1.7)--(2.5,2.3); 
\draw [ultra thick] (5,1.7)--(5,2.3); 

\end{tikzpicture}
\caption{Target triple for a bottom up permutation: an intuitive illustration}
\label{fig:ttbu}
\end{figure}

For each $S \in {\bf Loaded}(U)$, we let $w(S)$ to be an element of $S \cap U$
such that $height(w(S))$ is highest possible. 
We further define $P(S)$ as $(P(bag(w(S)),root(U)) \times index(w(S))) \setminus w(S)$.
Define $first(S)$ as the vertex of $P(S)$ adjacent to $w(S)$.
We denote the other terminal vertex of $P(S)$ as $last(S)$. 
We let $P^*(S)=P(S) \setminus first(S)$. 
Since we allowed $h_1-h_0$ to be sufficiently large, we assume 
that $last(S) \neq first(S)$ and, in particular, that $P^*(S)$ is non-empty.  
We let $P^*(U)=P(root(U),root(Free(U)) \times 1$. 
Let $U_0(U)=\{first(S)|S \in {\bf Loaded}(U)\}$
Let $U_1(U)=root(U) \times [k] \cup Free(U) \cup V(P^*(U)) \cup \bigcup_{S \in {\bf Loaded}(U)} V(P^*(S))$.

The above terminology is schematically illustrated on Figure \ref{fig:ttbu}. 
The outermost  triangle is $V(h,k,root(U))$ for some $U \in {\bf Easy}(t)$
and $t \in V_{h_1}(h)$. The inner rectangles depict ${\bf S}_{h_0+1}(h,k,t)$,
the green one is $Free(U)$, the remaining ones are elements of ${\bf Loaded}(U)$. 
The red circles inside elements $S \in {\bf Loaded}(U)$ depict $w(S)$,
the corresponding green circles depict the vertices $first(S)$. 
The green circle at the top of the outermost rectangle corresponds to the set $root(U) \times [k]$. 
The green lines between the top green circle and the bottom ones correspond to the paths
$P^*(S)$. The green line between the top green circle and the green rectangle corresponds
to $P^*(U)$. 
Thus the bottom green rectangles correspond to the elements of $U_0(U)$ and the union of the remaining
green elements is $U_1(U)$. The picture makes clear that $U_1(U)$ is a single connected component
that is large because it is `anchored' by $Free(U)$, that $U_0(U)$ is an independent set and that $U_0(U)$ is adjacent to both $U_1(U)$ and to $U$ (the latter through the lines between the green and 
red circles). 

We note an abuse of notation in the above terminology. For example, $Free(U)$ should, in fact, be 
$Free_t(U)$. Similar subscripts are omitted in the subsequent terminology for the sake of better readability. The correct use will always be clear from the context.

The intuition illustrated on Figure \ref{fig:ttbu} is made precise in the next lemma.

\begin{lemma} \label{lem:triplebottomup}
$(U,U_0(U),U_1(U))$ is a target triple. 
\end{lemma}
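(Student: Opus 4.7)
The plan is to verify the six defining conditions of a target triple for $(U,U_0(U),U_1(U))$ directly from the construction, leaning throughout on two inputs: the hypothesis that $U$ is easy on $t$ (so $U\cap V_{>h_0}(h,k,t)=\emptyset$ and $Free(U)$ is the unique block of ${\bf S}_{h_0+1}(h,k,t)$ disjoint from $U$), and the maximality in the choice of $w(S)$ (whenever $u\in V(h,s)$ with $height(u)>height(w(S))$, the whole column $\{u\}\times[k]$ avoids $U$, because such a column lies inside $S=V(h,k,s)$).

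First I would confirm that $U_1(U)$ is a single connected subset of $V(h,k)$ of size exceeding $n^{\alpha}$, giving condition~1. The spine $root(U)\times[k]$ is a connected path; $P^*(U)$ attaches $Free(U)=V(h,k,root(Free(U)))$ to the spine through index~$1$; $Free(U)$ is itself a full connected subtree; and each $P^*(S)$ terminates at $last(S)=(root(U),index(w(S)))$, again on the spine. For size, already $|Free(U)|=m(h_0+1)\cdot k\geq m(h_0-1)\cdot k > m(h)^{\alpha}\cdot k\geq (m(h)k)^{\alpha}=n^{\alpha}$ by the standing assumption $m(h_0-1)>m(h)^{\alpha}$.

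Next I would establish the non-adjacency $N(U_1(U))\cap U=\emptyset$ (which also gives disjointness, i.e.\ condition~3). For $root(U)\times[k]$, $V(P^*(U))$ and $Free(U)$ this is direct from the easy condition: every such vertex and every one of its tree- and $P_k$-neighbours either sits in $V_{>h_0}(h,k,t)$ (its tree-height is at least $h_0+1$ and it is a descendant of $t$) or sits inside $Free(U)$, and both regions avoid $U$. The subtle case is $V(P^*(S))$ for $S=V(h,k,s)\in {\bf Loaded}(U)$, whose vertices, their $P_k$-neighbours at index $index(w(S))\pm 1$, and their off-path tree-siblings may dip to tree-heights $\leq h_0$. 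Whenever such a low-height vertex $u$ arises, it is automatically a descendant of $s$ (the path $P^*(S)$ climbs from $p(p(bag(w(S))))$ towards $root(U)$ and bypasses $w(S)$), so $(u,j)\in S$; its tree-height strictly exceeds $height(w(S))$, and the maximality of $height(w(S))$ over $S\cap U$ therefore forces $(u,j)\notin U$. Vertices of high tree-height are ruled out by the easy condition, as before.

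Finally I would verify the $U_0$-conditions. Each $first(S)=(p(bag(w(S))),index(w(S)))$ is adjacent to $w(S)\in U$ and to the second vertex of $P(S)$, which lies in $V(P^*(S))\subseteq U_1(U)$, yielding condition~2. For distinct $S_1,S_2\in {\bf Loaded}(U)$, the tree-coordinates of $first(S_1)$ and $first(S_2)$ lie in the disjoint subtrees $V(h,s_1)$ and $V(h,s_2)$ below $root(U)$, hence are non-adjacent in $T[h]$; this yields condition~4. Reusing the same case analysis on the tree-neighbours of $first(S)$ shows that $N(first(S))\cap U$ consists of $w(S)$ together with at most some tree-siblings of $bag(w(S))$ at index $index(w(S))$, all lying inside $S$; the disjoint-subtree observation then gives condition~6 and one half of condition~5, and within a single $N(first(S))\cap U$ all vertices share an index and a common tree-parent, so they are pairwise non-adjacent, completing condition~5. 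I expect the main obstacle to be exactly the case analysis inside $P^*(S)$: it is there that the clean ``everything lies in $V_{>h_0}$'' argument breaks, and the height-maximality of $w(S)$ is the essential combinatorial lever that rescues non-adjacency.
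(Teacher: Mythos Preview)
Your proposal is correct and follows essentially the same approach as the paper: connectivity and size of $U_1(U)$ via $Free(U)$, non-adjacency of $U_1(U)$ with $U$ via the height-maximality of $w(S)$, and the independence conditions via the fact that the $first(S)$ (and their $U$-neighbours) sit in pairwise disjoint, non-adjacent subtrees $V(h,k,s)$. The only organisational difference is that the paper argues non-adjacency by taking $u\in N(U_1(U))\cap U$, locating its $U_1$-neighbour $v$ inside the same $S$, and deriving the height contradiction $height(v)\geq height(w(S))+2\geq height(u)+2$, whereas you run a height case-split on the neighbours of $U_1(U)$ directly; both hinge on the same maximality lever.
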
 

\begin{proof}
First of all, we observe that $U_1(U)$ is connected.
Indeed, the vertices of paths $P^*(S)$ for $S \in {\bf Loaded}(U)$
all intersect a connected set $root(U) \times [k]$ which, 
in turn, intersects the connected set $V(P^*(U))$ which, again in turn,
intersects a connected set $Free(U)$.
Also $|U_1(U)| \geq |Free(U)|=m(h_0+1) \cdot k>n^{\alpha}$
by selection of $h_0$ and the same  argument as in the first paragraph
of the proof of Lemma \ref{lem:tripletopdown}. 

We then observe that $(U_0(U) \cup U_1(U)) \cap U=\emptyset$. 
Indeed, assume the opposite and let $u \in (U_0(U) \cup U_1(U)) \cap U$. 
We note that $u \subseteq V(h,k,root(U))$ and hence to belong to $U$,
$u \in S$ for some $S \in {\bf Loaded}(U)$. 
Then $u \in V(P(S))$. But then, by construction $height(u)>height(w(S))$
and the latter is the greatest height of an element of $U \cap S$.
This contradiction establishes that indeed $(U_0(U) \cup U_1(U)) \cap U=\emptyset$.

The most interesting part of the proof is establishing 
the non-adjacency of $U_1(U)$ and $U$ as this is the part of the target triple
definition that led us to introduce the notion of $root(U)$ and to use 
$V(h,k,root(U))$ instead of $V(h,k,t)$. 
First of all, we observe that $N(U_1(U)) \cap U \subseteq V(h,k,root(U))$. 
Indeed, $N(U_1(U)) \setminus V(h,k,root(U)) \subseteq t \times [k]$
and $t \times [k]$ is disjoint with $U$.  
Intuitively speaking, by using $V(h,k,root(U))$ instead of $V(h,k,t)$,
we achieve non-adjacency with $U \setminus V(h,k,t)$. 

Let $u \in N(U_1(U)) \cap U$. 
In light of the previous paragraph, there is $S \in {\bf Loaded}(U)$
such that $u \in S$. Let $v$ be a vertex of $U_1(U)$ adjacent to $u$. 
We note that, by definition of $U$, $height(u) \leq h_0$
whereas $height(root(S))=h_0+1$. We conclude that $v \in S$. 
It only possible if $v \in V(P^*(S))$. 
By construction, $height(v) \geq height(w(S))+2 \geq height(u)+2$,
the last inequality follows from the maximality of $w(S)$. 
This is a contradiction since the heghts of $u$ and $v$ can differ by at most $1$.

As each vertex of $U_0(U)$ is adjacent to $U$ simply by construction,
we also observe that $U_0(U)$ and $U_1(U)$ are disjoint. 
We also note that each first vertex $u \in U_0(U)$
is adjacent to $U_1(U)$. Indeed, by construction, $u=first(P(S))$
for some $S \in {\bf Loaded}(U)$. 
But then $u$ is adjacent to its successor in $P$ which is part of $U_1(U)$
again by construction. 

Let $u_1,u_2$ be two distinct vertices $U_0(U)$
By definition, there are two distinct $S_1,S_2 \in {\bf Loaded}(U)$
such that $u_1 \in S_1$ and $u_2 \in S_2$. 
As $S_1$ and $S_2$ are two distinct elements of ${\bf S}_{h_0+1}(h,k,root(U))$,
they are not adjacent. It follows that $u_1$ and $u_2$ are not adjacent either
and hence $U_0(U)$ is an independent set. 

Let $w_1$ and $w_2$ be two distinct elements of 
$N(U_0(U)) \cap U$. As we observed before, there are $S_1,S_2 \in {\bf Loaded}(U)$ 
such that $w_1 \in S_1$ and $w_2 \in S_2$. If $S_1 \neq S_2$ then $w_1$
and $w_2$ are not adjacent due to the same argument as in the previous paragraph.
Otherwise, they are both adjacent to $first(S_1)$ and hence non-adjacent between 
themselves since $T[h,k]$ is triangle-free. 

It remains to observe that each $w \in N(U_0(U)) \cap U$ belongs to
some $S \in {\bf Loaded}(U)$ (as has been observed above). 
So, it can only be adjacent to $first(S)$.
\end{proof}

\begin{proof}
{\bf of Lemma \ref{lem:largeindep1}.}
First let us observe that
\begin{equation} \label{eq:mh1}
3^h < m(h) < 3^{h+1}
\end{equation}

Next we observe that
\begin{equation}\label{eq:mh2}
m(\lceil \alpha \cdot h\rceil+1)>m(h)^{\alpha}
\end{equation}

Indeed, by \eqref{eq:mh1}
$m(\lceil \alpha \cdot h \rceil+1)> 3^{\lceil \alpha \cdot h \rceil+1}> 3^{\alpha \cdot h+\alpha}=
(3^{h+1})^{\alpha}>m(h)^{\alpha}$. 

Set $h_0=\lceil \alpha \cdot h \rceil+2$. 
Then it is immediate from \eqref{eq:mh2} that
\begin{equation} \label{eq:mh3}
m(h_0-1)>m(h)^{\alpha}
\end{equation}

The next step is setting $h_1$.
Let us set $h_1=h_0+2 \lceil \log_3 k \rceil + \lceil \log  h \rceil $. 
We need $h$ to be large enough so that $h-h_1=\Omega((1-\alpha) \cdot h)$. 
With our settings, $h-h_1=h-\lceil \alpha h \rceil-
   (\lceil log h \rceil  + 2\lceil log_3 k \rceil+ 2)$
So, we can set $h$ to be large enough so that
$\lceil \log h \rceil +2\lceil \log_3 k \rceil+2 \leq (1-\alpha)/2 \cdot h$ or, to put it differently, 
$h \geq 2(\lceil \log h \rceil + 2\lceil \log_3 k \rceil +2)/(1-\alpha)$.

Now, we assume first that $\pi$ is $(h_0,h_1)$-\textsc{td}. 
According to Theorem \ref{th:maintopdowncomb},
there is a prefix $\pi_0$ such that
there is an independent ${\bf P} \subseteq {\bf P}(\pi_0,h_0)$
of size $1/260 \cdot (h-h_1) \cdot \sqrt{k}$. 
With our setting of $h_1$, we lower bound the size of ${\bf P}$ as 
$\Omega((1-\alpha) \cdot h \cdot \sqrt{k})$. 
It follows from Lemma \ref{lem:tripletopdown} that
$(\pi_0,U_0({\bf P}),U_1({\bf P}))$ as in the statement fo the lemma
is a target triple. Note that, by constructionm the rank of the triple
is $|{\bf P}|$. Hence the present lemma holds in the considered case, 

It remains to assume that $\pi$ is not $(h_0,h_1)$-\textsc{td}.  
By Lemma \ref{lem:winwin}, $\pi$ is $(h_0,h_2)$-\textsc{bu} where
$h_2 \geq h_0+ \lceil \log_3 k \rceil+\lceil \log h \rceil$. 
This means that there is $t \in V_{h_2}(h)$ and a prefix $\pi_0(t)$ of 
$\pi$ such that for each $S \in \mathcal{S}_{h_0}(h,k,t)$, $\pi_0(t) \cap S \neq \emptyset$
whereas $\pi_0(t) \cap V_{>h_0}(h,k,t)=\emptyset$. 
Clearly, the first property implies that $\pi_0(t)$ has a non-empty intersection with 
each $S \in {\bf S}_{h_0+1}(h,k,t)$. 
Let $\pi'_0$ be the shortest prefix of $\pi_0(t)$ subjecto the last property
and let $\pi_0$ be $\pi'_0$ without the last element. 
We conclude that $\pi_0$ is easy on $t$. 

We next apply Lemma \ref{lem:triplebottomup} 
to conclude existence of a target triple $(\pi_0,U_0(\pi_0),U_1(\pi_0))$. 
The rank of the triple is $|{\bf S}_{h_0+1}(h,k,root(\pi_0))|-1$.
We observe that $height(root(\pi_0))=h_2-1$. 
Hence $|{\bf S}_{h_0+1}(h,k,root(\pi_0))|-1=3^{h_2-1-h_0-1}-1=
3^{\lceil \log h \rceil+\lceil \log_3 k \rceil-2}-1=\Omega(k \cdot h)$
way above the lower bound required by the lemma. 
\end{proof}


\section{Conclusion} \label{sec:conclusion}
In this paper, we proved an \textsc{xp} lower bound for the size
of $\wedge_{d,\alpha}$-\textsc{fbbd}. Let us discuss why
the result does not generalize to \textsc{dnnf}s with imbalanced 
gates. To do this, let us recall the terminology in
the Preliminaries section. In particular, let $B$ 
be a $\wedge_{d,\alpha}$-\textsc{fbdd} representing $\varphi(T[h,k])$. 
Let $P$ be a target path of $B$ let ${\bf a}={\bf a}(P)$ and
let $u=u(P)$. Let ${\bf b}$ be an assignment over $Var(B_u)$
such that ${\bf a} \cup {\bf b}$ can be extended to a satisfying assignment of $f(B)$. 
A cornerstone property of $\wedge_d$-\textsc{fbdd} is that ${\bf b} \in \mathcal{S}(B_u)$. 
In particular, this property enables the use of Lemma \ref{lem:nobreak1} for connected subsets
of $Var(B_u)$ that are not fixed by ${\bf a}$ thus forbidding a decomposable conjunction gate to separate 
such a subset. 

The above property is lost for \textsc{dnnf}s where the underlying branching program is $1$-\textsc{nbp}
rather than \textsc{fbdd}. This is because the non-determinism allows the same assignment to be
carried along many different paths of the branching program and hence, if $B$ is a \textsc{dnnf}, $B_u$ does not need 
to have all ${\bf b}$ as above as satisfying assignments. As a result Lemma \ref{lem:nobreak1} does not
apply, unfixed connected sets may split and hence the whole machinery is not working. 

Yet, we conjecture that the non-determinism does not help to avoid 
the \textsc{xp} lower bound and a corresponding upgrade to the approach 
of this paper is possible. The upgrade may use ideas from \cite{AmarilliTCS,Bova14,Korhonen21}. 
The key intuition is similar to Lemma \ref{lem:elpos}. 
In particular, if a decomposable conjunction gate $u$ splits an unfixed
connected set then there is a set cover $C$ of the split edges such that every assignment carried 
through $u$ assigns $1$ to each vertex of $C$. This significantly reduces the probability 
that an assignment is carried through $u$. We thus believe tha  this line of reasoning will
lead to an upgrade of the bottleneck plus the union bound argument used in this paper. 


The main combinatorial statement Lemma \ref{lem:largeindep1},
in fact proves that, for sufficiently large $h$ and $k$,
the graph $T[h,k]$ has a large pathwidth with a `big' 
bag at a `bad' location. Indeed, if we consider a tree
decomposition of a graph $G$ in the form of an elimination forest
then, for a path decomposition, the elimination forest is
nothing else than a permutation $\pi$ of vertices of $G$. 
This permutation serves as the underlying path for the considered
decomposition. For each $u \in V(G)$, the bag of $u$, denoted by $bag(u)$,
is a set consisting of $u$ and all the proper predecessors of $u$ that
are adjacent to a successor of $u$. The component of $u$, denoted by
$cmp(u)$, is the suffix of $\pi$ starting at $u$. 
Lemma \ref{lem:largeindep1} effectively says that for any path decomposition
of $T[h,k]$, there is a vertex $u$ with $|bag(u)| \geq \Omega(\log n \sqrt{k})$
and such that the subgraph of $T[h,k]$ induced by $cmp(u)$ has each
connected component of size greater than $n^{\alpha}$. 
Thus Lemma \ref{lem:largeindep1} establishes an $\Omega(\log n \sqrt{k})$ lower
bound of a \emph{non-local} width parameter of $T[h,k]$. 
The non-locality means that a large value of the parameter requires not just presence 
of a large bag but also a `bad' location of such a bag in each path decomposition. 
Let us turn the idea around and consider what a small value of such a parameter
might mean for a tree or a path decomposition of a graph. A natural answer is that
existence of a decomposition where big bags may be present but only at `good' 
locations. 

We believe the above idea is worth exploration from the algorithmic perspective. 
In particular, such parameters may lead to parameterized algorithms 
with runtime $f(k) \cdot 2^{n^{\alpha}}$ where $0<\alpha<1$. 
This approach may be seen as a new concept of bidimensionality \cite{bidimensionality}
with $k$ and $\alpha$ being the input dimensions. 


\appendix
\section{Proof of Theorem \ref{th:decomp1}.}
Let $u$ be the source of $B$. Let $u_0$ and $u_1$
be the children of $u$. We assume w.l.o.g. that $(u,u_0)$
is an  edge of $P$. 
In case $u$ is labelled by a variable $x$, we assume 
w.l.o.g that $(u,u_0)$ is labelled with $0$.

We proceed by induction on the length of $P$.
Assume first that the length is $1$. 
Further on, assume that $u$ is labelled with a variable $x$.

Let ${\bf a} \in \mathcal{S}(B)|_{\{(x,0)\}}$. 
By definition of $\mathcal{S}(B)$, 
${\bf a} \in \mathcal{S}(B_{u_0}) \times \{0,1\}^{V_0}$
where $V_0=Var(B_{u_1}) \setminus Var(B_{u_0})=(Var(B) \setminus \{x\}) \setminus Var(B_{u_0})=V_0(P)$,
the last equality holds since $Alt(P)=\emptyset$. 
The converse also hods by definition of $\wedge_d$-\textsc{fbdd}. 
Hence, the theorem holds in the considered case. 

If $u$ is a $\wedge_d$-node then ${\bf a}(P)=\emptyset$. 
Hence $\mathcal{S}(B)_{{\bf a}(P)}=\mathcal{S}(B)$ which by definition is
$\mathcal{S}(B_{u_0}) \times \mathcal{S}(B_{u_1})$. 
As $Alt(P)=\{u_1\}$, the statement of the theorem follows for 
the considered case. 

Now, we consider the situation where $|P|>1$.
We let $P^-$ be the suffix of $P$ starting from $u_0$. 
Observe that $P^-$ is a target path of $B_{u_0}$ which we denote by $B^-$ for the sake of brevity. 
We let $V_0(P^-)$ and $Alt(P^-)$ be computed with respect to $B^-$
and we note that $u(P^-)=u(P)$. For the sake of simplicity, we consider only the case $Alt(P^-) \neq \emptyset$, 
the other case is similar. 
Then by the induction assumption

\begin{equation} \label{eq:decomp1}
\mathcal{S}(B^-)|_{{\bf a}(P^-)}=\mathcal{S}(B_{u(P)}) \times \prod_{v \in Alt(P^-)} \mathcal{S}(B_v) \times \{0,1\}^{V_0(P^-))}
\end{equation}

We also note that if $\mathcal{S}, \mathcal{S'}, \mathcal{S''}$ are sets
of assignments such that $\mathcal{S}=\mathcal{S'} \times \mathcal{S''}$,
${\bf g}$ is an assignment with $Var({\bf g}) \subseteq Var({\bf S'})$ then

\begin{equation} \label{eq:decomp2}
\mathcal{S}|_{\bf g}=\mathcal{S'}|_{\bf g} \times \mathcal{S''}
\end{equation}

Assume that $u$ is labelled by variable $x$. 
For the sake of brevity, we denote $\{(x,0)\}$ by ${\bf a}_0$
and recall from the induction base that
$\mathcal{S}|_{{\bf a}_0}=\mathcal{S}(B^-) \times \{0,1\}^{V_0}$.
Therefore
$\mathcal{S}(B)|_{{\bf a}(P)}=(\mathcal{S}|_{{\bf a}_0})|_{{\bf a}(P^-)}=
\mathcal{S}(B^-)|_{{\bf a}(P^-)} \times \{0,1\}^{V_0}$, the last equality 
follows from \eqref{eq:decomp2}. 
The theorem now follows by substitution of the right-had part of \eqref{eq:decomp1} 
instead of $\mathcal{S}(B^-)|_{{\bf a}(P^-)}$ and taking into account that
$Alt(P)=Alt(P^-)$ and $V_0(P)=V_0(P^-) \cup V_0$.

It remains to assume that $u$ is a $\wedge_d$-node. 
We note that ${\bf a}(P^-)={\bf a}(P)$. 
Therefore, by definition of $\wedge_d$-\textsc{fbdd} and 
\eqref{eq:decomp2}, 

\begin{equation} \label{eq:decomp3}
\mathcal{S}|_{{\bf a}(P)}=\mathcal{S}(B^-)_{{\bf a}(P)} \times \mathcal{S}(B_{u_1})
\end{equation}

We now substitute the right-hand side of \eqref{eq:decomp1} instead of 
$\mathcal{S}(B^-)_{{\bf a}(P)}$, take into account that, in the considered case, $V_0(P)=V_0(P^-)$,
and $Alt(P)=Alt(P^-) \cup \{u_1\}$, and observe that we have just obtained the statement of the 
theorem.



\end{document}